\theoremstyle{plain}
\newtheorem{theorem}{Theorem}[section]
\newtheorem{lemma}[theorem]{Lemma}
\newtheorem{corollary}[theorem]{Corollary}
\newtheorem{proposition}[theorem]{Proposition}
\theoremstyle{remark}
\newtheorem{remark}[theorem]{Remark}
\newtheorem{example}[theorem]{Example}
\theoremstyle{definition}
\newtheorem{assumption}{Assumption}
\title{Quantum Algorithms for Stochastic Differential Equations: \\A Schr\"odingerisation Approach}
\author{\bf\normalsize{ Shi Jin$^{1,2,3,4}$, Nana Liu$^{1,2,3,4,5}$, Wei Wei$^{1}$ \footnote{\texttt{weiw\_sjtu@sjtu.edu.cn}}}\\
\footnotesize{$^1$\textit{Institute of Natural Sciences, Shanghai Jiao Tong University, Shanghai 200240, China}} \\
\footnotesize{$^2$\textit{School of Mathematical Sciences, Shanghai Jiao Tong University, Shanghai 200240, China}} \\ 
\footnotesize{$^3$\textit{Ministry of Education Key Laboratory in Scientific and Engineering Computing,}} \\
\footnotesize{\textit{Shanghai Jiao Tong University, Shanghai 200240, China}} \\
\footnotesize{$^4$\textit{Shanghai Artificial Intelligence Laboratory, Shanghai, China}} \\
\footnotesize{$^5$\textit{University of Michigan-Shanghai Jiao Tong University Joint Institute, Shanghai 200240, China.}}}
\date{}
\begin{document}

\maketitle
\vspace{-0.3in}

\begin{abstract}
    Quantum computers are known for their potential to achieve up-to-exponential speedup compared to classical computers for certain problems. 
    To exploit the advantages of quantum computers, we propose quantum algorithms for linear stochastic differential equations, utilizing the Schr\"odingerisation method for the corresponding approximate equation by treating the noise term as a (discrete-in-time) forcing term.
    Our algorithms are applicable to stochastic differential equations with both Gaussian noise and $\alpha$-stable L\'evy noise. The gate complexity of our algorithms exhibits an $\mathcal{O}(d\log(Nd))$ dependence on the dimensions $d$ and sample sizes $N$, where its corresponding classical counterpart requires nearly exponentially larger complexity in scenarios involving large sample sizes. 
    In the Gaussian noise case, we show the strong convergence of first order in the mean square norm for the approximate equations.
    The algorithms are numerically verified for the Ornstein–Uhlenbeck processes, geometric Brownian motions, and one-dimensional L\'evy flights.
    \bigskip\\
    \textbf{Keywords:} Quantum computing $\cdot$ Quantum algorithms $\cdot$ Stochastic differential equations $\cdot$ \\ 
    Schr\"odingerisation $\cdot$ Strong convergence\\
    \textbf{AMS 2010 Mathematics Subject Classification:} 81P68 $\cdot$ 68Q12 $\cdot$ 65C30 $\cdot$ 60H35.
\end{abstract}

\section{Introduction}

Stochastic differential equations (SDEs) finds applications in various fields including physics, climate modeling, finance and machine learning \cite{SDENet,song2021scorebased,NeurIPS2020SGD} . Their broad applicability stems from their ability to encapsulate the randomness in nature, providing insights into the dynamics of complex systems with uncertainty. For example, the stochastic Langevin equation and its variants are widely applied in physics and machine learning, including generative AI \cite{2021NeurIPSSchrodingerBridge,2022ICLRLangevin,2021NeurIPSvariational,song2021scorebased}. Another important application of SDEs is in Monte Carlo simulations, where SDEs are used to model and simulate random processes over time. 

For many applications, it becomes interesting to develop quantum algorithms to leverage the potential advantage offered by quantum computers. Quantum computers, which are different from their classical counterparts, use qubits instead of bits. 
Some of the quantum algorithms have already proved to have quantum advantages over their classical counterparts in, for example, searching \cite{Grover}, cryptography \cite{Shor}, sampling \cite{BosonSampling}, solving linear algebraic equations \cite{ChildsQASL2017,HHL} and etc. Quantum algorithms also have applications in computational finance, which is also closely related to SDEs \cite{QCforfinancePRR,VariationalQCforSDE,QCforfinanceReview}.

There are quantum algorithms for stochastic processes that have  demonstrated quantum advantage. For example,
as an analogue of classical random walk, the quantum walk achieves up-to exponential speedup \cite{ExponentialQW} and can serve as a universal computational primitive \cite{UniversalQW} for simulating arbitrary quantum circuits. It finds wide applications in Hamiltonian simulation, quantum information processing, and search algorithms \cite{QWreview}.  
Quantum algorithms have also been proposed for Markov Chain Monte Carlo simulation,  commonly used to sample from the Boltzmann distribution \cite{QuanEnhanceMCMC}. They potentially ease the computational bottlenecks posed by this sampling problem and achieve quantum speedup in optimization \cite{QAMCMCopt2025}. Quantum algorithms can also provide nearly quadratic speedup to Monte Carlo methods for computing expected values of random variables \cite{QAMLMC,QAMC,QAMCinfinitevariance}. 
The Lindbladian equation, which describes an open quantum system, can be interpreted as the master equation for the reduced density matrix of a quantum stochastic differential equation. Classical numerical schemes for SDEs also enable quantum computing methods for the Lindbladian equation to achieve arbitrarily high-order accuracy \cite{Xiantao2024OpenQS}.

Although classical schemes for SDEs, such as the Euler-Maruyama scheme, have been proposed for decades and are widely used in various areas, the quantum algorithms for SDEs are underdeveloped.
The motivation for designing quantum algorithms for SDEs arises from the fact that, while simulating a single sample path incurs only a manageable computational cost, achieving a target precision $\epsilon$ with the Monte Carlo method requires at least $\mathcal{O}(1/\epsilon^2)$ simulations, leading to substantial computational expense \cite{QAMLMC}.
Several quantum algorithms have been proposed for SDEs and Monte Carlo simulations. Reference \cite{VariationalQCforSDE} develops a quantum-classical hybrid algorithm based on variational quantum simulation. It discretizes the SDE into a trinomial tree model and computes the nodes of the tree by time-evolution of a quantum state. The nodes of the tree of each level represent the distribution of the SDE at each time step. However, the tree model itself suffers from insufficient accuracy. Reference \cite{QAMC} guarantees the near-quadratic speedup of quantum computers in the sampling process of Monte Carlo methods. Based on this, reference \cite{QAMLMC} develops quantum algorithms for multi-level Monte Carlo simulation with simulation of SDEs. However, it is developed based on oracles for the expectation of SDEs, and does not offer a concrete quantum algorithm. 
Reference \cite{Xiantao2024OpenQS} uses the relationship between open quantum systems, SDEs, and Hamiltonian simulations to develop high order quantum algorithms for the open quantum system. 
Thus, quantum algorithms for SDEs that both match the accuracy of classical schemes and offers greater efficiency, are of significant interest.
 
Recently, a Schr\"odingerisation approach \cite{Schrodingerisation1,jin2023quantum} has been proposed to solve general linear ordinary and partial differential equations via Hamiltonian simulation, which can also be implemented  in quantum circuits \cite{hu2024qc,jin2024heatqc} and applicable to a wide variety of differential equations \cite{Hu2024Multiscale,Ma2023Maxwell,Ma2024inhomov2,Ma2024illpoesd,jin2024qsFKP}. The method of Schr\"odingerisation involves the incorporation of an auxiliary variable to transform a linear differential equation into an equation analogous to the Schr\"odinger equation, with unitary evolution process, in one higher dimension. Subsequently, the transformed equation lends itself to implementation on quantum computers through Hamiltonian simulation. The desired result is recovered from the Schr\"odingerised equation on a properly chosen interval of the auxiliary variable. This method maintains the major structure of the original classical differential equations,  thus, consequently, one can develop quantum algorithms that retain the convergence rate of classical schemes while leveraging the benefits of quantum computing.

In this paper, we propose a general quantum algorithm to simulate the solutions of linear SDEs, including SDEs with additive noise and multiplicative noise. The main idea is to treat the noise term as a discrete-in-time forcing term and then uses the  Schr\"odingerisation method for inhomogeneous differential equations \cite{Ma2024inhomov2}.  While most quantum algorithms focus on SDEs with Gaussian noise, our approach is applicable to SDEs driven by stable L\'evy processes, which is a more general type of noise. We design the quantum algorithms for the approximate equation through the Schr\"odingerisation method, which approximates the corresponding SDEs under mean square norm in the Gaussian noise case. We also discuss the important issue of the choice of the recovery interval and its influence on the precision of the final result. Theoretically, for both the additive and multiplicative Gaussian noise cases, we prove the strong convergence of first order for the approximate solution.
The quantum advantage of our algorithm is exhibited by its gate complexity of  $\mathcal{O}(d\log(Nd))$ dependence on the dimensions $d$ and the sample size $N$, which is a  nearly exponential speedup compared to its classical counterparts.
We illustrate the validation of the algorithms to the Ornstein–Uhlenbeck processes, geometric Brownian motions and one-dimensional L\'evy flights by numerical experiments.

This paper is organized as follows: Section \ref{Schro} introduces the approximate equations for both SDEs with additive noise and multiplicative noise. By the Sch\"odingerisation method, we provide the technical details of the quantum algorithms for SDEs. We give the gate complexity of our quantum algorithm in subsection \ref{complexity} and provide a complexity analysis which demonstrates  the quantum advantage for dimensions and  sample sizes. In Section \ref{examp}, we show the correctness of our method by numerical experiments for a one dimensional Ornstein–Uhlenbeck process, a one dimensional geometric Brownian motion and a one dimensional L\'evy flight. In Section \ref{convrate}, we show the convergence rate of the approximate equations in the mean square sense. Together with the error of the recovery part, we give the overall error of our algorithms. We summarize our method and discuss further applications in Section \ref{summary}.

\section{Quantum algorithms for SDEs}\label{Schro}

In this section, we will introduce the quantum algorithms for linear SDEs with Gaussian noise. The algorithms are based on the combination of the Euler-Maruyama scheme and the Schr\"odingerisation technique \cite{Schrodingerisation1,jin2023quantum}.

\subsection{ Linear SDEs with additive Gaussian noise}\label{SDEadditive}

Given a probability space $(\Omega, \mathcal{F},\mathbb{P})$, we consider the following $d$-dimensional SDE:
\begin{equation}\label{AddSDE}
    d X(t) =A X(t) \,dt+B\, dW_t, \quad X(0)=x_0, \quad 0\leq t \leq T.
\end{equation}
Here, $W_t$ is the $d$-dimensional standard Brownian motion. $A$ and $B$ are $d \times d$ matrices.

The Euler-Maruyama scheme for equation \eqref{AddSDE} is described as follows.

Let $t_k=kT/N_T, \Delta t=T/N_T$, and $\{\widehat{X}(t_k)\}_{0\leq k \leq N}$ be the numerical solution to \eqref{AddSDE}:
\begin{equation}\label{EMscheme}
    \widehat{X}(t_{k+1})=\widehat{X}(t_k)+A \widehat{X}(t_k) \Delta t+B \Delta W_k, \quad \widehat{X}(t_0)=x_0 \textrm{ and } k=0,1,\dots, N_T-1
\end{equation}
where $\Delta W_k=W_{t_{k+1}}-W_{t_k}$ is a Gaussian random variable with mean $0$ and covariance matrix $\Delta t I$.

We first replace the discrete time in the Euler-Maruyama method with a continuous time variable, so it becomes an ordinary differential equation for each time interval. More precisely, for the temporal mesh $\{t_k=k \Delta t: \Delta t =T/N_T,k=0,1,\cdots, N_T\}$, we aim to calculate the solution of the following ordinary differential equation in the $k$-th iteration:
\begin{align}\label{NumForm}
    \left\{
        \begin{aligned}
            & \frac{d }{dt} \widetilde{X}(t)= A \widetilde{X}(t)+\frac{B \Delta W_k}{\Delta t}, \quad t_k < t \leq t_{k+1}, \quad k \geq 1,\\
            & \textrm{with }\widetilde{X}(t_k) \textrm{ calculated in the } (k-1) \textrm{-th iteration.}
        \end{aligned}
    \right.
\end{align}
Here $\widetilde{X}(t_0)=x_0$ and $\Delta W_k=W_{t_{k+1}}-W_{t_k}$ which can be simulated by a group of independent $d$-dimensional standard Gaussian random variables $\{\boldsymbol{\xi}_k: \boldsymbol{\xi}_k\sim N(0,\boldsymbol{I})\}$ :
\begin{equation*}
    \Delta W_k \overset{d}{=} \sqrt{\Delta t}\boldsymbol{\xi_k}.
\end{equation*}
Now we can implement the Schr\"odingerisation approach \cite{jin2023quantum} to \eqref{NumForm}. Take 
\begin{equation*}
    \widetilde{A}_k=\left(\begin{array}{cc}
    A &B \Delta W_k/\sqrt{\Delta t}\\
    \boldsymbol{0}^\top &0
    \end{array} \right), \quad \widetilde{Y}=\left( \begin{array}{c}
         \widetilde{X}(t)  \\
         1/\sqrt{\Delta t}
    \end{array}\right),
\end{equation*}
then equation \eqref{NumForm} becomes,
\begin{align}\label{TD}
    \left\{
        \begin{aligned}
            \frac{d }{dt} \widetilde{Y}(t) & = \widetilde{A}_k \widetilde{Y}(t), \quad t_k < t \leq t_{k+1}, \\
            \widetilde{Y}(0) & =(x_0,1/\sqrt{\Delta t})^\top . 
        \end{aligned}
    \right.
\end{align}
For every $\widetilde{A}_k$, we decompose it into a Hermitian matrix $H_{1,k}$ and an anti-Hermite matrix $iH_{2,k}$.
\begin{equation*}
    \widetilde{A}_k=H_{1,k}+iH_{2,k},
\end{equation*}
where 
\begin{equation*}
    H_{1,k}=\frac{\widetilde{A}_k+\widetilde{A}_k^\dag}{2}=H_{1,k}^{\dag}, \quad H_{2,k}=\frac{\widetilde{A}_k-\widetilde{A}_k^\dag}{2i}=H_{2,k}^{\dag}.
\end{equation*}
Apply the warped phase transformation to $\widetilde{Y}$, that is taking $\boldsymbol{v}(t,p)=e^{-p}\widetilde{Y}(t)$ for $p>0$. Then, $\boldsymbol{v}(t,p)$ satisfies the $(d+1)$-dimensional equation
\begin{align}\label{Addwptrans}
    \left\{
        \begin{aligned}
            \frac{d }{dt} \boldsymbol{v}(t,p) & = -H_{1,k} \partial_p \boldsymbol{v}(t,p) +iH_{2,k}\boldsymbol{v}(t,p), \quad t_k < t\leq t_{k+1},\\
            \boldsymbol{v}(0,p) & =e^{-|p|}\widetilde{Y}(0), \quad p\in(-\infty,\infty). 
        \end{aligned}
    \right.
\end{align}
By taking the Fourier transform of $\boldsymbol{v}(t,p)$ with respect to $p$ and denoting it as $w(t,\eta)$, we obtain 
\begin{align}\label{TH}
    \left\{
        \begin{aligned}
            \frac{d }{dt} w(t,\eta) & = -i \eta H_{1,k} w(t,\eta) + i H_{2,k} w(t,\eta), \quad t_k < t\leq t_{k+1},\\
            w(0,\eta) & =\frac{1}{\pi(1+\eta^2)} \widetilde{Y}(0).
        \end{aligned}
    \right.
\end{align}

Here we have derived an equation \eqref{TH} analogous to the Schr\"odinger equation. 

Let $H_1(t)=\sum_{k=0}^{N_T-1}H_{1,k}\mathbbm{1}_{[t_k,t_{k+1})}(t)$. 
If all eigenvalues of $H_1(t)$ are negative, the solution $\widetilde{Y}(t)$ can be restored by the integration:
\begin{equation*}
    \widetilde{Y}(t)= \int_0^\infty \boldsymbol{v}(t,p)dp.
\end{equation*}
 If $H_{1}(t)$ contains some positive eigenvalues, which will be the case with the noise term, then $\widetilde{Y}(t)$ can still be restored through the normalized integration formula:
\begin{equation}\label{NormIntExpli}
    \widetilde{Y}(t)=\frac{\int_{p^*}^{p^{*}+R}\boldsymbol{v}(t,p) dp}{\int_{p^*}^{p^{*}+R} e^{-p} dp},
\end{equation}
where $p^{*}+R>p^* \geq\lambda^+_{\text{max}}(H_{1})T$, with $\lambda_{\max}^{+}(H_1) =\max \{ \sup_{0<t<T} \{|\lambda|: \lambda \in \sigma(H_1(t)), \lambda >0 \}, 0\}$\cite[Theorem 3.1]{Ma2024inhomov2}.
\medskip

In Section \ref{convrate}, we will show in Proposition \ref{Addconvrate} that the mean square norm between $\widetilde{X}$ and $X$ satisfies
\begin{equation*}
    \|\widetilde{X}(T)-X(T)\|_{L^2} \lesssim \Delta t.
\end{equation*}

\begin{remark}
    Equation \eqref{TH} can be seen as a piecewise constant time-dependent Schr\"odinger equation with Hamiltonian $\widetilde{H}(t)$ defined by
    \begin{equation*}
        \widetilde{H}(t)=\sum_{k=0}^{N_T-1}\big (\eta H_{1,k}- H_{2,k} \big )\mathbbm{1}_{(t_k,t_{k+1}]}(t),
    \end{equation*}
where $H_{1,k}$ and $H_{2,k}$ are linear Hermitian random matrix. The random matrix $H_{1,k}$ and $H_{2,k}$ have the following explicit expression:
    \begin{equation*}
        H_{1,k}=\left(\begin{array}{cc}
            \frac{1}{2}(A+A^\dag) &\frac{1}{2}B \Delta W_k/\sqrt{\Delta t}  \\
            \frac{1}{2}(\Delta W_k^\top/\sqrt{\Delta t}) B^\dag & 0
        \end{array} \right), \quad 
        H_{2,k}=\left(\begin{array}{cc}
            \frac{1}{2i}(A-A^\dag) & -\frac{i}{2} B \Delta W_k/\sqrt{\Delta t} \\
            \frac{i}{2}  (\Delta W_k^\top/\sqrt{\Delta t}) B^\dag & 0
        \end{array} \right).
    \end{equation*}
\end{remark}

\begin{remark}
    The Schr\"odingerisation approach can also be applied to linear SDEs driven by $\alpha$-stable L\'evy processes with $1<\alpha<2$. Specifically, consider a $d$-dimensional linear SDE driven by isotropic $\alpha$-stable L\'evy processes:
    \begin{equation}\label{SchemeLevy}
        d X(t) =A X(t) \,dt+B dL^\alpha_t, \quad X(0)=x_0, \quad 0\leq t \leq T,
    \end{equation}
    where $L_t^\alpha$ is a $d$-dimensional isotropic $\alpha$-stable L\'evy process with $1<\alpha<2$ and it satisfies $L^\alpha_t \overset{d}{=}t^{\frac{1}{\alpha}}L_1$. Then we can proceed with the approximate equation
    \begin{align}\label{NumFormLevy}
        \left\{
            \begin{aligned}
                & \frac{d }{dt} \widetilde{X}(t)= A \widetilde{X}(t)+\frac{B \Delta L^\alpha_k}{\Delta t}, \quad t_k < t \leq t_{k+1}, \quad k \geq 1,\\
                & \textrm{with }\widetilde{X}(t_k) \textrm{ calculated in the } (k-1) \textrm{-th iteration.}
            \end{aligned}
        \right.
    \end{align}
    Here $X(t_0)=x_0$ and $\Delta L^\alpha_k=L^\alpha_{t_{k+1}}-L^\alpha_{t_k}\overset{d}{=}\Delta t^{\frac{1}{\alpha}}L_1$ are independent with identical stable distribution. Take random variables $\boldsymbol{\xi}_k \overset{d}{=} L_1$. We can solve \eqref{NumFormLevy} by the same procedure for SDEs with additive Gaussian noise by replacing the matrix $\widetilde{A}_k$ with
    \begin{equation*}
        \widetilde{A}_k=\left(\begin{array}{cc}
            A &B \Delta t^{\frac{1}{\alpha}}\boldsymbol{\xi}_k\\
            0 &0
            \end{array} \right), \quad \widetilde{Y}(t)=\left( \begin{array}{c}
                 \widetilde{X}(t)  \\
                  \frac{1}{ \Delta t}
            \end{array}\right).
    \end{equation*}
    Here, the one stage approximation of the  approximate equation \eqref{NumFormLevy} is the same as the Euler-Maruyama scheme for the SDE \eqref{SchemeLevy}. And the Euler-Maruyama scheme itself is proven to admit strong rate of convergence in this case, see \cite{kuhn2019strong}. For $0<\alpha<1$, the moments of the $\alpha$-stable L\'evy process do not exist, hence we do not consider this case in this paper.
    In Section \ref{examp}, we present a numerical example of a one dimensional L\'evy flight. 
\end{remark}

\subsection{Linear SDEs with multiplicative Gaussian noise}

We consider the following SDE:
\begin{equation}\label{MultiSDE}
    d X(t) = A X(t) dt +  \sum_{l=1}^m B^{(l)} X(t) d W^{(l)}_t, \quad X_0=x_0.
\end{equation}
Here, $W_t=(W_t^{(1)},\cdots,W_t^{(m)})$ is a $m$-dimensional standard Brownian motion. $\{A,B^{(1)},\cdots,B^{(m)}\}$  are $d \times d$ matrices. 

We introduce the following approximate equation for the Schr\"odingerisation approach. For $t_k=kT/N_T$, we aim to compute the following ordinary differential equation,
\begin{align}\label{QM2}
    \left\{
        \begin{aligned}
            \frac{d \widetilde{X}(t)}{dt} & = \widetilde{A}_k \widetilde{X}(t) , \quad \widetilde{X}(0)=x_0 \textrm{ and } t_k< t \leq t_{k+1} \\
            \widetilde{A}_k & =A-\frac{1}{2}\sum_{l=1}^m (B^{(l)})^2+\sum_{l=1}^m B^{(l)} \frac{\Delta W^{(l)}_k}{\Delta t},
        \end{aligned}
    \right.
\end{align}
where $\Delta W^{(l)}_k=W^{(l)}_{t_{k+1}}-W^{(l)}_{t_k}$ is a standard scalar Gaussian random variable. The summation $-\frac{1}{2}\sum_{l=1}^m (B^{(l)})^2$ is from the correction term of the Ito's formula. 

Similarly as the additive noise case \eqref{TD}, we decompose $\widetilde{A}_k$ into a Hermitian matrix $H_{1,k}$ and an anti-Hermite matrix $iH_{2,k}$.
\begin{equation*}
    \widetilde{A}_k=H_{1,k}+iH_{2,k},
\end{equation*}
where 
\begin{equation*}
    H_{1,k}=\frac{\widetilde{A}_k+\widetilde{A}_k^\dag}{2}=H_{1,k}^{\dag}, \quad H_{2,k}=\frac{\widetilde{A}_k-\widetilde{A}_k^\dag}{2i}=H_{2,k}^{\dag}.
\end{equation*}
Through the warped phase transformation to $\widetilde{X}$ to get $\boldsymbol{v}(t,p)$ and the Fourier transformation on the auxiliary variable $p$ to derive $w(t,\eta)$, we arrive at the equation analogous to the Schr\"odinger equation
\begin{align}\label{MultiTH}
    \left\{
        \begin{aligned}
            \frac{d }{dt} w(t,\eta) & = -i \eta H_{1,k} w(t,\eta) + i H_{2,k} w(t,\eta), \quad t_k < t\leq t_{k+1},\\
            w(0,\eta) & =\frac{1}{\pi(1+\eta^2)} \widetilde{X}(0).
        \end{aligned}
    \right.
\end{align}
Then we can follow exactly the same steps as in the additive noise case to solve equation \eqref{MultiTH} by the Hamiltonian simulation and recover the solution through the normalized integration formula \eqref{NormIntExpli}.

Different from the additive noise case, equation \eqref{QM2} is already a homogeneous system so we do not need to expand $\widetilde{X}$ into a $(d+1)$-dimensional vector $Y$. 

In Section \ref{convrate}, we will show in Proposition \ref{Multiconvrate} the mean square norm between $\widetilde{X}$ and $X$ satisfies
\begin{equation*}
    \|\widetilde{X}(T)-X(T)\|_2 \lesssim \Delta t.
\end{equation*}

\begin{remark}
    If we assume the matrices $\{B^{(1)},\cdots,B^{(m)}\}$ commute, that is 
    \begin{equation*}
        B^{(k)}B^{(l)}=B^{(l)}B^{(k)},
    \end{equation*} 
    by the definition of the matrix exponential, we get the one stage approximation  
    \begin{align*}
        \widetilde{X}(t_{k+1})& =\widetilde{X}(t_{k})+\widetilde{A}_k \widetilde{X}(t_{k}) \Delta t+\frac{1}{2}\sum_{l=1}^m\sum_{j=1}^m B^{(l)}B^{(j)}\widetilde{X}(t_{k})\Delta W^{(l)}_k\Delta W^{(j)}_k+\mathcal{O}(\Delta t \Delta W) \\
        & = \widetilde{X}(t_{k})+A\widetilde{X}(t_{k})\Delta t+\sum_{l=1}^m B^{(l)}\widetilde{X}(t_{k}) \Delta W^{(l)}_k +\frac{1}{2}\sum_{l=1}^m (B^{(l)})^2\widetilde{X}(t_{k})((\Delta W^{(j)}_k)^2-\Delta t)\\
        & \quad  + \frac{1}{2}\sum_{\substack{l,j=1 \\l\neq j}}^m B^{(l)}B^{(j)}\widetilde{X}(t_{k})\Delta W^{(l)}_k\Delta W^{(j)}_k+\mathcal{O}(\Delta t \Delta W),
    \end{align*}
    which is the Milstein scheme(see, for example, \cite[Section 10.3]{kloeden1992}) for the SDE \eqref{MultiSDE}.

    If we further assume that matrices $\{A,B^{(1)},\cdots,B^{(m)}\}$ commute, the analytical solution to \eqref{MultiSDE} is (\cite[Section 4.8]{kloeden1992}),
    \begin{equation}\label{explisolu}
        X(t)=x_0\exp\left[(A-\frac{1}{2}\sum_{l=1}^m (B^{(l)})^2)t+\sum_{l=1}^m B^{(l)} W^{(l)}_t\right].
    \end{equation}
    In this case, it is clear that 
    \begin{equation*}
        \widetilde{X}(t_k)=x_0\exp\left[ (A-\frac{1}{2}\sum_{l=1}^m (B^{(l)})^2)t_k+ \sum_{l=1}^m B^{(l)} \sum_{j=0}^k\Delta W^{(l)}_j\right],
    \end{equation*}
    which is exactly the sample path of \eqref{explisolu}. 
\end{remark}

\subsection{Discrete numerical form}

In this section, we will numerically solve the linear differential equation arose in \eqref{TD} and \eqref{QM2}, which takes the form
\begin{equation}\label{genForm}
    \frac{d \widetilde{Z}(t)}{dt}  = \widetilde{A}_k \widetilde{Z}(t) , \quad \widetilde{Z}(0)=z_0 \textrm{ and }t_k < t \leq t_{k+1}. 
\end{equation}
Here $\widetilde{Z}=(\widetilde{X},1/\sqrt{\Delta t})$ for the additive noise case \eqref{NumForm} and $\widetilde{Z}=\widetilde{X}$ for the multiplicative noise case \eqref{QM2}. $\tilde{d}=d+1$ for the additive noise case and $\tilde{d}=d$ for the multiplicative noise case.

By the warped phase transformation, we arrive to solve the linear equation 
\begin{align}\label{Schr-1}
    \left\{
        \begin{aligned}
            \frac{d }{dt} \boldsymbol{v}(t,p) & = -H_{1,k} \partial_p \boldsymbol{v}(t,p) +iH_{2,k}\boldsymbol{v}(t,p), \quad t_k < t\leq t_{k+1},\\
            \boldsymbol{v}(0,p) & =e^{-|p|}z_0, \quad p\in(-\infty,\infty).
        \end{aligned}
    \right.
\end{align}
Next, we will numerically solve equation \eqref{Schr-1} by the discrete Fourier transform on a sufficiently large interval $[-L,L]$ so that $ e^{-L} \approx 0$. We adopt the symbols in \cite{jin2023quantum} and refer to \cite[Sec.II.A.2]{jin2023quantum} for more details.

Let $\Delta p =L/N$. For $p_j=-L+(j-1)\Delta p$, denote
\begin{equation*}
    \boldsymbol{w}(t):=[\boldsymbol{w}_1(t);\boldsymbol{w}_2(t);\cdots;\boldsymbol{w}_{2N}(t)]=\sum_{j=1}^{2N}\sum_{h=1}^{\tilde{d}}{v}_h(t,p_j)|j\rangle \otimes|h\rangle, \quad \boldsymbol{w}_j(t)=\sum_{h=1}^{\tilde{d}}{v}_h(t,p_j)|h\rangle.
\end{equation*}
Here $v_h|h\rangle$ is the $\tilde{d}$-dimensional vector with the $h$-th entry being the $h$-th entry $v_h$ of $\boldsymbol{v}$ and the other entries being zero. $|j\rangle$ is the $j$-th computational basis  corresponding to the $j$-th point in the mesh of the variable $p$. The symbol ``;" indicates the straightening of $\{\boldsymbol{w}(t,p_k)\}_{k\geq 0}$ into a column vector. 
Let $\widetilde{\boldsymbol{w}}$ be the spectral approximation of $\boldsymbol{w}$ with respect to $p$, which means 
\begin{equation*}
    \boldsymbol{w}(t)\approx \widetilde{\boldsymbol{w}}(t) = (\Phi \otimes \boldsymbol{I}) \widetilde{\boldsymbol{c}}(t).
\end{equation*}
Here, 
\begin{align*}
    & \widetilde{\boldsymbol{w}}(t):=[\widetilde{\boldsymbol{w}}_1(t);\widetilde{\boldsymbol{w}}_2(t);\cdots;\widetilde{\boldsymbol{w}}_{2N}(t)]=\sum_{j=1}^{2N}\sum_{h=1}^{\tilde{d}}\widetilde{w}_h(t,p_j)|j\rangle \otimes|h\rangle, \quad \widetilde{\boldsymbol{w}}_j(t)=\sum_{h=1}^{\tilde{d}}\widetilde{w}_h(t,p_j)|h\rangle,\\
    & \widetilde{\boldsymbol{c}}(t):=[\boldsymbol{c}_1(t);\boldsymbol{c}_2(t);\cdots;\boldsymbol{c}_{2N}(t)]=\sum_{l=1}^{2N}\sum_{h=1}^{\tilde{d}}c_l^{(h)}(t)|l\rangle \otimes|h\rangle, \quad \boldsymbol{c}_l(t)=\sum_{h=1}^{\tilde{d}}c_l^{(h)}(t)|h\rangle.
\end{align*}
and 
\begin{equation*}
    \Phi |l\rangle = \sum_{j=1}^{2N} e^{i\mu_l (p_j+L)}|j\rangle,\quad \widetilde{w}_h(t,p_j)=\sum_{l=1}^{2N} c_l^{(h)}(t)e^{i \mu_l (p_j+L)}, \quad \mu_l=\frac{\pi(l-N-1)}{L}.
\end{equation*}
It is clear that $\Phi^{-1}$ is the discrete Fourier transform on variable $p$.
Then $\widetilde{\boldsymbol{w}}$ satisfies
\begin{equation}\label{eqw}
    \partial_t \widetilde{\boldsymbol{w}}(t)=\sum_{k=0}^{N_T-1} -i(  P_{\boldsymbol{\mu}}\otimes H_{1,k} - \boldsymbol{I}\otimes H_{2,k}) \mathbbm{1}_{(t_k,t_{k+1}]}(t) \widetilde{\boldsymbol{w}}(t).
\end{equation}
Here $P_{\boldsymbol{\mu}}=-i\partial_p$ is the momentum operator and on the computational basis $\{|j\rangle\}_{1\leq j\leq 2N}$, it satisfies 
\begin{equation*}
    P_{\boldsymbol{\mu}}\Phi|l\rangle=\sum_{j=1}^{2N}\mu_l e^{i \mu_l (p_j+L)}|j\rangle.
\end{equation*}
A direct computation shows that $D_{\boldsymbol{\mu}}=\Phi^{-1} P_{\boldsymbol{\mu}} \Phi$ with $D_{\boldsymbol{\mu}}=\textrm{diag}(\mu_1,\mu_2,\cdots,\mu_{2N})$, $\mu_l=\pi(l-N-1)/L$, then we have
\begin{equation*}
    \partial_t \widetilde{\boldsymbol{c}}(t)=\sum_{k=0}^{N_T-1} -i( D_{\boldsymbol{\mu}}\otimes H_{1,k} - \boldsymbol{I} \otimes H_{2,k} ) \mathbbm{1}_{(t_k,t_{k+1}]}(t) \widetilde{\boldsymbol{c}}(t),  
\end{equation*}
which is a system of linear ordinary differential equations.
It can be rewritten as 
\begin{equation}\label{DisF}
    \partial_t \widetilde{\boldsymbol{c}}(t)=\sum_{k=0}^{N_T-1} -i (\widetilde{H}_{1,k} - \widetilde{H}_{2,k}) \mathbbm{1}_{(t_k,t_{k+1}]}(t) \widetilde{\boldsymbol{c}}(t),  
\end{equation}
with 
\begin{equation}\label{Hk}
    \widetilde{H}_{1,k}=\begin{pmatrix}
        \mu_1 H_{1,k} & & & \\
         & \mu_2 H_{1,k} & & \\
         & & \ddots & \\
         & & & \mu_{2N} H_{1,k}
    \end{pmatrix}, \quad  \widetilde{H}_{2,k}=\begin{pmatrix}
         H_{2,k} & & & \\
         &  H_{2,k} & & \\
         & & \ddots & \\
         & & &  H_{2,k}
    \end{pmatrix}.
\end{equation}
Denote $\mathcal{F}^{-1}$ as the discrete Fourier transform on variable $p$, 
\begin{equation*}
    \mathcal{F}|l\rangle=\frac{1}{\sqrt{2N}}\sum_{j=1}^{2N} e^{i\pi \frac{(l-1)(j-1)}{N}}|j\rangle, \quad \mathcal{F}^{-1}|j\rangle=\frac{1}{\sqrt{2N}}\sum_{l=1}^{2N} e^{-i\pi \frac{(l-1)(j-1)}{N}}|l\rangle.
\end{equation*}
Then $\Phi=\sqrt{2N}  S\mathcal{F}, S=\textrm{diag}([1,-1,\cdots,1,-1]_{2N \times 1})$.
Let
\begin{equation}\label{FouriPhi}
    \widetilde{\boldsymbol{c}}(0)=\frac{1}{\sqrt{2N}} ((\mathcal{F}^{-1}S) \otimes \boldsymbol{I}) \boldsymbol{w}(0).
\end{equation}
We obtain $\boldsymbol{\widetilde{w}}$ by 
\begin{equation}\label{Numw}
    \boldsymbol{\widetilde{w}}(t_k)=\sqrt{2N}  \left((S\mathcal{F})\otimes \boldsymbol{I}\right) \widetilde{\boldsymbol{c}}(t_k).
\end{equation}
Furthermore, we can also derive the discrete Fourier approximation $\boldsymbol{\widetilde{w}}^c$ of $\boldsymbol{v}$ using $\boldsymbol{\widetilde{c}}$: 
\begin{equation*}
    \boldsymbol{\widetilde{w}}^c(t_k,p)=[\widetilde{w}_1(t_k,p), \cdots, \widetilde{w}_{\tilde{d}}(t_k,p)], \quad \widetilde{w}_h(t_k,p)=\sum_{l=1}^{2N} c_l^{(h)}(t)e^{i \mu_l (p+L)}.
\end{equation*}
 The numerical approximation of $\widetilde{Z}$, denoted by $\widetilde{Z}^{\widetilde{w}}$, is recovered by the \emph{normalized integration method}
\begin{equation}\label{NormInt}
    \widetilde{Z}^{\widetilde{w}}(t_k)= \frac{ \sum_{p_j\in U_p}  \boldsymbol{\widetilde{w}}^c(t_k,p_j)\Delta p}{\sum_{p_j\in U_p}  e^{-p_j}\Delta p},
\end{equation}
where $U_p =[p^*,p^{*}+R]$. Denote the recovered solution $\widetilde{Z}^{\widetilde{w}}=(\widetilde{X}^{\widetilde{w}},\cdot)$ for the additive noise case and $\widetilde{Z}^{\widetilde{w}}=\widetilde{X}^{\widetilde{w}}$ for the multiplicative noise case. In Theorem \ref{AddThm} and Theorem \ref{MultThm}, we will show the mean square error between the solution $X$ and the recovered solution $\widetilde{X}^{\widetilde{w}}$.

\begin{remark}
    In quantum computers, we can directly obtain $\{\boldsymbol{c}(t_k)\}$ by Hamiltonian simulation. Note that equation \eqref{DisF} is the system with piecewise constant Hamiltonian and thus we can simply apply schemes for time-independent Hamiltonian simulation multiple times in this case. General time-dependent Hamiltonian simulation algorithms are also applicable here, for example, the Trotterization based methods \cite{An2021timedependent,An2022timedependent,Watkins2022Timedependent} and a newly proposed method by \cite{cao2023Timedependent} which transfers the non-autonomous system, that is the time-dependent Hamiltonian system, into an autonomous system in one-higher dimension. 

    For a verification in classical computers, we use the second order i-stable Runge-Kutta method \cite{2ndRK} to solve equation \eqref{DisF} on the temporal mesh $\{t_k=k \Delta t: \Delta t =T/N_T,k=0,1,\cdots,N_T\}$. It means 
\begin{align*}
    & \boldsymbol{\widetilde{c}}(t_{k+1}) = \boldsymbol{\widetilde{c}}(t_{k})+\Delta t K_3, \\
    & K_1 = F_k\boldsymbol{\widetilde{c}}(t_{k}), \quad K_2=F_k(\boldsymbol{\widetilde{c}}(t_{k})+\Delta t K_1/3), \quad  K_3=F_k (\boldsymbol{\widetilde{c}}(t_{k})+\Delta t K_2/2), \\
    &F_k=-i \widetilde{H}_{1,k} + i \widetilde{H}_{2,k}, \quad k=0,1,\cdots,N_T-1.
\end{align*}
 An i-stable is a scheme whose stability region contains part of the imaginary axis \cite{2ndRK}. Such a property is important since the spectral of the Schr\"odingerized differential operator are purely imaginary, thus a scheme like forward Euler method will be unstable.
\end{remark}


\subsection{Complexity}\label{complexity}

For the $d$-dimensional SDE \eqref{AddSDE}, we will need $d\times (2L)/\Delta p$ grid points for solving the approximate SDE \eqref{NumForm}. So the number of qubits $n$ that is needed for registers is
\begin{equation*}
    n\sim \textrm{log} (1/\Delta p)+\log(d).
\end{equation*}
Given a matrix or vector $H$, let $\|H\|_{\max}$ denote the largest entry of $H$ in absolute value and $s(H)$ denote the sparsity of $H$ which is the number of nonzero entries of $H$ in every column and row. Denote $\widetilde{H}_k=\widetilde{H}_{1,k}-\widetilde{H}_{2,k}$.
The following proposition is from \cite[Theorem 3]{low2017optimal} and it gives the complexity for implementing Hamiltonian simulation.
 \begin{proposition}\label{QSPcomplexity}
    A $d$-sparse Hamiltonian $\hat{H}$ on $n$ qubits with matrix elements specified to $\tau_b$ bits of precision can be simulated for time interval $t$, error $\epsilon$, and success probability at least $1-2\epsilon$ with $\mathcal{O}(td \| \hat{H}\|_{\textrm{max}}+\log(1/\epsilon)/\log\log(1/\epsilon)) $ queries and a factor $ \mathcal{O}(n+\tau_b \textrm{polylog}(\tau_b))$ additional quantum gates.
\end{proposition}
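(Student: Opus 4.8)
The plan is to realize $e^{-i\hat H t}$ through the \emph{qubitization} framework combined with \emph{quantum signal processing}, following the optimal construction of \cite{low2017optimal}. The starting point is the standard sparse-access oracle model: one assumes black-box access to an oracle $O_H$ that returns the binary, $\tau_b$-bit value of a requested matrix entry $\hat H_{jk}$, together with an oracle $O_F$ that, given a row index and a nonzero-entry count, returns the column location of that entry. First I would assemble from $O_H$ and $O_F$ a \emph{block-encoding} of $\hat H/\alpha$ with normalization $\alpha = d\,\|\hat H\|_{\max}$: the $d$-sparse structure lets one decompose $\hat H$ into $d$ one-sparse (hence easily implementable) Hermitian pieces, and the usual prepare/select pair produces a unitary $U$ on $n+\mathcal{O}(\log d)$ total qubits whose top-left block equals $\hat H/\alpha$. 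This step costs $\mathcal{O}(1)$ oracle queries and, for the index arithmetic and the controlled rotation that loads the $\tau_b$-bit entry into an amplitude, $\mathcal{O}(n+\tau_b\,\mathrm{polylog}(\tau_b))$ elementary gates.

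Next I would form the qubitized walk operator $\mathcal W$ from $U$ and a reflection about the block-encoding ancilla subspace. On each nontrivial two-dimensional invariant subspace $\mathcal W$ acts as a rotation by $\pm\arccos(\lambda_j/\alpha)$, with $\lambda_j$ ranging over $\sigma(\hat H)$; thus $\mathcal W$ is, after the change of variables $\lambda \mapsto \arccos(\lambda/\alpha)$, a square-root-like encoding of $\hat H$. Quantum signal processing then implements, within these same invariant subspaces, any bounded definite-parity polynomial of $\cos\theta$ by interleaving $\mathcal W$ with single-qubit $z$-rotations through classically precomputed phases. Choosing the polynomial to be a Chebyshev (Jacobi--Anger) truncation of $e^{-i\alpha t\cos\theta}$ of degree $D$, the truncation error is at most $\epsilon$ once $D=\mathcal{O}\!\big(\alpha t+\log(1/\epsilon)/\log\log(1/\epsilon)\big)$, using the super-exponential decay of Bessel coefficients past index $\sim e\alpha t$; this is precisely where the $\log/\log\log$ term comes from, and making that estimate rigorous — together with the parity and real/imaginary-part decomposition forced by the fact that $e^{-i\alpha t\cos\theta}$ is neither real nor of a single parity — is the main technical obstacle. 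Each of the $D$ QSP steps calls $\mathcal W$ once, hence $\mathcal{O}(1)$ oracle queries, so the total query count is $\mathcal{O}\!\big(\alpha t+\log(1/\epsilon)/\log\log(1/\epsilon)\big)=\mathcal{O}\!\big(t\,d\,\|\hat H\|_{\max}+\log(1/\epsilon)/\log\log(1/\epsilon)\big)$, while the extra gates per step are those of one block-encoding plus one processing rotation, i.e. the stated $\mathcal{O}(n+\tau_b\,\mathrm{polylog}(\tau_b))$ factor.

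Finally I would track the error budget and the success probability. With exact oracles the block-encoding is exact, so the only approximation is the polynomial truncation, controlled to $\epsilon$ in operator norm as above; postselecting on the ancilla register returning to $|0\rangle$ then succeeds with probability at least $1-2\epsilon$ (the true deficit is quadratic in the operator-norm error, but $2\epsilon$ is a safe bound), and conditioned on success the output is $\epsilon$-close to $e^{-i\hat H t}|\psi\rangle$. Combining the query count, the per-step gate overhead, and this success probability gives the proposition. The two remaining ingredients — the QSP phase-angle existence theorem and the numerical stability of computing those angles to $\mathrm{polylog}(\tau_b)$ precision — are exactly the content of \cite{low2017optimal}, and I would invoke them rather than re-derive them.
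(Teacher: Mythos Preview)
Your sketch is a faithful outline of the qubitization/QSP construction in \cite{low2017optimal}, and nothing in it is wrong. However, there is no proof in the paper to compare against: the authors do not prove this proposition at all but simply quote it verbatim as \cite[Theorem~3]{low2017optimal} and use it as a black box for the subsequent complexity counts. So your proposal is not so much an alternative route as a re-derivation of a result the paper merely imports; the ``paper's own proof'' consists of a single citation.
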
 

Hereafter, we assume the precision $\tau$ and error $\epsilon$ are fixed. The following theorem gives the gate complexity of the Schr\"odingerisation approach.
\begin{theorem}\label{gatescomplex}
    For the $d$-dimensional SDE with additive Gaussian noise \eqref{AddSDE}, the Schr\"odingerisation approach to obtain $\widetilde{X}(T)$ of equation \eqref{eqw} has gate complexity
    \begin{equation*}
        N_{\text {Gate}, \text{Schr}} = \mathcal{O}\left(\frac{Td}{\Delta p}\Bigl(\log(\frac{1}{\Delta p})+\log(d)\Bigr)\Bigl(\|A\|_{\max}+|\Delta W|_{\max}/\sqrt{\Delta t}\Bigr)\right),
    \end{equation*}
    where $|\Delta W|_{\max}:= \max_{1\leq k\leq N_T}\|\Delta W_k\|_{\max}$.
\end{theorem}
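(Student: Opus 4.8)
The plan is to count the quantum gates by combining the Hamiltonian simulation cost from Proposition \ref{QSPcomplexity} with the structure of the Schr\"odingerised system \eqref{eqw}. First I would identify the Hamiltonian to be simulated: on each time subinterval $(t_k,t_{k+1}]$ the evolution is generated by the time-independent operator $\widetilde{H}_k = \widetilde{H}_{1,k} - \widetilde{H}_{2,k}$, equivalently $P_{\boldsymbol\mu}\otimes H_{1,k} - \boldsymbol{I}\otimes H_{2,k}$ after the Fourier change of variables. Since the full evolution over $[0,T]$ is piecewise constant with $N_T = T/\Delta t$ pieces, the total gate count is $N_T$ times the per-step cost (the Fourier transforms $\mathcal F,\mathcal F^{-1}$ conjugating each step contribute only an additive $\mathcal O(n\log n)$ or so and are absorbed). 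So the key quantities to pin down are: the number of qubits $n$, the sparsity of $\widetilde H_k$, and $\|\widetilde H_k\|_{\max}$.

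Next I would estimate each ingredient. The qubit count was already given: $n \sim \log(1/\Delta p) + \log d$, coming from $2L/\Delta p$ grid points in the $p$-variable tensored with the $\tilde d = d+1 = \mathcal O(d)$ component space. For the sparsity: $H_{1,k}$ and $H_{2,k}$ act on the $(d+1)$-dimensional component space and, from their explicit block form in the Remark (an essentially dense $d\times d$ block $A$ plus a single extra row/column from the noise term), have sparsity $s = \mathcal O(d)$; tensoring with the diagonal $P_{\boldsymbol\mu}$ (via $D_{\boldsymbol\mu}$) does not increase sparsity, so $s(\widetilde H_k) = \mathcal O(d)$. For the max-norm: $\|D_{\boldsymbol\mu}\|_{\max} = \max_l |\mu_l| = \pi N/L = \mathcal O(1/\Delta p)$ since $\Delta p = L/N$, while $\|H_{1,k}\|_{\max}$ and $\|H_{2,k}\|_{\max}$ are bounded by $\mathcal O(\|A\|_{\max} + \|\Delta W_k\|_{\max}/\sqrt{\Delta t})$ from the explicit entries. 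Hence $\|\widetilde H_k\|_{\max} = \mathcal O\bigl((1/\Delta p)(\|A\|_{\max} + |\Delta W|_{\max}/\sqrt{\Delta t})\bigr)$, using the uniform bound $|\Delta W|_{\max} = \max_k \|\Delta W_k\|_{\max}$.

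Then I would plug into Proposition \ref{QSPcomplexity}. Simulating one step for time $\Delta t$ costs $\mathcal O\bigl(\Delta t\, s\, \|\widetilde H_k\|_{\max}\bigr)$ queries up to the fixed-$\epsilon$ additive term, and each query costs $\mathcal O(n + \tau_b\,\mathrm{polylog}\,\tau_b) = \mathcal O(n)$ gates since $\tau_b$ is fixed. Multiplying: per-step gate cost $\mathcal O\bigl(\Delta t \cdot d \cdot \tfrac{1}{\Delta p}(\|A\|_{\max}+|\Delta W|_{\max}/\sqrt{\Delta t}) \cdot (\log(1/\Delta p)+\log d)\bigr)$. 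Summing over $N_T = T/\Delta t$ steps cancels the $\Delta t$ and produces exactly
\[
    N_{\text{Gate},\text{Schr}} = \mathcal O\!\left(\frac{Td}{\Delta p}\Bigl(\log\tfrac{1}{\Delta p}+\log d\Bigr)\Bigl(\|A\|_{\max}+|\Delta W|_{\max}/\sqrt{\Delta t}\Bigr)\right),
\]
as claimed; I would note that the $\log(1/\epsilon)/\log\log(1/\epsilon)$ term from the proposition, being $\epsilon$-fixed, is dominated and that the cost of preparing the initial state \eqref{FouriPhi} and of the per-step Fourier conjugations are lower order.

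The main obstacle I anticipate is not any single estimate but making the sparsity and max-norm bounds genuinely rigorous: one must check that the Fourier-space reformulation \eqref{eqw}--\eqref{DisF} really is the Hamiltonian being simulated on the quantum device (so that $D_{\boldsymbol\mu}$, not $P_{\boldsymbol\mu}$, sets the max-norm), that $s(\widetilde H_k)$ counts the block-diagonal-in-$\mu$ structure correctly, and that the accumulation of per-step errors over $N_T$ steps stays within the fixed total $\epsilon$ — the latter forcing a per-step error $\epsilon/N_T$, whose logarithm is absorbed into the suppressed $\log(1/\epsilon)$ factor but should be acknowledged. A secondary subtlety is that $\widetilde H_k$ is a random matrix, so $|\Delta W|_{\max}$ is a random quantity; the statement implicitly treats a fixed sample path, which I would make explicit.
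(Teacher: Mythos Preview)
Your proposal is correct and follows essentially the same approach as the paper: both arguments identify the piecewise-constant Hamiltonian $\widetilde H_k = D_{\boldsymbol\mu}\otimes H_{1,k} - \boldsymbol I\otimes H_{2,k}$, bound its sparsity by $d+1$ and its max-norm by $\mathcal O\bigl((1/\Delta p)(\|A\|_{\max}+|\Delta W|_{\max}/\sqrt{\Delta t})\bigr)$, and then feed these into Proposition~\ref{QSPcomplexity}, summing the per-step cost over $N_T=T/\Delta t$ intervals. One minor clarification: in the paper's realisation the quantum Fourier transform is applied only once at the beginning and once at the end (the intermediate evolution is carried out entirely in Fourier space via $\widetilde H_k$), not as a per-step conjugation as you describe --- but this makes no difference to the final count, since either way the QFT cost is $\mathcal O(n_p\log n_p)$ and subleading.
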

\begin{proof}
    The procedure of the Schr\"odingerisation approach can be realized as follows:
\begin{equation}
    \widetilde{\boldsymbol{w}}(t_{0},\cdot) \xrightarrow{F_p^{-1}}\boldsymbol{c}(t_0) \xrightarrow{\mathrm{e}^{-i \widetilde{H}_1 \Delta t}} \boldsymbol{c}(t_{1}) \cdots \xrightarrow{\mathrm{e}^{-i \widetilde{H}_k \Delta t}} \boldsymbol{c}(t_{k}) \cdots \xrightarrow{\mathrm{e}^{-i \widetilde{H}_{N_T} \Delta t}} \boldsymbol{c}(T) \xrightarrow{F_p} \widetilde{\boldsymbol{w}}(T,\cdot)
\end{equation}
where $\widetilde{H}_k=\widetilde{H}_{1,k}-\widetilde{H}_{2,k}$ and $F_p=\Phi$ is the discrete Fourier transform. The Hamiltonian simulation using the Hamiltonian $\widetilde{H}_k$ is employed in the intermediate steps. It is clear from \eqref{FouriPhi} that $F_p$ can be realized based on quantum Fast Fourier transform (see, for example, \cite{Shor}).  So, the Schr\"odingerisation approach can be realized by quantum Fast Fourier transform and Hamiltonian simulation. According to Proposition \ref{QSPcomplexity}, we need to compute the sparsity $s(\widetilde{H}_k)$ and $\|\widetilde{H}_k\|_{\max}$. 

For  $s(\widetilde{H}_k)$, 
\begin{align*}
    s(\widetilde{H}_k) & =s(\widetilde{H}_{1,k}-\widetilde{H}_{2,k})  = s(D_{\boldsymbol{\mu}}\otimes H_{1,k} - \boldsymbol{I} \otimes H_{2,k}) \\
    & \leq \max_{1\leq l \leq 2N}s(\mu_l H_{1,k}-H_{2,k}) \leq d+1.
\end{align*}
For $\|\widetilde{H}_k\|_{\max}$, 
\begin{align*}
    \|\widetilde{H}_k\|_{\max} & =\|D_{\boldsymbol{\mu}}\otimes H_{1,k} - \boldsymbol{I} \otimes H_{2,k}\|_{\max} \\
    & \leq \max_{1 \leq l\leq 2N}\|\mu_l H_{1,k}\|_{\max}+\|H_{2,k}\|_{\max} \\
    & \leq (\pi/\Delta p+1)(\|A\|_{\max}+|\Delta W|_{\max}/\sqrt{\Delta t}).
\end{align*}
Here $|\Delta W|_{\max}:= \max_{1\leq k\leq N_T}|\Delta W_k|$.

Let $n_p$ denote the number of qubits that is needed for the $p$-variable. It known that the one-dimensional quantum Fourier transform on $n_p$ qubits can be implemented using $\mathcal{O}(n_p\log n_p)$ gates \cite{QFTgatecomplexity}. By proposition \ref{QSPcomplexity}, for fixed $m$ bits of precision and error $\epsilon$, the overall gate complexity to obtain $\widetilde{\boldsymbol{w}}(T,\cdot)$ is 
\begin{align*}
    N_{\text {Gate}, \text{Schr}} & = \mathcal{O}\big(n_p \log n_p\big) +N_T\mathcal{O}\bigl(n( \Delta t \max_{1\leq k\leq 2N}\{s(\widetilde{H}_k)\|\widetilde{H}_k\|_{\max}\})\bigr) \\
    & \leq \mathcal{O}\big(n_p \log n_p\big)+ \mathcal{O}\bigl(Td(\log(1/\Delta p)+\log(d)) (\|A\|_{\max}+|\Delta W|_{\max}/\sqrt{\Delta t})/\Delta p \bigr).\\
\end{align*}
As $n_p \sim \log(1/\Delta p)$, we have 
\begin{equation}\label{addgc}
    N_{\text {Gate}, \text{Schr}} = \mathcal{O}\left(\frac{Td}{\Delta p}\Bigl(\log(\frac{1}{\Delta p})+\log(d)\Bigr)\Bigl(\|A\|_{\max}+|\Delta W|_{\max}/\sqrt{\Delta t}\Bigr)\right).
\end{equation}
With the state $\widetilde{\boldsymbol{w}}(T, \cdot)$, to recover $\widetilde{X}(T)$, we can directly apply the projection measurement $\sum_{j^*}^{j^{**}} |j\rangle \langle j|$ where $p=j \Delta p$ and the state $\widetilde{X}(T)/\|\widetilde{X}(t)\|$ can be recovered with probability $O(\|\widetilde{X}(T)\|^2/\|\widetilde{X}(0)\|^2)$ \cite{ analog2023,Schrodingerisation1}. So, the overall gate complexity to derive $\widetilde{X}(T)$ is the same as \eqref{addgc}.
\end{proof}

With similar arguments, we can derive the gate complexity for the SDE with multiplicative noise.
\begin{corollary}
    For the $d$-dimensional SDE with multiplicative Gaussian noise \eqref{MultiSDE}, the Schr\"odingerisation approach to obtain $\widetilde{X}(T)$ has gate complexity
    \begin{equation*}
        N_{\text {Gate}, \text{Schr}} = \mathcal{O}\left(\frac{Td}{\Delta p}\Bigl(\log(\frac{1}{\Delta p})+\log(d)\Bigr)\Bigl(\|A\|_{\max}+m\|B\|^2_{\max}+ \frac{m\|B\|_{\max}|\Delta W|_{\max}}{\Delta t}\Bigr)\right),
    \end{equation*}
where $\Delta t=T/N_T$.
\end{corollary}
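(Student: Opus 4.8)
The plan is to mirror the proof of Theorem \ref{gatescomplex}, since the only structural difference between the additive and multiplicative cases is the form of the matrix $\widetilde{A}_k$ and the absence of the one-dimensional augmentation (so $\tilde d = d$ instead of $d+1$). First I would note that the procedure diagram carries over verbatim: we apply $F_p^{-1}$, then alternate Hamiltonian simulations $\mathrm{e}^{-i\widetilde{H}_k\Delta t}$ for $k=0,\dots,N_T-1$ with $\widetilde{H}_k = \widetilde{H}_{1,k} - \widetilde{H}_{2,k}$, and finally $F_p$. The quantum Fourier transform contributes $\mathcal{O}(n_p\log n_p)$ gates with $n_p\sim\log(1/\Delta p)$, which is dominated by the simulation cost, so the task reduces again to bounding $s(\widetilde{H}_k)$ and $\|\widetilde{H}_k\|_{\max}$ and invoking Proposition \ref{QSPcomplexity} $N_T$ times with time step $\Delta t$.

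Next I would bound the two quantities for the multiplicative case. Here $\widetilde{A}_k = A - \tfrac12\sum_{l=1}^m (B^{(l)})^2 + \sum_{l=1}^m B^{(l)}\Delta W_k^{(l)}/\Delta t$, and $H_{1,k},H_{2,k}$ are its Hermitian/anti-Hermitian parts, each a $d\times d$ matrix (no extra row/column). For sparsity, $s(\widetilde{H}_k) = s(D_{\boldsymbol\mu}\otimes H_{1,k} - \boldsymbol I\otimes H_{2,k}) \le \max_l s(\mu_l H_{1,k} - H_{2,k}) \le d$, using that each term is a sum/product of $d\times d$ matrices. For the max-norm, $\|\widetilde{H}_k\|_{\max} \le \max_l \|\mu_l H_{1,k}\|_{\max} + \|H_{2,k}\|_{\max} \le (\pi/\Delta p + 1)\|\widetilde{A}_k\|_{\max}$, and the triangle inequality on the definition of $\widetilde{A}_k$ together with submultiplicativity of $\|\cdot\|_{\max}$ up to a dimension factor (or directly $\|B^{(l)}B^{(j)}\|_{\max}\le d\|B\|_{\max}^2$, but one can absorb this into the $\mathcal{O}$) gives $\|\widetilde{A}_k\|_{\max} \lesssim \|A\|_{\max} + m\|B\|_{\max}^2 + m\|B\|_{\max}|\Delta W|_{\max}/\Delta t$, where $\|B\|_{\max} := \max_l \|B^{(l)}\|_{\max}$ and $|\Delta W|_{\max} := \max_{k,l}|\Delta W_k^{(l)}|$.

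Finally I would assemble: $N_{\text{Gate},\text{Schr}} = \mathcal{O}(n_p\log n_p) + N_T\,\mathcal{O}\big(n\,\Delta t\,\max_k\{s(\widetilde{H}_k)\|\widetilde{H}_k\|_{\max}\}\big)$ with $n\sim\log(1/\Delta p)+\log d$; substituting $N_T\Delta t = T$, $s(\widetilde{H}_k)\le d$, and the max-norm bound yields the claimed complexity. The recovery of $\widetilde{X}(T)$ from $\widetilde{\boldsymbol w}(T,\cdot)$ by projection measurement adds no extra gates beyond this, exactly as in the additive case.

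The only genuine subtlety — and the place I would be most careful — is the treatment of the quadratic correction term $\tfrac12\sum_{l=1}^m (B^{(l)})^2$: products of $d$-sparse matrices are $d^2$-sparse in general, and $\|B^{(l)}B^{(j)}\|_{\max}$ carries a hidden factor of $d$. I expect the intended reading is that $B^{(l)}$ are taken to have $\mathcal{O}(1)$ sparsity (or that the $d$-factors are benign and absorbed into the stated $d$ out front and the $\mathcal{O}(\cdot)$), so that $s(\widetilde{H}_k) = \mathcal{O}(d)$ still holds; I would state this assumption explicitly rather than let it pass silently, since otherwise the sparsity could degrade and the complexity would pick up extra polynomial factors in $d$.
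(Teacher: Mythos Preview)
Your proposal is correct and follows essentially the same approach as the paper: reduce to bounding $s(\widetilde{H}_k)\le d$ and $\|\widetilde{H}_k\|_{\max}\le(\pi/\Delta p+1)\|\widetilde{A}_k\|_{\max}$, estimate $\|\widetilde{A}_k\|_{\max}$ by the triangle inequality on its three summands, then invoke Proposition~\ref{QSPcomplexity} and assemble exactly as in Theorem~\ref{gatescomplex}. The paper's proof is in fact terser than yours --- it simply writes $\|\widetilde{A}_k\|_{\max}\lesssim \|A\|_{\max}+m\|B\|_{\max}^2+m\|B\|_{\max}|\Delta W|_{\max}/\Delta t$ and $s(\widetilde{H}_k)\le d$ without further comment --- so your flagged subtlety about the hidden $d$-factor in $\|(B^{(l)})^2\|_{\max}$ and the sparsity of the quadratic correction is a legitimate observation that the paper absorbs silently into the $\lesssim$ and the trivial bound $s(\cdot)\le d$.
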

\begin{proof}
    Similar as Theorem \ref{gatescomplex}, we only need to compute the sparsity $s(\widetilde{H}_k)$ and $\|\widetilde{H}_k\|_{\max}$.

    As $\widetilde{A}_k$ is a $d$-dimensional matrix and 
    \begin{align*}
        \|\widetilde{A}_k\|_{\max} & = \Bigl\|A-\frac{1}{2}\sum_{l=1}^m (B^{(l)})^2+\sum_{l=1}^m B^{(l)} \frac{\Delta W^{(l)}_k}{\Delta t}\Bigr\|_{\max} \\
        & \lesssim \|A\|_{\max}+m\|B\|^2_{\max}+\frac{m\|B\|_{\max}|\Delta W|_{\max}}{\Delta t}.
    \end{align*}

    For $s(\widetilde{H}_k)$,  
    \begin{align*}
        s(\widetilde{H}_k) & =s(\widetilde{H}_{1,k}-\widetilde{H}_{2,k})  = s(D_{\boldsymbol{\mu}}\otimes H_{1,k} - \boldsymbol{I} \otimes H_{2,k}) \\
        & \leq \max_{1\leq l \leq 2N}s(\mu_l H_{1,k}-H_{2,k}) \leq d.
    \end{align*}

    For $\|\widetilde{H}_k\|_{\max}$, 
    \begin{align*}
        \|\widetilde{H}_k\|_{\max} & =\|D_{\boldsymbol{\mu}}\otimes H_{1,k} - \boldsymbol{I} \otimes H_{2,k}\|_{\max} \\
        & \leq \max_{1 \leq l\leq 2N}\|\mu_l H_{1,k}\|_{\max}+\|H_{2,k}\|_{\max} \\
        & \leq (\pi/\Delta p+1)(\|A\|_{\max}+m\|B\|^2_{\max}+\frac{m\|B\|_{\max}|\Delta W|_{\max}}{\Delta t}).
    \end{align*}
    Applying Proposition \ref{QSPcomplexity} and projection measurement $\sum_{j^*}^{j^{**}} |j\rangle \langle j|$, we derive the desired result.
\end{proof}

\begin{remark}
    Let $\tau :=\Delta t s(A) \|A\|_{\max}$. Let $\tau=\mathcal{O}(\log(1/\epsilon)/\log\log(1/\epsilon))$ and Proposition \ref{QSPcomplexity} is valid as established in \cite{low2017optimal}.
    By L\'evy modulus \cite{Levymodulus}, we know 
    \begin{equation*}
        \|\Delta W\|_{\max} = \mathcal{O}(\sqrt{\Delta t \log (1/\Delta t)}), \quad \textrm{almost surely.}
    \end{equation*}
    Then, for additive noise case, we have 
    \begin{equation*}
        \tau = \mathcal{O}(\Delta t \sqrt{\log (1/\Delta t)}/\Delta p).
    \end{equation*}
    It suffices to set $\Delta t \sqrt{\log (1/\Delta t)}/\Delta p = \mathcal{O}(\log(1/\epsilon)/\log\log(1/\epsilon))$ to meet the requirement of Proposition \ref{QSPcomplexity}.

    Similarly, for the multiplicative noise case, we have 
    \begin{equation*}
        \tau = \mathcal{O}( \sqrt{\Delta t \log (1/\Delta t)}/\Delta p).
    \end{equation*}
    Let $\sqrt{\Delta t \log (1/\Delta t)}/\Delta p = \mathcal{O}(\log(1/\epsilon)/\log\log(1/\epsilon))$ and it will meet the requirement of Proposition \ref{QSPcomplexity}.
\end{remark}

\subsection{Multi-sample Simulation}

In practice, we need to compute multiple samples of the SDE to calculate, for example, the expectation of the solution. We can also directly apply the Schr\"odingerisation approach for multiple samples simulation for SDE \eqref{AddSDE}.
Suppose we aim to simulate $N$ samples, take 
\begin{equation*}
    \widetilde{A}_{k}=\begin{pmatrix}
        A               & \cdots & 0              & B \Delta W_k^{(1)} & \cdots    & 0\\
        \vdots          & \ddots & \vdots         & \vdots             & \ddots    & \vdots\\
        0               & \cdots & A              & 0                  & \cdots    & B \Delta W_k^{(N)} \\
        \boldsymbol{0}  & \cdots & \boldsymbol{0} & \boldsymbol{0}     & \cdots    & \boldsymbol{0}
    \end{pmatrix}, \quad \widetilde{Y}=\begin{pmatrix}
        \widetilde{X}^{(1)}\\
        \vdots \\
        \widetilde{X}^{(N)}\\
        1/\Delta t \\
        \vdots \\
        1/\Delta t
    \end{pmatrix}, \quad \widetilde{Y}(0)=\begin{pmatrix}
        x_0\\
        \vdots \\
        x_0\\
        1/\Delta t \\
        \vdots \\
        1/\Delta t
    \end{pmatrix},
\end{equation*}
where $\Delta W_k^{(j)}=W_{t_{k+1}}^{(j)}-W_{t_k}^{(j)}$ is the increment of the $j$-th sample path of the standard $d$-dimensional Brownian motion and $\widetilde{X}^{(j)}$ is the $j$-th sample path of $\widetilde{X}$. Here $\widetilde{A}$ is an $N(d+1)\times N(d+1)$ matrix and $\boldsymbol{0}$ represents the $N\times d$ zero matrix. Then we can follow exactly the same procedure to solve equation \eqref{TD} with $\widetilde{A}_k$ and $\widetilde{Y}$ defined here. The sparsity of $(\widetilde{A}_k+\widetilde{A}_k^\dagger)$ is $(d+1)$ and thus the sparsity of $H_{1,k}$ and $H_{2,k}$ is $(d+1)$. So the gate complexity for $N$-samples simulation of the SDE \eqref{AddSDE} is 
\begin{equation*}
    N_{\text {Gates}, \text{Schr}} = \mathcal{O}\left(\frac{Td}{\Delta p}\Bigl(\log(\frac{1}{\Delta p})+\log(Nd)\Bigr)\Bigl(\|A\|_{\max}+\max_{1 \leq j \leq N}|\Delta W^{(j)}|_{\max}/\sqrt{\Delta t}\Bigr)\right).
\end{equation*}

Similarly, for the SDE \eqref{MultiSDE}, define
\begin{equation*}
    \widetilde{A}_{k}=\begin{pmatrix}
        \widetilde{A}_k^{(1)}    & \cdots & 0                     \\
        \vdots                   & \ddots & \vdots                \\
        0                        & \cdots & \widetilde{A}_k^{(N)} \\
        
    \end{pmatrix}, \quad \widetilde{Y}=\begin{pmatrix}
        \widetilde{X}^{(1)}\\
        \vdots \\
        \widetilde{X}^{(N)}
    \end{pmatrix}, \quad \widetilde{Y}(0)=\begin{pmatrix}
        x_0\\
        \vdots \\
        x_0
    \end{pmatrix}.
\end{equation*}
Here,
\begin{equation*}
    \widetilde{A}_k^{(j)}  =A-\frac{1}{2}\sum_{l=1}^m (B^{(l)})^2+\sum_{l=1}^m B^{(l)} \frac{\Delta W^{(j,l)}_k}{\Delta t},
\end{equation*}
and  $\Delta W_k^{(j)}=W_{t_{k+1}}^{(j)}-W_{t_k}^{(j)}$ is the increment of  the $j$-th sample path of the standard $m$-dimensional Brownian motion $W^{(j)}$ and $\Delta W_k^{(j)}=(\Delta W_k^{(j,1)},\cdots,W_k^{(j,m)})$. $\widetilde{X}^{(j)}$ is the $j$-th sample path of $\widetilde{X}$. The sparsity of $\widetilde{A}_{k}$ is $d$ and the same holds for $(\widetilde{A}_k+\widetilde{A}_k^\dagger)$. So the gate complexity for $N$-samples simulation of the SDE \eqref{MultiSDE} is 
\begin{equation*}
    N_{\text {Gate}, \text{Schr}} = \mathcal{O}\left(\frac{Td}{\Delta p}\Bigl(\log(\frac{1}{\Delta p})+\log(Nd)\Bigr)\Bigl(\|A\|_{\max}+m\|B\|^2_{\max}+ \frac{m\|B\|_{\max}\max_{1 \leq j \leq N}|\Delta W^{(j)}|_{\max} }{\Delta t}\Bigr)\right).
\end{equation*}

The computational complexity of the Euler-Maruyama scheme for simulating $N$ samples of the $d$-dimensional SDE with additive Gaussian noise \eqref{AddSDE} is $\mathcal{O}(Nd^2T/\Delta t)$,
so there is a quantum advantage when
\begin{equation*}
    \frac{Nd}{\log(Nd)} \gtrsim \frac{\Delta t}{\Delta p}\Bigl(1-\frac{\log(\Delta p)}{\log(Nd)}\Bigr)\Bigl(\|A\|_{\max}+\max_{1 \leq j \leq N}|\Delta W^{(j)}|_{\max}/\sqrt{\Delta t}\Bigr).
\end{equation*}

The computation complexity of the Milstein scheme for simulating $N$ samples of the $d$-dimensional SDE with  multiplicative Gaussian noise \eqref{MultiSDE} is $\mathcal{O}(Nmd^2T/\Delta t)$,
so there is a quantum advantage when
\begin{equation*}
    \frac{Nd}{\log(Nd)} \gtrsim \frac{\sqrt{\Delta t}}{\Delta p}\Bigl(1-\frac{\log(\Delta p)}{\log(Nd)}\Bigr)\left(\Bigl(\frac{\|A\|_{\max}}{m}+\|B\|^2_{\max}\Bigr)\sqrt{\Delta t}+\frac{\|B\|_{\max}\max_{1 \leq j \leq N}|\Delta W^{(j)}|_{\max} }{\sqrt{\Delta t}}\right).
\end{equation*}

\section{Numerical experiments}\label{examp}

In this section, we will apply the Schr\"odingerisation method to various SDEs in classical computers to illustrate the validation of the method in detail. Note that due to the no-cloning theorem, we can not compute the whole sample path in one run on the quantum computer. We present the computed sample path to illustrate the validation of our method.

\begin{example} [The Ornstein–Uhlenbeck process]

    The Ornstein–Uhlenbeck process is a type of continuous-time stochastic process that finds applications in various fields such as physics \cite{OUphysics}, biology \cite{OUbiology}, stochastic analysis \cite{gardiner2009stochastic} and etc.
    Consider a one-dimensional  Ornstein–Uhlenbeck process:
\begin{equation}\label{NumSDE}
    d X(t)=a X(t) \, dt+r\, dW_t, \quad X(0)=x_0,  
\end{equation}
where $a$ and $r$ are constants.

The explicit solution to equation \eqref{NumSDE} is 
\begin{equation*}
    X(t)=e^{at}\left(x_0+r\int_0^te^{-as}dW_t\right).
\end{equation*}
For the time mesh $\{t_j=j \Delta t: \Delta t =T/N_T,j=0,1,\cdots,N_T-1\}$, the explicit solution to the numerical formalism \eqref{NumForm} in the $k$-th iteration is 
\begin{equation}\label{ApproxNum}
    \widetilde{X}(t_{k+1})=e^{a \Delta t} \widetilde{X}(t_k)+(e^{a \Delta t}-1) \frac{r \Delta W_k}{a\Delta t}.
\end{equation}
In the following, we use $\sqrt{\Delta t} \xi_k$ to represent $\Delta W_k$, where $\xi_k \sim N(0,1)$ are independent standard Gaussian random variables. Take 
\begin{equation}
    \widetilde{A}_k=\left(\begin{array}{cc}
    a & r \xi_k\\
    0 &0
    \end{array} \right), \quad \widetilde{Y}=\left( \begin{array}{c}
         \widetilde{X}(t)  \\
          \frac{1}{ \sqrt{\Delta t}}
    \end{array}\right), \quad \boldsymbol{v}(t,p)=\left( \begin{array}{c}
        v_1(t,p)  \\
        v_2(t,p)
   \end{array}\right).
\end{equation}
The Hermitian matrices $H_{1,k}$ and $H_{2,k}$ in equation \eqref{Schr-1} then become
\begin{equation*}
    H_{1,k}=\left(\begin{array}{cc}
        a & \frac{ r}{2}\xi_k \\
        \frac{ r}{2}\xi_k & 0
    \end{array} \right), \quad 
    H_{2,k}=\left(\begin{array}{cc}
        0 & \frac{-i r}{2}\xi_k \\
        \frac{i r}{2}\xi_k & 0
    \end{array} \right).
\end{equation*} 

\begin{figure}[htbp]
    \begin{minipage}[]{1 \textwidth}
        \centerline{\includegraphics[width=15.45cm,height=9.18cm]{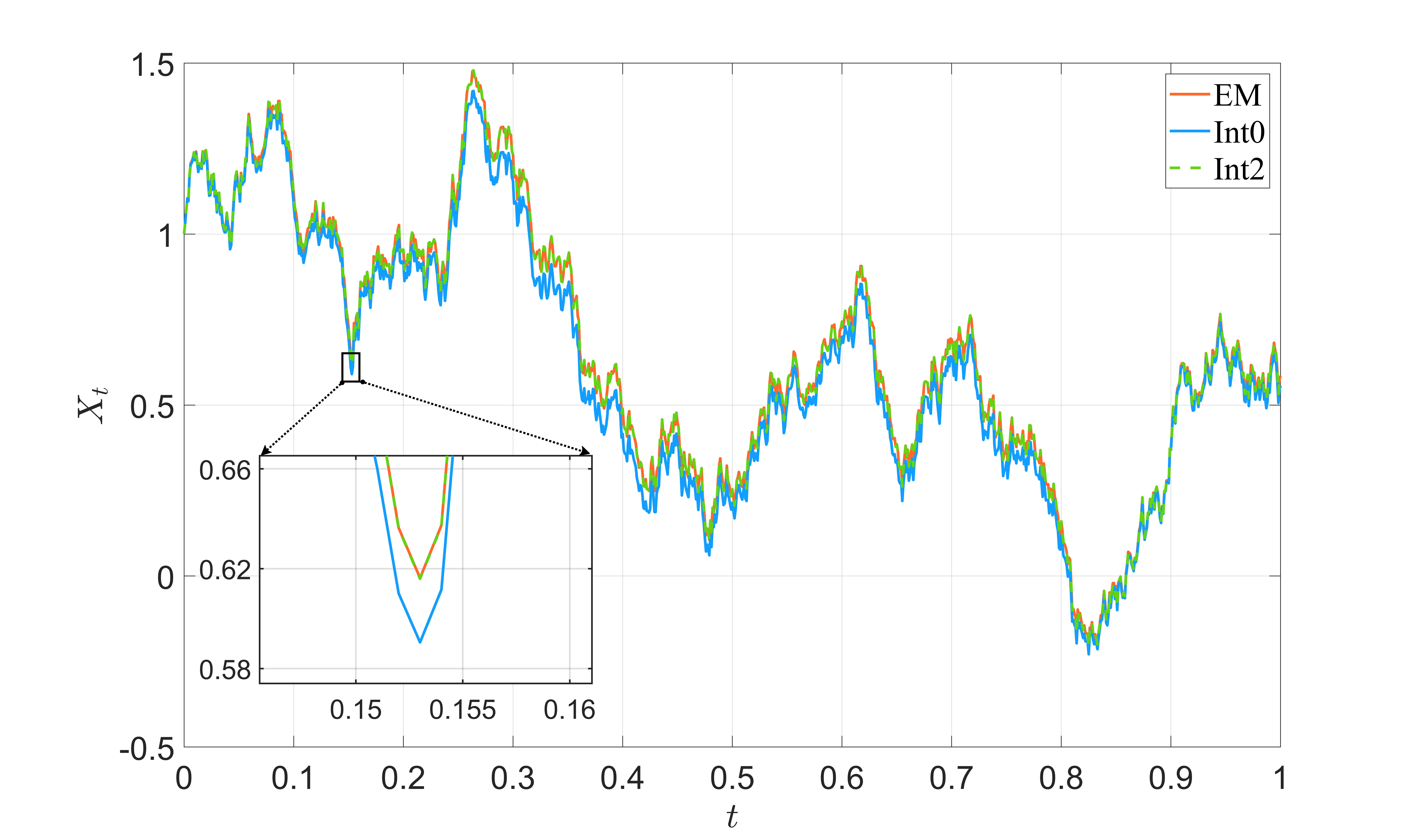}}
        \centerline{(a1)}
    \end{minipage}
    \\
    \begin{minipage}[]{0.47 \textwidth}
    \centerline{\includegraphics[width=8.62cm,height=6.3cm]{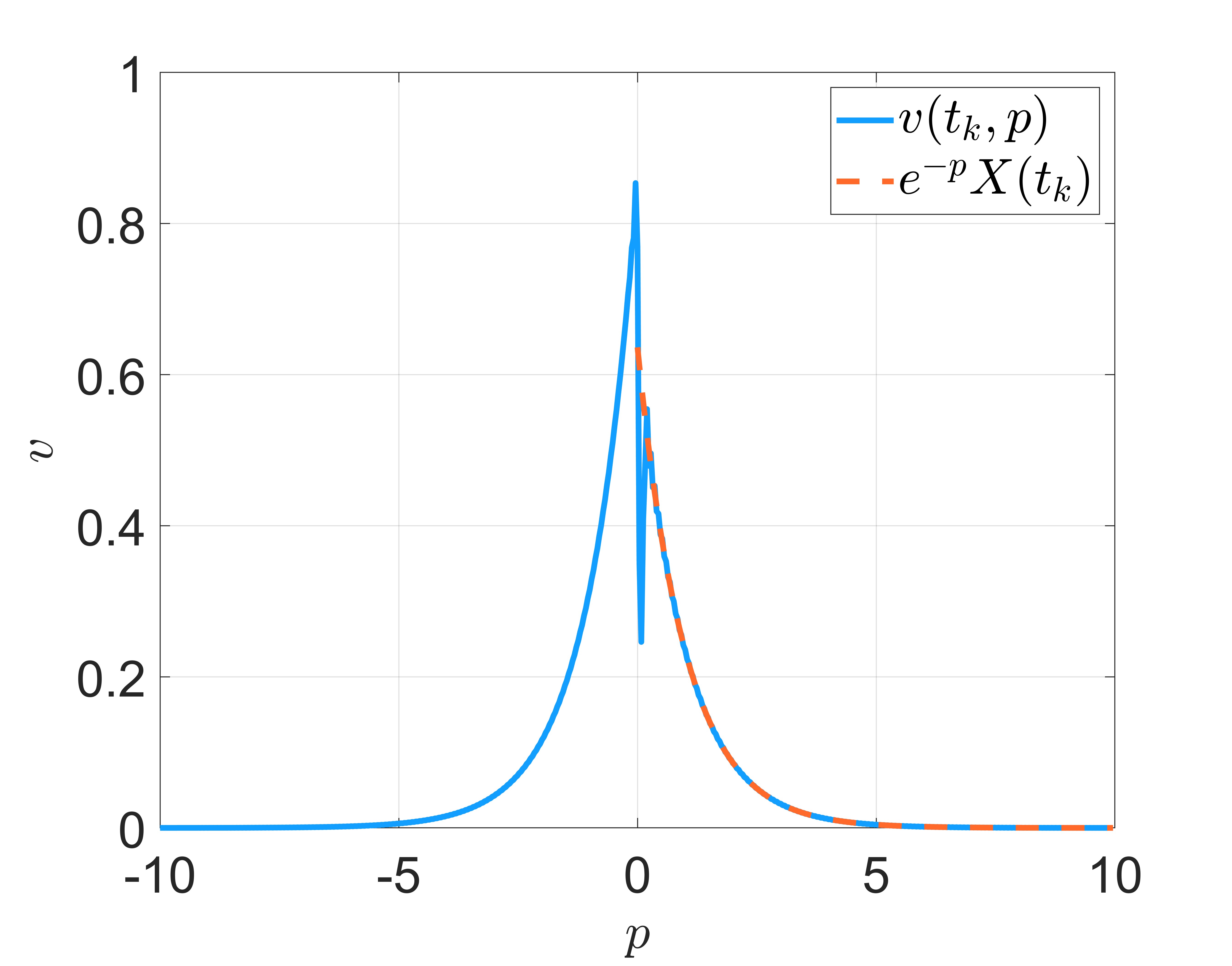}}
        \centerline{(a2)}
    \end{minipage}
    \hfill
    \begin{minipage}[]{0.47 \textwidth}
        \centerline{\includegraphics[width=8.62cm,height=6.3cm]{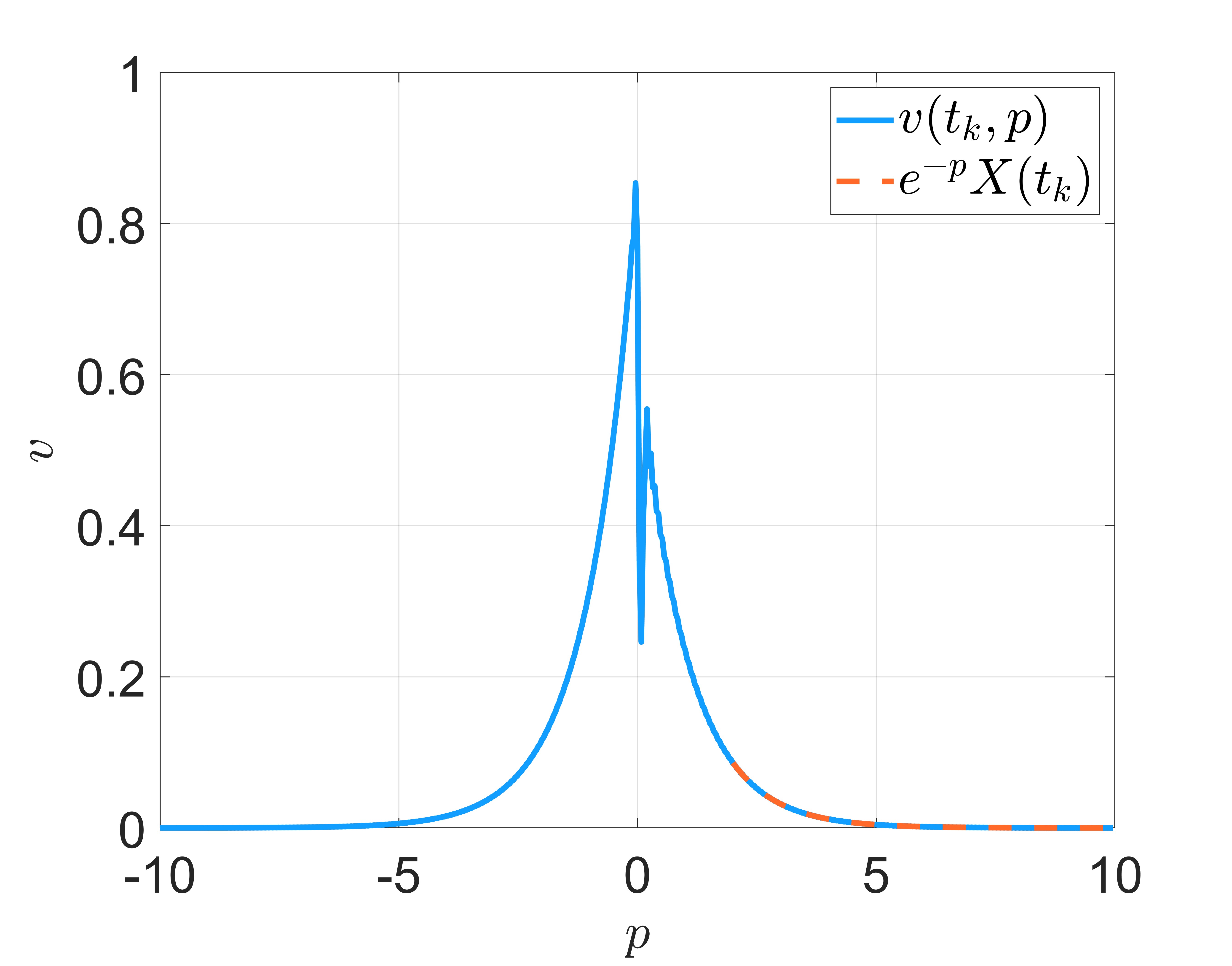}}
        \centerline{(a3)}
    \end{minipage} 

    \caption{ A sample path of equation \eqref{NumSDE} with $x_0=1$, $a=1,r=1,T=1,\Delta t=1 \times 10^{-3}$ and $\Delta p =0.04$ computed by i-stable second order Runge-Kutta method and recovered by the normalized integration method on interval $[2,10]$.  {\bf (a1)}: A sample path of the solution; {\bf (a2)}: The corresponding $v(t_k,p)$ and $e^{-p}X(t_k)$ with $t_k=0.152$ on the recovery region $[0,10]$; {\bf (a3)}: The corresponding $v(t_k,p)$ and $e^{-p}X(t_k)$ with $t_k=0.152$ on the recovery region $[2,10]$. } 
    \label{SDEIllustrationInt}
\end{figure}

For the normalized integration recovery formula \eqref{NormIntExpli} and \eqref{NormInt}, it is crucial to choose an appropriate recovery region $U$ and the corresponding $U_p$.
It is clear that with the negative diffusion term $a$ and $r=0$, the solution $v_1(t,p)$ of equation \eqref{Schr-1} will move from the right to the left in the $(p,v_1(t,p))$ plane. Because in this case the solution to equation \eqref{Schr-1} is 
\begin{equation*}
    v_1(t,p)=\exp(-|p-at|)x_0.
\end{equation*}
So, for this no-noise case, it is sufficient to choose $U=[0,\infty)$ and $U_p=[0,R]$. However, due to the nature of the noise term $r\xi_k$, the noise term can take positive values, which in turn causes the maximum eigenvalue of matrix $H_{1,k}$ become positive. According to \cite[Theorem 3.1]{Ma2024inhomov2}, the left end-point of the recovery region should be chosen to be greater than $0$, otherwise, an additional recovery error may arise. For $a>0$, even for the no-noise case, recovering the solution from the recovery region $U=[0,\infty)$ and $U_p=[0,R]$ will induce additional error. In figure \ref{SDEIllustrationInt}, we shows the influence of the recovery region for the SDE \eqref{NumSDE} with $a=1$ and $r=1$. 

For $a<0$, the largest positive eigenvalue of matrix $H_{1,k}$ is 
\begin{equation*}
    \lambda_{\textrm{max}}^+=\frac{a+\sqrt{a^2+ r^2\xi^2/4}}{2}\leq \frac{|r\xi|}{4}.
\end{equation*}
According to \cite[Theorem 3.1]{Ma2024inhomov2}, we can choose the left side of the recovery region $p^*>\lambda_{\textrm{max}}^+ T$ with $p^*=|r\xi|/4$.
By solving the ordinary differential equations \eqref{Schr-1} via second order i-stable Runge-Kutta method, we obtain the numerical solution $\boldsymbol{c}$ and $\boldsymbol{\widetilde{w}}=(\widetilde{w}_1, \widetilde{w}_2)$. Then we can obtain the recovered solution $\widetilde{X}^{\widetilde{w}}$ from $\widetilde{w}_1$ based on the normalized integration method \eqref{NormInt}. We compute $10^5$ samples and use the mean square error (MSE) which is 
\begin{equation*}
    \textrm{MSE}(\widetilde{X}^{\widetilde{w}}(T),\widetilde{X}(T)) \triangleq \sqrt{\mathbb{E}(|\widetilde{X}^{\widetilde{w}}(T)-\widetilde{X}(T)|^2)},
\end{equation*}
where $\widetilde{X}$ is computed with the same set of random variables by formula \eqref{ApproxNum}. The results are shown in Table \ref{BMMSErrorAS1T1NegaCompare}.
    
    \begin{table}
        \centering
        \begin{tabular}{ccc|c|c|c}
        \hline
        $T$ & $\Delta t$  & $\Delta p$  &\textbf{Int } & \textbf{Intp} & \textbf{EM}\\
        \hline
        1 & $2 \times 10^{-3}$ & $8 \times 10^{-2} $& $ 1.42   \times 10^{-3}$  & $ 1.07 \times 10^{-3}$ & $5.96 \times 10^{-4}$ \\
        \hline
        1 & $1 \times 10^{-3}$ & $4 \times 10^{-2} $& $ 4.26 \times 10^{-4}$  & $ 2.96 \times 10^{-4}$ & $ 2.96\times 10^{-4}$ \\
        \hline
        1 & $5 \times 10^{-4}$ & $2 \times 10^{-2} $& $ 2.47\times 10^{-4}$  & $ 1.44 \times 10^{-4}$ & $ 1.48\times 10^{-4}$ \\
        \hline
    \end{tabular}
        \caption{Mean square error  between the recovered solution $\widetilde{X}^{\widetilde{w}}$ and the approximate solution $\widetilde{X}$ with $x_0=1$, $a=-1$ and $r=1$ , using $10^5$ samples computed on a classical computer. The \textbf{Int} column is based on the normalized integration method on the fixed interval $[1.5,10]$ of the $p$-axis. The \textbf{Intp} column is based on the normalized integration method on the fixed interval $[p^*,10]$ of the $p$-axis. The \textbf{EM} column is the error between the solution $\hat{X}$ of the Euler-Maruyama scheme and the approximate solution $\widetilde{X}$. }
        \label{BMMSErrorAS1T1NegaCompare}
    \end{table}
\end{example}

\begin{example}[One dimensional L\'evy flights]
    
    L\'evy flights are a class of non-Gaussian stochastic process whose stationary increments satisfy a stationary distribution, which is heavy-tailed. L\'evy flights find applications in a wide variety of fields, including optics \cite{Levyflightsoptics}, searching \cite{Levyflightssearch}, earthquake behavior \cite{Levyflightsearthquake}, etc.

    Consider a one-dimensional L\'evy flight described by the following SDE
    \begin{equation}\label{SDELevy}
        d X(t)=\mu X(t) dt + \sigma  d L^\alpha_t, \quad X(0)=x_0,
    \end{equation}
    where $L^\alpha_t$ is a one-dimensional symmetric $\alpha$-stable L\'evy motion.
    We can follow the same steps as the case of SDEs with additive Gaussian noise, except for replacing the noise term as 
    \begin{equation*}
        \Delta L^\alpha_k\overset{d}{=}\Delta t^{\frac{1}{\alpha}} \xi_k
    \end{equation*}
    where $\xi_k \sim S(\alpha,0,0)$ is a symmetric stable random variable.
    Different from the case of the Gaussian noise, the L\'evy process contains ``large jumps" in its sample path, which means $\Delta L^\alpha_k$ may become a fairly large number and influence the eigenvalues of the Hermitian matrix $H_{1,k}$.
    In this case, a smaller $\Delta p$ will cause severe oscillation in the process of solving the ODE parts and significantly increase the error of the Schr\"odingerisation method. So we use the scale techniques in \cite{Ma2024inhomov2} to mitigate the impact off the large jumps. Define
    \begin{equation*}
        \widetilde{A}_k=\left(\begin{array}{cc}
            \mu &\epsilon\sigma \Delta t^{\frac{1}{\alpha}}\xi_k\\
            0 &0
            \end{array} \right), \quad \widetilde{Y}(t)=\left( \begin{array}{c}
                 \widetilde{X}(t)  \\
                  \frac{1}{\epsilon \Delta t}
            \end{array}\right).
    \end{equation*}
    Then, applying the Schr\"odingerisation method in Section \ref{SDEadditive}, we can compute the sample path of \eqref{SDELevy} using Hamiltonian simulation. We use the classical computer to compute $10^5$ samples and use the mean absolute error (MAE):
    \begin{equation*}
        \textrm{MAE}(X(T),Y(T))=\mathbb{E}\big|X(T)-Y(T)\big|.
    \end{equation*}
    \begin{figure}
        \begin{minipage}[]{1 \textwidth}
            \centerline{\includegraphics[width=15.45cm,height=9.18cm]{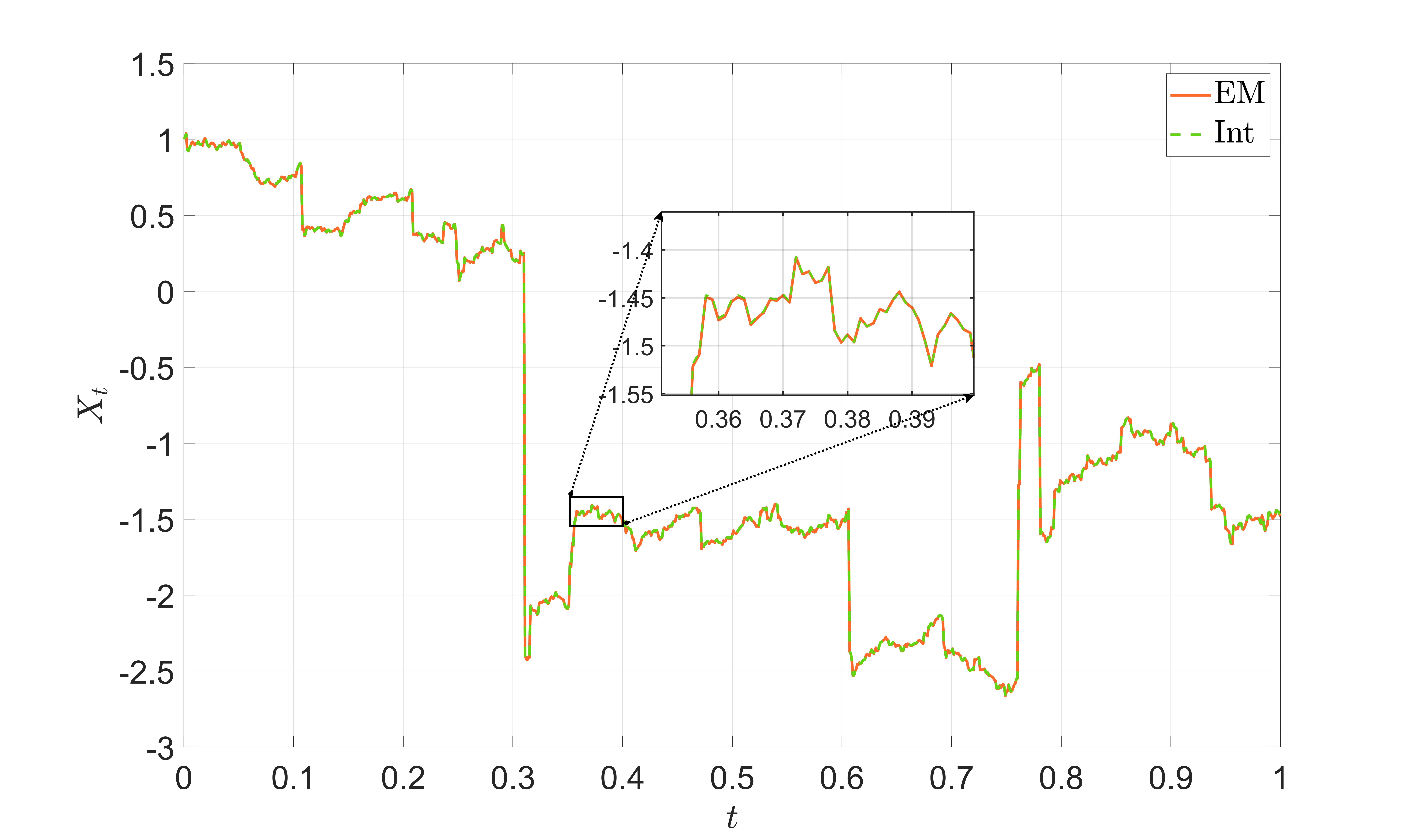}}
        \end{minipage}
        \caption{ A Sample path of equation \eqref{SDELevy} with $x_0=1$, $\mu=-1,\sigma=1,\epsilon = 0.1,T=1,\Delta t=1 \times 10^{-3}$ and $\Delta p =0.04$ computed by i-stable second order Runge-Kutta method and recovered by the normalized integration method and compared to the Euler-Maruyama solution. Large jumps take place near $t=0.3, 0.6$ and $0.76$. }
    \end{figure}
    The results are shown in Table \ref{LevyMSErrorAS1T1Nega}.
    \begin{table}[ht]
        \centering
        \begin{tabular}{ccc|c|c}
        \hline
        $T$ & $\Delta t$  & $\Delta p$  &\textbf{Int}  & \textbf{EM} \\
        \hline
        1 & $2 \times 10^{-3}$ & $8 \times 10^{-2} $& $ 2.20 \times 10^{-3}$  & $ 7.72\times 10^{-4}$  \\
        \hline
        1 & $1 \times 10^{-3}$ & $4 \times 10^{-2} $& $ 6.45\times 10^{-4}$  & $ 3.95\times 10^{-4}$  \\
        \hline
        1 & $5 \times 10^{-4}$ & $2 \times 10^{-2} $& $2.97 \times 10^{-4}$  & $1.99 \times 10^{-4}$  \\
        \hline
    \end{tabular}
        \caption{Mean absolute error compared to the approximate SDE with $x_0=1$, $\mu=-1,\sigma=1,\alpha=1.5$ and $\epsilon=0.1$, using $10^5$ samples computed on a classical computer. The \textbf{Int} column is the error between the recovered solution $\widetilde{X}^{\widetilde{w}}$ and the approximate solution $\widetilde{X}$, which is based on the normalized integration method on the fixed interval $[1.5,10]$ of the $p$-axis. The \textbf{EM} column is the error between the solution $\hat{X}$ of the Euler-Maruyama scheme and the approximate solution $\widetilde{X}$.}
        \label{LevyMSErrorAS1T1Nega}
    \end{table}
\end{example}

\begin{example}[Geometric Brownian motion]
    \begin{figure}[htbp]
        \begin{minipage}[]{1 \textwidth}
            \centerline{\includegraphics[width=15.45cm,height=9.18cm]{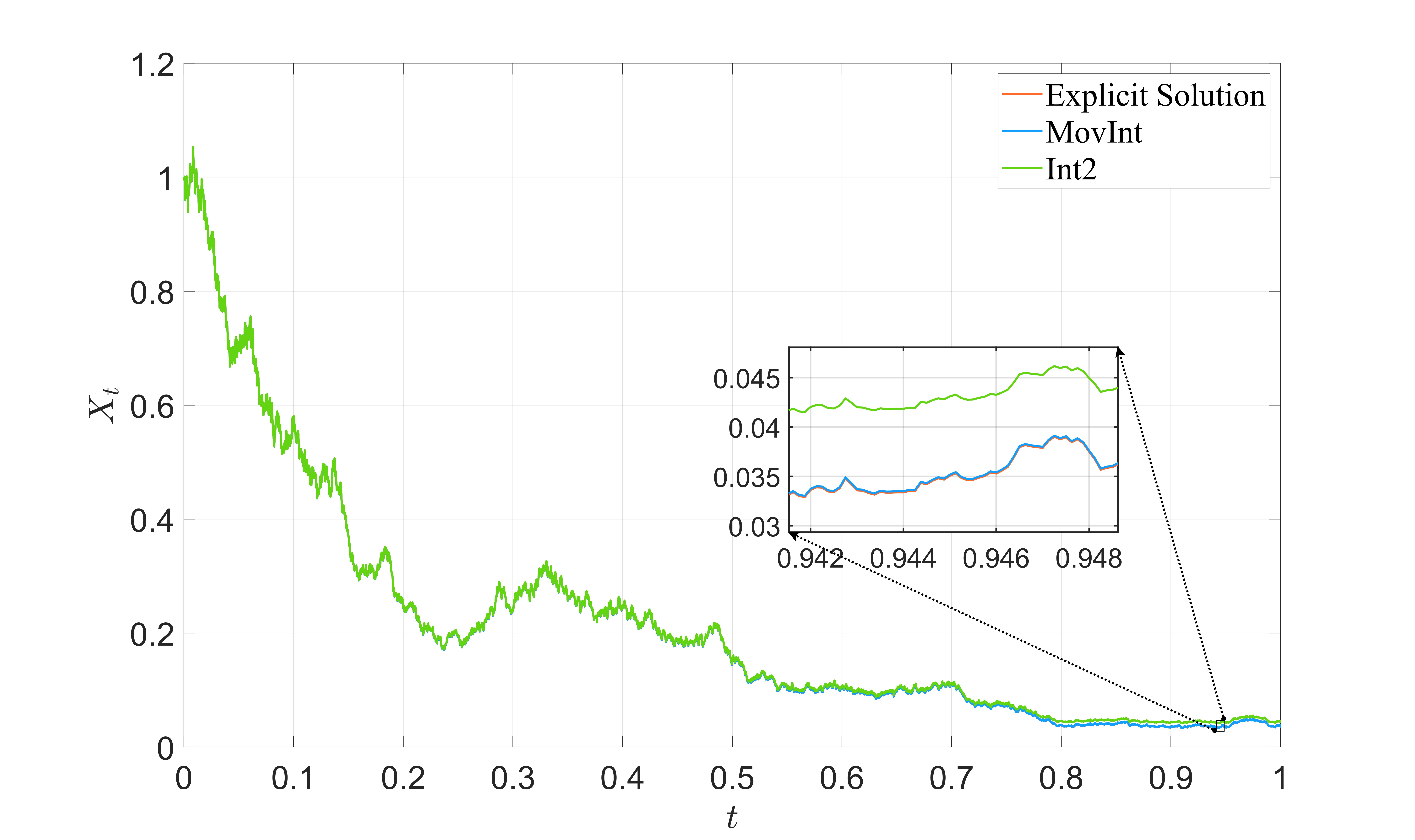}}
            \centerline{(a1)}
        \end{minipage}
        \\
        \begin{minipage}[]{0.47 \textwidth}
        \centerline{\includegraphics[width=8.62cm,height=6.76cm]{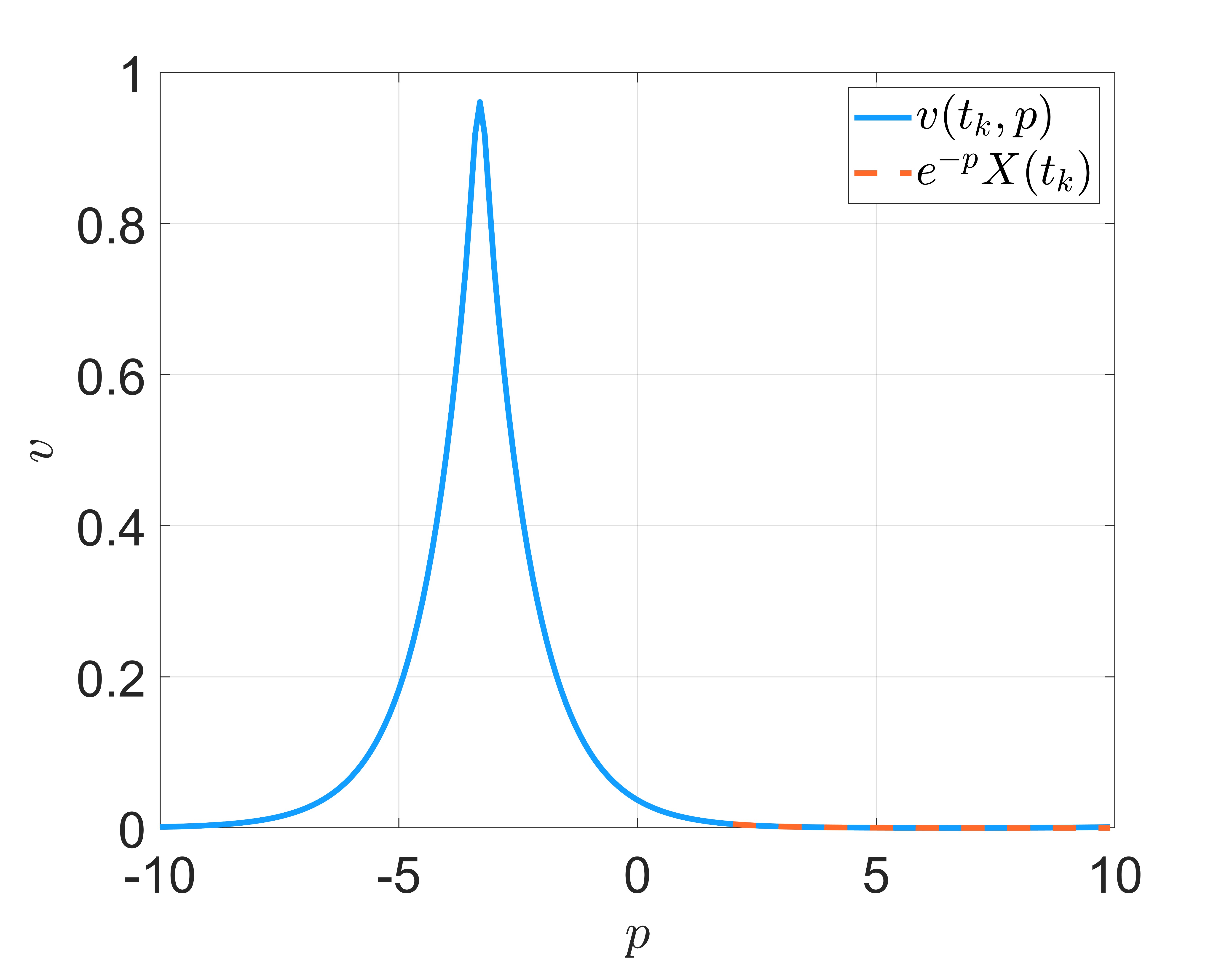}}
            \centerline{(a2)}
        \end{minipage}
        \hfill
        \begin{minipage}[]{0.47 \textwidth}
            \centerline{\includegraphics[width=8.62cm,height=6.76cm]{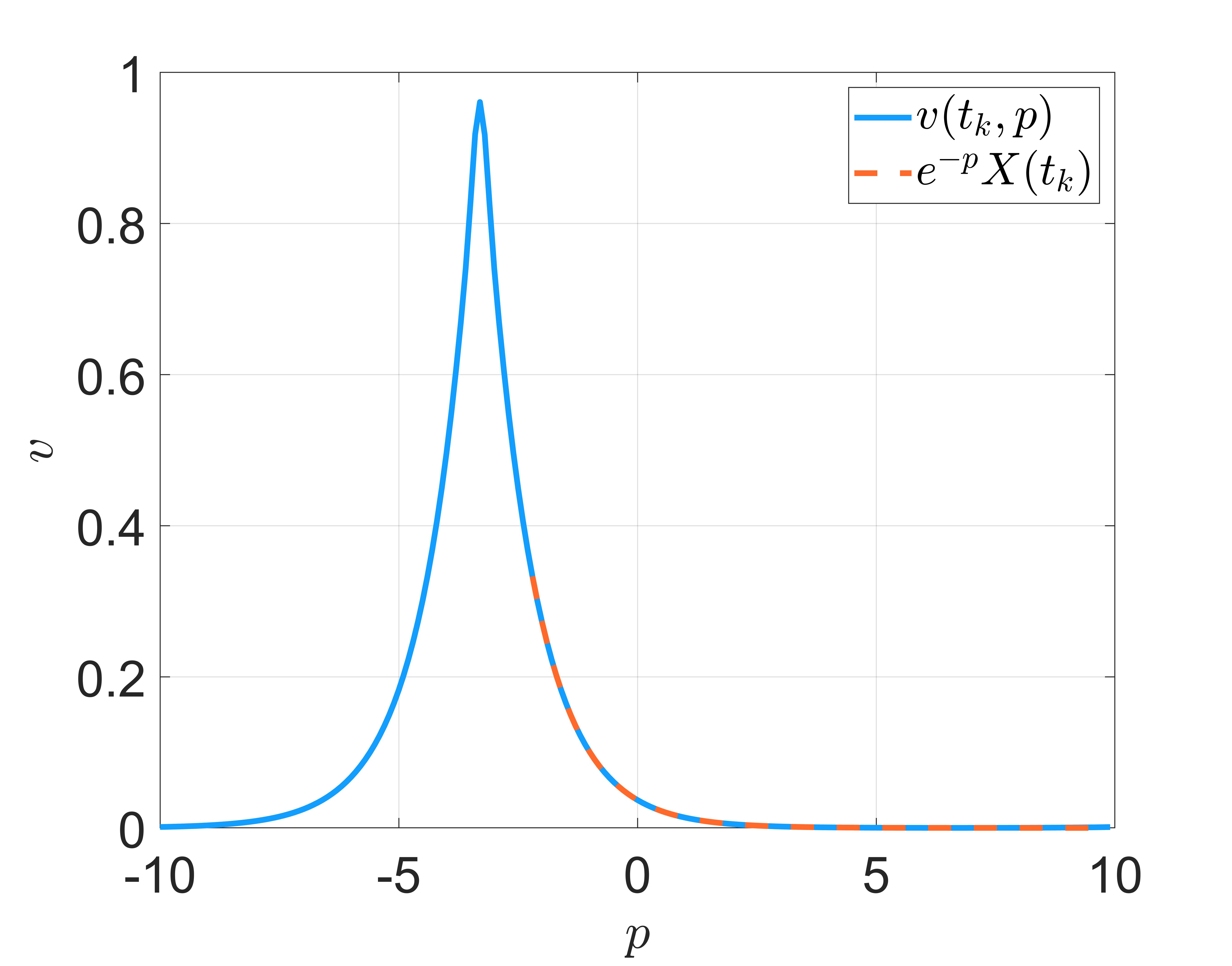}}
            \centerline{(a3)}
        \end{minipage} 
    
        \caption{ A Sample path of equation \eqref{gBM} with $x_0=1$, $\mu=-1,\sigma=1,T=1,\Delta t=1.25 \times 10^{-4}$ and $\Delta p =0.01$ computed by i-stable second order Runge-Kutta method and recovered by the normalized integration method. {\bf (a1)}: A sample path of the solution; {\bf (a2)}: The corresponding $v(t_k,p)$ and $e^{-p}X(t_k)$ with $t_k=0.9465$ on the fixed recovery region $[2,10]$; {\bf (a3)}: The corresponding $v(t_k,p)$ and $e^{-p}X(t_k)$ with $t_k=0.9465$ on the moving recovery region $[p^*(t_k)+1,10]$; }
        \label{SDEIllustrationMovInt}
    \end{figure}

    Geometric Brownian motion is a continuous time stochastic process and is known for its application in the Black-Scholes model\cite{GBMBlackScholes}.

    Consider the geometric Brownian motion described by the following one-dimensional stochastic differential equation
    \begin{equation}\label{gBM}
        d X(t)=\mu X(t) dt + \sigma X(t) d W_t, \quad X(0)=x_0.
    \end{equation}
    The explicit solution to \eqref{gBM} is 
    \begin{equation*}
        X(t)=x_0\exp\Big((\mu-\frac{1}{2}\sigma^2)t+\sigma W_t \Big).
    \end{equation*}
    For the time mesh $\{t_j=j \Delta t: \Delta t =T/N_T,j=0,1,\cdots,N_T-1\}$, the corresponding ordinary differential equation \eqref{QM2} is
    \begin{align*}
        \left\{
            \begin{aligned}
                \frac{d \widetilde{X}(t)}{dt} & = \widetilde{A}_k \widetilde{X}(t) , \quad \widetilde{X}(0)=x_0 \textrm{ and } t_k < t \leq t_{k+1}, \\
                \widetilde{A}_k & =\mu-\frac{\sigma^2}{2}+\sigma \frac{\Delta W_k}{\Delta t}.
            \end{aligned}
        \right.
    \end{align*}
    We use $\sqrt{\Delta t }\xi_k$ to represent $\Delta W_k$ and $H_{1,k}$, $H_{2,k}$ then becomes
    \begin{equation*}
        H_{1,k}=\widetilde{A}_k=\mu-\frac{\sigma^2}{2}+\sigma \frac{\xi_k}{\sqrt{\Delta t}}, \quad H_{2,k}=0.
    \end{equation*}
    We can follow exactly the same steps in the additive noise on interval $[-L,L]$ for mesh $p_j=L+(j-1)\Delta p$ with $\Delta p=(2L)/M$. It is also important to choose an appropriate recovery region $U_p$. In this example, we choose a ``moving'' $U_p$, that is, a time-dependent $U_p$.  Let 
    \begin{equation*}
        p^*(t)=\textrm{argmax}\{v(t,p_j),0\leq j \leq M\}.
    \end{equation*}
    We use the recovery region $U_p=[p^*(t)+1,R]$ to implement the normalized integration method \eqref{NormInt}.  The time-dependent recovery region $U_p$ will move with the wave $v(t,p)$ and it will further reduce the recovery error. It is because the wave $v(t,p)$ shows less oscillation near the peak $p^*$. So we can choose the left end of the recovery region near the peak. Numerical results are shown in Table \ref{MultiMSErrorAS1T1Nega} for $x_0=1$, $\mu=-1,\sigma=1$ and $T=1$, using $10^5$ samples computed on a classical computer.
    
    \begin{table}[ht]
        \centering
        \begin{tabular}{ccc|c|c|c}
        \hline
        $T$ & $\Delta t$  & $\Delta p$  &\textbf{Int2}  & \textbf{MovInt} & \textbf{EM} \\
        \hline
        1 & $5 \times 10^{-4}$ & 0.2 & $ 3.01\times 10^{-3}$ &$8.52 \times 10^{-5}$  &$ 9.51 \times 10^{-3}$  \\
        \hline
        1 & $ 2.5 \times 10^{-4}$ & 0.1& $3.31 \times 10^{-3}$ &$ 7.19 \times 10^{-5} $ & $6.77 \times 10^{-3}$  \\
        \hline
        1 & $1.25 \times 10^{-4}$ & 0.05&$ 1.20 \times 10^{-2}$ &$ 7.28\times 10^{-5}$ & $ 4.97 \times 10^{-3}$\\
        \hline
    \end{tabular}
        \caption{Mean square error compared to the explicit solution of the SDE \eqref{gBM} with $x_0=1$, $\mu=-1$ and $\sigma=1$, using $10^5$ samples computed on a classical computer. The error in the \textbf{Int2} column is based on the normalized integration method on the fixed interval $[2,10]$ of the $p$-axis. The error in the \textbf{MovInt} column is based on the normalized integration method on the moving interval $[p^*+1,10]$ of the $p$-axis. The \textbf{EM} column is the error between the solution $\hat{X}$ of the Euler-Maruyama scheme and the explicit solution $X$.}
        \label{MultiMSErrorAS1T1Nega}
    \end{table}
        
    In Table \ref{MultiMSErrorAS1T1Nega}, the error based on recovering from the fixed region $[2,10]$ (The \textbf{Int2} column) does not decrease along with $\Delta t$ and $\Delta p$. It is because the recovery region does not cover the essential part of the wave function $v(t,p)$ and it will cause an additional error that can not be improved by decreasing $\Delta t$ and $\Delta p$. However, by choosing the recovery region starting near the peak,  the performance of the scheme is significantly improved, as shown in the \textbf{MovInt} column.
    
\end{example}

\section{Analysis on convergence rate }\label{convrate}

In this section, we will give the proof of the convergence rate of the Schr\"odingerisation approach for both SDEs with additive Gaussian noise and multiplicative Gaussian noise. Specifically, we will show the mean square error between the analytical solution $X(T)$ of equation \eqref{AddSDE} and equation \eqref{MultiSDE} between their corresponding recovered solution $\widetilde{X}^{\widetilde{w}}(T)$, which collects the first $d$ components of $\widetilde{Z}^{\widetilde{w}}$ defined in \eqref{NormInt}. We know 
\begin{equation}\label{totalerror}
    \|X(T)-\widetilde{X}^{\widetilde{w}}(T)\|_2\leq \|X(T)-\widetilde{X}(T)\|_2+\sqrt{\mathbb{E}\|\widetilde{X}(T)-\widetilde{X}^{\widetilde{w}}(T)\|^2},
\end{equation}
where $\widetilde{X}$ is the analytical solution of the approximate equation \eqref{NumForm} and \eqref{QM2}. Here we use $\|\cdot\|_2=\sqrt{\mathbb{E}\|\cdot\|^2}$ to denote the mean square norm, where $\|\cdot\|$ is the Euclidean norm for vectors  and operator norm for matrices. It is clear that $\|X(T)-\widetilde{X}(T)\|_2$ is the error between the approximate equation and the SDE, and the error $\|\widetilde{X}(T)-\widetilde{X}^{\widetilde{w}}(T)\|$ comes from the recovery part.

 Denote $I=[-L,L]$, $H_1(t)=\sum_{k=0}^{N_T-1}H_{1,k}\mathbbm{1}_{[t_k,t_{k+1})}(t)$, 
\begin{align*}
    \lambda_{\max}^{+}(H_1) & =\max \{ \sup_{0<t<T} \{|\lambda|: \lambda \in \sigma(H_1(t)), \lambda >0 \}, 0\}, \\
    \lambda_{\max}^{-}(H_1) & =\max \{ \sup_{0<t<T} \{|\lambda|: \lambda \in \sigma(H_1(t)), \lambda <0 \}, 0\},
\end{align*}
 where $\sigma(H_1(t))$ collects all eigenvalues of $H_1(t)$.  
 
 We use $\|\cdot\|_{L^2(I)}$ and $\|\cdot\|_{H^1(I)}$ to denote the $L^2$-norm and $H^1$-norm on interval $I$ of variable $p$.
\begin{assumption}\label{Ass1}
    Assume the eigenvalues of $H_1$ has the following order 
    \begin{equation*}
        \lambda_1 (H_1) \leq \lambda_2 (H_1) \leq \cdots \lambda_d (H_1), \quad \textrm{for all } t \in [0,T].
    \end{equation*}
\end{assumption}

\begin{assumption}\label{Ass2}
    Let $\epsilon \ll 1$ be the desired accuracy. Assume the interval $I=[-L,L]$ satisfies 
    \begin{equation*}
        e^{-L+2\lambda_{\max}^{+}(H_1)T } \leq \epsilon, \quad e^{-L+\lambda_{\max}^{+}(H_1)T+\lambda_{\max}^{-}(H_1)T} \leq \epsilon.
    \end{equation*}
\end{assumption}

We have the following results.

\begin{theorem}\label{AddThm}
    For the $d$-dimensional SDE with $m$-dimensional additive Gaussian noise \eqref{AddSDE}, suppose the recovery region $U_p$ is properly chosen, which means the solution to \eqref{Addwptrans} satisfies $\boldsymbol{v}(t,p)=e^{-p}\widetilde{Y}(t)$ for $p\in U_p$. Suppose the Assumption \ref{Ass1} and \ref{Ass2} are satisfied.
    Then, the mean square norm between the recovered solution $\widetilde{X}^{\widetilde{w}}(T)$ and the solution $X(T)$ of the SDE \eqref{AddSDE}  satisfies 
    
    \begin{equation}
        \|X(T)-\widetilde{X}^{\widetilde{w}}(T)\|_2\lesssim C_2\Delta t + \left(\Delta p^{\frac{3}{2}}e^{\lambda_{\max}^{+}(H_1)T}+ \Delta p^{\frac{1}{2}}e^{-\widetilde{C}(T)}\right) \|\widetilde{Y}(0)\|
    \end{equation}
    where $C_2=\sqrt{{m c_1(e^{2\|A\|T}-1)}/(2\|A\|})$ with $c_1=\max_{1\leq l \leq m}|AB^{(l)}|^2$ and 
    \begin{equation*}
        \widetilde{C}(T)=\max\{L-\lambda_{\max}^{-}(H_1)T-\lambda_{\max}^{+}(H_1)T, L-2\lambda_{\max}^{+}(H_1)T \}.
    \end{equation*}
    
\end{theorem}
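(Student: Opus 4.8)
The plan is to control the two terms on the right of \eqref{totalerror} separately. For the first, $\|X(T)-\widetilde X(T)\|_2$, the strong error between the SDE \eqref{AddSDE} and its approximate equation \eqref{NumForm}, I would invoke Proposition \ref{Addconvrate}: its proof amounts to solving both equations in closed form on each step, $X(t_{k+1})=e^{A\Delta t}X(t_k)+\int_{t_k}^{t_{k+1}}e^{A(t_{k+1}-s)}B\,dW_s$ against $\widetilde X(t_{k+1})=e^{A\Delta t}\widetilde X(t_k)+\big(\tfrac1{\Delta t}\int_0^{\Delta t}e^{Au}\,du\big)B\,\Delta W_k$, subtracting, and using the It\^o isometry on the martingale increment $\int_{t_k}^{t_{k+1}}\big(e^{A(t_{k+1}-s)}-\tfrac1{\Delta t}\int_0^{\Delta t}e^{Au}\,du\big)B\,dW_s$. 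Taylor expanding the matrix exponential shows its integrand is $\mathcal{O}(\Delta t)$, so each per-step contribution to the second moment is $\mathcal{O}(\Delta t^3)$; summing over the $N_T$ (mutually independent) increments and bounding $\|e^{At}\|\leq e^{\|A\|t}$ produces $\|X(T)-\widetilde X(T)\|_2\lesssim C_2\Delta t$ with $C_2=\sqrt{m c_1(e^{2\|A\|T}-1)/(2\|A\|)}$ and $c_1=\max_{1\leq l\leq m}|AB^{(l)}|^2$.

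For the recovery term $\sqrt{\mathbb E\|\widetilde X(T)-\widetilde X^{\widetilde w}(T)\|^2}$ I would work pathwise, fixing the noise so that $H_{1,k},H_{2,k}$ are deterministic matrices, and pass to expectations at the end. The assumption that $U_p$ is properly chosen gives $\boldsymbol v(T,p)=e^{-p}\widetilde Y(T)$ on $U_p$, so there $\boldsymbol v(T,\cdot)$ is proportional to $e^{-p}$; consequently the discrete normalized integration \eqref{NormInt} applied to $\boldsymbol v$ itself already reproduces $\widetilde Y(T)$ exactly, and therefore $\widetilde Z^{\widetilde w}(T)-\widetilde Y(T)=\big(\sum_{p_j\in U_p}e^{-p_j}\Delta p\big)^{-1}\sum_{p_j\in U_p}(\widetilde{\boldsymbol w}^c-\boldsymbol v)(T,p_j)\Delta p$. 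Since keeping the first $d$ components does not increase the Euclidean norm, it remains to bound this ratio: the denominator is $\gtrsim e^{-p^*}$, so taking $p^*$ at its smallest admissible value $\simeq\lambda_{\max}^{+}(H_1)T$ is the source of the $e^{\lambda_{\max}^{+}(H_1)T}$ prefactor, while the numerator is controlled by the discrete Fourier error $\|(\widetilde{\boldsymbol w}^c-\boldsymbol v)(T,\cdot)\|$ on the recovery window, the fixed window length being absorbed into the implied constant.

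The heart of the matter is estimating $\|(\widetilde{\boldsymbol w}^c-\boldsymbol v)(T,\cdot)\|$ by the spectral-accuracy analysis of \cite[Theorem 3.1]{Ma2024inhomov2}, applied to the warped transport equation \eqref{Schr-1} (equivalently its Fourier form \eqref{TH}). The initial datum $e^{-|p|}\widetilde Y(0)$ has Fourier transform $\tfrac1{\pi(1+\eta^2)}\widetilde Y(0)$, decaying like $\eta^{-2}$, so truncating to the $2N$ modes $\mu_l=\pi(l-N-1)/L$ discards an $L^2$-tail of size $\big(\sum_{|l|>N}\mu_l^{-4}\big)^{1/2}\|\widetilde Y(0)\|\lesssim\Delta p^{3/2}\|\widetilde Y(0)\|$ — the origin of the $\Delta p^{3/2}$ — while periodizing $e^{-|p|}$ over $[-L,L]$ introduces a spurious jump of height $e^{-L}$ at $p=\pm L$, whose Fourier coefficients decay only like $\eta^{-1}$ and contribute a tail $\lesssim\Delta p^{1/2}e^{-L}$. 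Both of these errors, together with the accompanying aliasing terms, are propagated by the piecewise-constant-in-time evolution generated by $\sum_k(\eta H_{1,k}-H_{2,k})\mathbbm{1}_{(t_k,t_{k+1}]}$, whose effect over $[0,T]$ is controlled under Assumptions \ref{Ass1}--\ref{Ass2} by $e^{\lambda_{\max}^{+}(H_1)T}$ for the first and by $\min\{e^{-L+\lambda_{\max}^{+}(H_1)T+\lambda_{\max}^{-}(H_1)T},\,e^{-L+2\lambda_{\max}^{+}(H_1)T}\}=e^{-\widetilde{C}(T)}$ for the second. Combining this with the denominator bound gives the recovery error $\lesssim\big(\Delta p^{3/2}e^{\lambda_{\max}^{+}(H_1)T}+\Delta p^{1/2}e^{-\widetilde{C}(T)}\big)\|\widetilde Y(0)\|$, and adding the first term finishes the proof.

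I expect the main obstacle to be precisely this last step: propagating the frequency-truncation, aliasing and domain-truncation errors through the time-dependent, matrix-valued, generally non-commuting generator of \eqref{Schr-1} while keeping accurate track of where $\lambda_{\max}^{+}(H_1)$ rather than $\lambda_{\max}^{-}(H_1)$ enters — the warped equation moves mass in both directions along $p$, so these two quantities surface through the forward- and backward-travelling parts of the wave, which is also why the domain-truncation estimate requires both inequalities of Assumption \ref{Ass2} and the eigenvalue ordering of Assumption \ref{Ass1}. One should also keep in mind that this bound is pathwise, with a right-hand side that is random through $\lambda_{\max}^{\pm}(H_1)$ and the increments $\Delta W_k$, so it must be read either $\omega$-wise or after taking the mean-square norm and accounting for the moments of $e^{\lambda_{\max}^{\pm}(H_1)T}$.
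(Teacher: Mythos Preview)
Your overall strategy---splitting via \eqref{totalerror}, invoking Proposition~\ref{Addconvrate} for the approximation error, and then controlling the recovery error pathwise---is exactly the paper's route. Your sketch of Proposition~\ref{Addconvrate} is also accurate. The difference lies entirely in how you propose to obtain the recovery estimate, and there your mechanism for the $\Delta p^{1/2}e^{-\widetilde C(T)}$ term is not correct.

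The paper does not try to read both powers of $\Delta p$ off the Fourier tail of the initial datum. Instead it introduces the \emph{periodic} auxiliary problem $\boldsymbol{\mathcal V}$ of \eqref{BVPv} and splits $\widetilde{\boldsymbol w}^c-\boldsymbol v=(\widetilde{\boldsymbol w}^c-\boldsymbol{\mathcal V})+(\boldsymbol{\mathcal V}-\boldsymbol v)$. The first piece is a pure spectral/aliasing error for a periodic problem; since each Fourier mode evolves unitarily (Lemma~\ref{H1lemma}), it is bounded by $\Delta p\,\|\boldsymbol{\mathcal V}(T,\cdot)\|_{H^1(I)}=\Delta p\,\|\boldsymbol{\mathcal V}(0,\cdot)\|_{H^1(I)}\lesssim\Delta p\,\|\widetilde Y(0)\|$ and does \emph{not} pick up any $e^{\lambda^\pm T}$ factor. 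The second piece is the domain-truncation error: $\boldsymbol v$ and $\boldsymbol{\mathcal V}$ agree on $[-L,L]$ at $t=0$ but differ because characteristics of the transport in \eqref{Schr-1} carry mass across $p=\pm L$, which wraps around in $\boldsymbol{\mathcal V}$ but not in $\boldsymbol v$. This is where the dynamics enter, producing $\|\boldsymbol{\mathcal V}-\boldsymbol v\|_{L^2(I)}\lesssim(e^{-L+\lambda_{\max}^+(H_1)T}+e^{-L+\lambda_{\max}^-(H_1)T})\|\widetilde Y(0)\|$ (Lemma~\ref{errest}). The final common factor $e^{p^*}\sqrt{\Delta p}$, with $p^*=\lambda_{\max}^+(H_1)T$, comes from converting the discrete normalized sum \eqref{NormInt} back to an $L^2$ bound (the estimate \eqref{L2estimate} in Proposition~\ref{RecProp}): the denominator contributes $e^{p^*}$ and the passage from grid values to $L^2$ contributes the $\sqrt{\Delta p}$. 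Multiplying out gives exactly $\Delta p^{3/2}e^{\lambda_{\max}^+T}+\Delta p^{1/2}e^{-\widetilde C(T)}$.

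Your alternative---reading $\Delta p^{3/2}$ directly from the tail $\sum_{|l|>N}\mu_l^{-4}$ and attributing $\Delta p^{1/2}e^{-L}$ to a ``spurious jump of height $e^{-L}$'' introduced by periodizing $e^{-|p|}$---does not work as stated. The periodic extension $\mathcal G$ is \emph{continuous} at $p=\pm L$ (both one-sided limits equal $e^{-L}$); what periodization creates there is only a corner, whose Fourier coefficients decay like $\eta^{-2}$, not $\eta^{-1}$, so this cannot be the source of a $\Delta p^{1/2}$ term. More importantly, any error computed purely from the initial Fourier tail is conserved by the mode-wise unitary evolution and therefore cannot acquire the time-dependent factor $e^{\lambda_{\max}^\pm(H_1)T}$ that appears inside $\widetilde C(T)$; that factor genuinely comes from the difference in \emph{evolution} between the whole-line and periodic problems, not from the initial data. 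Once you introduce $\boldsymbol{\mathcal V}$ and make this split, the accounting becomes clean and the obstacle you anticipate in your last paragraph largely disappears.
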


\begin{proof}
    Note that $\widetilde{Z} = \widetilde{Y}$ in this case. The proof follows directly from inequality \eqref{totalerror}, Lemma \ref{H1lemma}, Proposition \ref{RecProp} and Proposition \ref{Addconvrate}.
\end{proof}

Similarly, for the SDE with multiplicative Gaussian noise, we have the following result.
\begin{theorem}\label{MultThm}
    For the $d$-dimensional SDE with multiplicative Gaussian noise \eqref{MultiSDE}, suppose the recovery region $U_p$ is properly chosen, which means the solution to \eqref{Schr-1} satisfies $\boldsymbol{v}(t,p)=e^{-p}\widetilde{X}(t)$ for $p\in U_p$. Further suppose the Assumption \ref{Ass1} and \ref{Ass2} are satisfied.
    Then, the mean square norm between the recovered solution $\widetilde{X}^{\widetilde{w}}(T)$ and the solution $X(T)$ of the SDE \eqref{MultiSDE} satisfies 
    
    \begin{equation}
        \|X(T)-\widetilde{X}^{\widetilde{w}}(T)\|_2\lesssim \Delta t + \left(\Delta p^{\frac{3}{2}}e^{\lambda_{\max}^{+}(H_1)T}+ \Delta p^{\frac{1}{2}}e^{-\widetilde{C}(T)}\right) \|X(0)\|
    \end{equation}
    where 
    \begin{equation*}
        \widetilde{C}(T)=\max\{L-\lambda_{\max}^{-}(H_1)T-\lambda_{\max}^{+}(H_1)T, L-2\lambda_{\max}^{+}(H_1)T \}.
    \end{equation*}
\end{theorem}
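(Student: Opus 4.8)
The proof runs in parallel with that of Theorem \ref{AddThm}, with two simplifications coming from the fact that the multiplicative-noise scheme \eqref{QM2} is already homogeneous: the augmented vector $\widetilde Y$ is not needed, so $\widetilde Z=\widetilde X$, $\tilde d=d$, and the initial datum appearing in the recovery estimate is $\|X(0)\|=\|x_0\|$ rather than $\|\widetilde Y(0)\|$. The starting point is the splitting \eqref{totalerror},
\begin{equation*}
    \|X(T)-\widetilde{X}^{\widetilde{w}}(T)\|_2\leq \|X(T)-\widetilde{X}(T)\|_2+\sqrt{\mathbb{E}\|\widetilde{X}(T)-\widetilde{X}^{\widetilde{w}}(T)\|^2},
\end{equation*}
which separates the modelling error of the approximate equation \eqref{QM2} against \eqref{MultiSDE} from the error introduced by solving the Schr\"odingerised system \eqref{Schr-1} on a bounded $p$-interval together with the normalized integration recovery \eqref{NormInt}.

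For the first term I would invoke Proposition \ref{Multiconvrate}, which gives the strong first-order bound $\|X(T)-\widetilde{X}(T)\|_2\lesssim \Delta t$; this is exactly the mean-square consistency estimate for \eqref{QM2} and supplies the $\Delta t$ contribution. (Its own proof is carried out separately using It\^o's formula, where the correction term $-\tfrac12\sum_{l} (B^{(l)})^2$ inserted into $\widetilde A_k$ ensures that the one-step increment matches the analytic/Milstein update, followed by a Gr\"onwall-type argument that is standard for linear SDEs.) For the second term I would apply Proposition \ref{RecProp} to \eqref{Schr-1} with $z_0=x_0$: it bounds the discrepancy between $\widetilde X(T)$ and $\widetilde X^{\widetilde w}(T)$ by a term $\Delta p^{3/2}e^{\lambda_{\max}^{+}(H_1)T}$ coming from the discrete Fourier/spectral approximation of the transported profile and a term $\Delta p^{1/2}e^{-\widetilde C(T)}$ coming from truncating the tail onto $[-L,L]$, where Assumption \ref{Ass2} controls the factors $e^{-L+\cdots}$ and Assumption \ref{Ass1} fixes the ordering of the spectrum of $H_1$. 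Lemma \ref{H1lemma} then supplies the control on the (random) eigenvalues of $H_{1,k}$ needed to (i) certify that the recovery region $U_p=[p^*,p^*+R]$ with $p^*\ge\lambda_{\max}^{+}(H_1)T$ is admissible, i.e. $\boldsymbol v(t,p)=e^{-p}\widetilde X(t)$ there, and (ii) identify $\widetilde C(T)=\max\{L-\lambda_{\max}^{-}(H_1)T-\lambda_{\max}^{+}(H_1)T,\,L-2\lambda_{\max}^{+}(H_1)T\}$ as the effective exponent. Taking expectations and combining the two pieces via the triangle inequality, then absorbing constants, yields the stated estimate.

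The main obstacle is the interplay between randomness and the recovery bound. Because $H_{1,k}$ is assembled from the Brownian increments $\Delta W_k^{(l)}$, the spectral radii $\lambda_{\max}^{\pm}(H_1)$ are unbounded random variables, so one cannot simply quote a deterministic truncation length $L$; the recovery estimate must be read either conditionally on the sample path (with $L$, $R$ and $p^*$ chosen adaptively, as in the geometric Brownian motion example with its moving recovery window), or after verifying that the relevant exponential moments $\mathbb{E}\,e^{c\,\lambda_{\max}^{+}(H_1)T}$ are finite so that the expectation in $\sqrt{\mathbb{E}\|\cdot\|^2}$ is genuinely controlled. Reconciling Assumption \ref{Ass2} with this randomness — ensuring $e^{-L+2\lambda_{\max}^{+}(H_1)T}\le\epsilon$ on the event carrying the mass — is the delicate point; once it is granted, the remainder of the argument is a direct transcription of the additive-noise proof with $\widetilde Z=\widetilde X$ and $\|x_0\|$ in place of $\|\widetilde Y(0)\|$.
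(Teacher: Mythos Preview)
Your overall strategy matches the paper's proof exactly: split via \eqref{totalerror}, apply Proposition~\ref{Multiconvrate} to the first term and Proposition~\ref{RecProp} (with $\widetilde Z=\widetilde X$, $z_0=x_0$) to the second, then combine. The paper's proof is literally one sentence citing these ingredients together with Lemma~\ref{H1lemma}.

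One correction: you misread the role of Lemma~\ref{H1lemma}. It has nothing to do with controlling the eigenvalues of the random matrices $H_{1,k}$ or certifying admissibility of the recovery region. Lemma~\ref{H1lemma} is a Sobolev-$H^1(I)$ estimate for the transported profile $\boldsymbol{\mathcal V}(t_k,\cdot)$, namely $\|\boldsymbol{\mathcal V}(t_k,\cdot)\|_{H^1(I)}=\sqrt{2(1-e^{-2L})}\|z_0\|$; this is what feeds into the spectral approximation bound of Lemma~\ref{errest} and ultimately produces the $\Delta p^{3/2}\|X(0)\|$ factor inside Proposition~\ref{RecProp}. The admissibility of $U_p$ is simply taken as a hypothesis (``properly chosen'') both in the theorem statement and in Proposition~\ref{RecProp}. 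Your discussion of exponential moments of $\lambda_{\max}^{\pm}(H_1)$ is a legitimate concern about the paper's framework, but the paper does not engage with it here: the recovery estimate is stated pathwise, with Assumptions~\ref{Ass1}--\ref{Ass2} imposed as standing hypotheses on the given realisation.
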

\begin{proof}
    Note that $\widetilde{Z} = \widetilde{X}$ in this case. The proof follows directly from inequality \eqref{totalerror}, Lemma \ref{H1lemma},  Proposition \ref{RecProp} and Proposition \ref{Multiconvrate}.
\end{proof}

Following \cite{Ma2024inhomov2}, we introduce the boundary value problem:
\begin{align}\label{BVPv}
    \left\{
        \begin{aligned}
            & \frac{d }{dt} \boldsymbol{\mathcal{V}}(t,p)  = -H_{1,k} \partial_p \boldsymbol{\mathcal{V}}(t,p) +iH_{2,k}\boldsymbol{\mathcal{V}}(t,p), \quad t_k < t\leq t_{k+1},\\
            & \boldsymbol{\mathcal{V}}(t,- L)  = \boldsymbol{\mathcal{V}}(t,  L),\\
            & \boldsymbol{\mathcal{V}}(0,p)  =e^{-|p|}z_0, \quad p\in[-L,L].
        \end{aligned}
    \right.
\end{align}
Equation \eqref{BVPv} can be naturally extended to an initial value problem for $p\in (-\infty, \infty)$ via periodic extension, by setting $\boldsymbol{\mathcal{V}}(0,p)  =\mathcal{G}(p)z_0$. Here $\mathcal{G}$ is a periodic function satisfying 
\begin{equation}
    \mathcal{G}(p)=e^{-|p-2j L |} \quad (2j-1)  L \leq p <(2j+1) L.
\end{equation}

\begin{lemma}\label{H1lemma}
    For $\boldsymbol{\mathcal{V}}$ governed by  equation \eqref{BVPv}, the $H^1$-estimate of $\boldsymbol{\mathcal{V}}$ satisfies
    \begin{equation*}
        \|\boldsymbol{\mathcal{V}}(t_k,\cdot)\|_{H^1(I)}= \sqrt{2(1-e^{-2L})}\|z_0\|., \quad 0\leq k\leq N_T.
    \end{equation*}
\end{lemma}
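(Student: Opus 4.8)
The plan is to prove the claimed $H^1$-identity by computing the $L^2$-norm and the $L^2$-norm of $\partial_p \boldsymbol{\mathcal{V}}$ separately, exploiting the fact that the evolution in \eqref{BVPv} is ``norm-preserving up to the symmetric/Hermitian splitting.'' First I would observe that, because $H_{1,k}$ is Hermitian and $iH_{2,k}$ is anti-Hermitian, the transport-type operator $-H_{1,k}\partial_p$ acting on the periodic interval $[-L,L]$ is skew-adjoint (integration by parts produces no boundary terms thanks to the periodicity condition $\boldsymbol{\mathcal{V}}(t,-L)=\boldsymbol{\mathcal{V}}(t,L)$), and likewise $iH_{2,k}$ is skew-adjoint as a multiplication operator. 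Hence $\frac{d}{dt}\|\boldsymbol{\mathcal{V}}(t,\cdot)\|_{L^2(I)}^2 = 0$ on each subinterval $(t_k,t_{k+1}]$, so the $L^2$-norm is conserved for all $t$. The same argument applies to $\partial_p\boldsymbol{\mathcal{V}}$: differentiating \eqref{BVPv} in $p$ and noting that $H_{1,k}$, $H_{2,k}$ do not depend on $p$, the quantity $\partial_p\boldsymbol{\mathcal{V}}$ solves the same linear equation with the same periodic boundary condition, so $\|\partial_p\boldsymbol{\mathcal{V}}(t,\cdot)\|_{L^2(I)}^2$ is also conserved in time.

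Next I would evaluate both conserved quantities at $t=0$. At $t=0$ we have $\boldsymbol{\mathcal{V}}(0,p)=e^{-|p|}z_0$ on $[-L,L]$, so $\|\boldsymbol{\mathcal{V}}(0,\cdot)\|_{L^2(I)}^2 = \|z_0\|^2\int_{-L}^{L} e^{-2|p|}\,dp = \|z_0\|^2(1-e^{-2L})$. For the derivative term one must be slightly careful: $e^{-|p|}$ has a kink at $p=0$, and its distributional derivative is $-\operatorname{sgn}(p)e^{-|p|}$, whose square is again $e^{-2|p|}$ away from the measure-zero set $\{0\}$; so $\|\partial_p\boldsymbol{\mathcal{V}}(0,\cdot)\|_{L^2(I)}^2 = \|z_0\|^2\int_{-L}^{L}e^{-2|p|}\,dp = \|z_0\|^2(1-e^{-2L})$ as well. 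Adding the two pieces gives $\|\boldsymbol{\mathcal{V}}(t_k,\cdot)\|_{H^1(I)}^2 = 2(1-e^{-2L})\|z_0\|^2$, i.e.\ the stated identity after taking square roots.

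The step I expect to require the most care is the rigorous justification that the $H^1$ (equivalently the $L^2$ of $\partial_p$) energy is genuinely conserved across the interface times $t_k$ and, relatedly, that $\partial_p\boldsymbol{\mathcal{V}}$ remains a legitimate $L^2$ function for $t>0$ even though the initial datum is only Lipschitz (the corner at $p=0$). Since $H_{1,k}$ is Hermitian it is diagonalizable by a unitary $U_k$, and in the eigenbasis the equation decouples into scalar transport equations $\partial_t \mathcal{V}^{(j)} = -\lambda_j^{(k)}\partial_p\mathcal{V}^{(j)} + (\text{bounded multiplication})$, whose solutions are (up to a bounded multiplicative factor coming from $iH_{2,k}$, which is itself handled by the skew-adjointness argument above) rigid translations of the initial profile; translations preserve both $\|\cdot\|_{L^2}$ and $\|\partial_p(\cdot)\|_{L^2}$ on the torus, and the corner simply propagates without affecting the norms. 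Feeding $\boldsymbol{\mathcal{V}}(t_k,\cdot)$ as the initial datum for the next subinterval, the norm at $t_{k+1}$ equals the norm at $t_k$, and by induction equals the norm at $t=0$; this is exactly why the identity holds at every mesh point $t_k$, $0\le k\le N_T$. One should also note that the periodic extension via $\mathcal{G}$ does not introduce additional corners beyond the already-present ones at the odd multiples of $L$, and these contribute nothing to the interior $I=[-L,L]$ norm; alternatively, since the exact statement only concerns $\|\cdot\|_{H^1(I)}$ on one period, periodicity of $\boldsymbol{\mathcal{V}}$ in $p$ makes the skew-adjointness integration-by-parts clean with no boundary contributions.
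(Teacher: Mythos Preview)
Your argument is correct but takes a different route from the paper. The paper works entirely in Fourier space: expanding $\boldsymbol{\mathcal{V}}(t,\cdot)$ in its Fourier series on $[-L,L]$, each coefficient $\hat{\boldsymbol{\mathcal{V}}}_l$ obeys the finite-dimensional ODE $\tfrac{d}{dt}\hat{\boldsymbol{\mathcal{V}}}_l = -i\bigl(\tfrac{l\pi}{L}H_{1,k}-H_{2,k}\bigr)\hat{\boldsymbol{\mathcal{V}}}_l$, whose propagator is unitary since $H_{1,k},H_{2,k}$ are Hermitian; hence $\|\hat{\boldsymbol{\mathcal{V}}}_l(t_{k+1})\|=\|\hat{\boldsymbol{\mathcal{V}}}_l(t_k)\|$ mode by mode, and the $H^1$ norm $\sum_l\bigl(1+(l\pi/L)^2\bigr)\|\hat{\boldsymbol{\mathcal{V}}}_l\|^2$ is automatically conserved. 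The computation of the initial $H^1$ norm is then the same as yours. The Fourier approach buys a clean bypass of exactly the regularity issue you flag: because conservation is established at the level of finite-dimensional unitary ODEs, there is no need to differentiate the PDE in $p$ or worry about the corner at $p=0$. Your physical-space energy method (skew-adjointness of $-H_{1,k}\partial_p+iH_{2,k}$ on the periodic interval, applied to both $\boldsymbol{\mathcal{V}}$ and $\partial_p\boldsymbol{\mathcal{V}}$) is equally valid and perhaps more intuitive; one small imprecision is that diagonalizing $H_{1,k}$ does not fully decouple the system since $U_k^\ast H_{2,k}U_k$ is generally non-diagonal, but this does not affect your norm-conservation conclusion.
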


\begin{proof}
    The Fourier series for $\boldsymbol{\mathcal{V}}(t,\cdot)$ is 
    \begin{equation*}
        \boldsymbol{\mathcal{V}}(t,p)=\sum_{l=-\infty}^{\infty} \hat{\boldsymbol{\mathcal{V}}}_l(t) e^{\frac{i l \pi }{L} p}, \quad \hat{\boldsymbol{\mathcal{V}}}_l(t) = \frac{1}{2L} \int_{-L}^{L} \boldsymbol{\mathcal{V}}(t,p)e^{-\frac{i l \pi }{L} p} dp.
    \end{equation*}
    For every $l \in \mathbb{Z}$, by equation \eqref{BVPv}, we have
    \begin{equation*}
        \frac{d}{dt} \hat{\boldsymbol{\mathcal{V}}}_l(t) = -\frac{i l \pi}{L} H_{1,k} \hat{\boldsymbol{\mathcal{V}}}_l(t) + i H_{2,k} \hat{\boldsymbol{\mathcal{V}}}_l(t), \quad t_k < t \leq t_{k+1}.
    \end{equation*}
    As matrices $H_{1,k}$ and $H_{2,k}$ are Hermite matrices, the solution operator $\exp(-i (\frac{ l \pi}{L} H_{1,k}-H_{2,k})\Delta t)$ is a unitary matrix. It means 
    \begin{equation*}
        \|\hat{\boldsymbol{\mathcal{V}}}_l(t_{k+1})\|^2 = \|\exp(-i (\frac{ l \pi}{L} H_{1,k}-H_{2,k})\Delta t)\hat{\boldsymbol{\mathcal{V}}}_l(t_{k})\|^2 = \|\hat{\boldsymbol{\mathcal{V}}}_l(t_{k})\|^2.
    \end{equation*}
    Then, 
    \begin{align*}
        \|\boldsymbol{\mathcal{V}}(t_{k+1},\cdot)\|^2_{H^1(I)} & = \sum_{l=-\infty}^{\infty} \left[1+(\frac{l\pi}{L})^2\right]\left\|\hat{\boldsymbol{\mathcal{V}}}_l(t_{k+1})\right\|^2 \\
        & = \sum_{l=-\infty}^{\infty} \left[1+(\frac{l\pi}{L})^2\right]\left\|\hat{\boldsymbol{\mathcal{V}}}_l(t_{k})\right\|^2 = \|\boldsymbol{\mathcal{V}}(t_k,\cdot)\|^2_{H^1(I)}.
    \end{align*}
    For $\|\boldsymbol{\mathcal{V}}(0,\cdot)\|_{H^1(I)}$, we know
    \begin{equation*}
        \|\boldsymbol{\mathcal{V}}(0,\cdot)\|_{H^1(\mathbb{R})}=\|z_0\|\sqrt{\int_{-L}^L e^{-2|p|}+\left|\frac{\partial}{\partial p}e^{-|p|}\right|^2 dp}=\sqrt{2(1-e^{-2L})}\|z_0\|.
    \end{equation*}

\end{proof}

For $0\leq k\leq N_T $, recall that $\boldsymbol{\widetilde{w}}^c(t_k,p)=[\widetilde{w}_1(t,p), \cdots, \widetilde{w}_{\tilde{d}}(t_k,p)]$ where $\widetilde{w}_h(t_k,p)=\sum_{l=1}^{2N} c_l^{(h)}(t)e^{i \mu_l (p+L)}$ with $c_l^{(h)}$ solved in equation \eqref{DisF}. The following lemma follows from \cite[Lemma 4.3]{Ma2024inhomov2}. It shows the $L^2$ error between $\widetilde{\boldsymbol{w}}^c$ and $\boldsymbol{v}$. 
\begin{lemma}\label{errest}
    Suppose the Assumptions \ref{Ass1} and \ref{Ass2} are satisfied.
    For the numerical solution $\widetilde{\boldsymbol{w}}^c$ defined by the solution $\widetilde{\boldsymbol{w}}$ of equation \eqref{Numw}, and the solution $\boldsymbol{v} $ of equation \eqref{Schr-1} on $[0,T]\times I$, it holds 
    \begin{equation*}
        \|\boldsymbol{v}(t_k,\cdot)-\boldsymbol{\mathcal{V}}(t_k,\cdot)\|_{L^2(I)} \lesssim (e^{- L + \lambda_{\max}^{+}(H_1)t_k }+e^{- L + \lambda_{\max}^{-}(H_1)t_k})\|z_0\|,
    \end{equation*}
    and 
    \begin{equation*}
        \|\widetilde{\boldsymbol{w} }^c(t_k,\cdot)-\boldsymbol{\mathcal{V}}(t_k,\cdot)\|_{L^2(I)} \lesssim \Delta p\|\boldsymbol{\mathcal{V}}(t_k,\cdot)\|_{H^1(I)}.
    \end{equation*}
\end{lemma}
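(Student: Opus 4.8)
The plan is to derive both estimates as the pathwise specialisations — to the present random Hermitian matrices $H_{1,k},H_{2,k}$ — of the two bounds in \cite[Lemma 4.3]{Ma2024inhomov2}. First I would fix a realisation of the Brownian increments, so that every $H_{1,k},H_{2,k}$ is a deterministic Hermitian matrix; Assumption \ref{Ass1} then pins down the ordering of $\sigma(H_1(t))$ and Assumption \ref{Ass2} ensures the window $I=[-L,L]$ is wide enough that the $e^{-|p|}$-profile has not leaked appreciably across $\pm L$ by time $T$. Since all matrices appearing are Hermitian along each sample path and Assumptions \ref{Ass1}--\ref{Ass2} hold, the hypotheses of \cite[Lemma 4.3]{Ma2024inhomov2} are met realisation by realisation, and the estimates follow by applying it pathwise; the rest of the proposal sketches why that lemma yields these two particular bounds.

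For the first estimate I would set $\boldsymbol{e}=\boldsymbol{v}-\boldsymbol{\mathcal{V}}$. Both $\boldsymbol{v}$ and $\boldsymbol{\mathcal{V}}$ solve the same homogeneous, piecewise-constant-in-time, transport-type system (those of \eqref{Schr-1} and \eqref{BVPv}), the latter extended to $\mathbb{R}$ via the $2L$-periodic datum $\mathcal{G}(p)z_0$, so $\boldsymbol{e}$ solves it too with initial datum $(e^{-|p|}-\mathcal{G}(p))z_0$, which vanishes on $I$ and is bounded in norm by $(e^{-|p|}+\mathcal{G}(p))\|z_0\|$ off $I$. On each subinterval $(t_k,t_{k+1}]$ I would diagonalise $H_{1,k}=Q_k\Lambda_k Q_k^{\dagger}$: in the $Q_k$-frame the principal part decouples into one-way transports at speeds $-\lambda_j(H_{1,k})$, while $iH_{2,k}$ contributes only a bounded, $L^2$-isometric mixing. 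Finite speed of propagation then confines the portion of $\boldsymbol{e}(t_k,\cdot)$ that can reach $I$ to data originating in strips of width $\lambda_{\max}^{+}(H_1)t_k$, resp.\ $\lambda_{\max}^{-}(H_1)t_k$, just outside $\pm L$, where $e^{-|p|}$ and $\mathcal{G}(p)$ are $\lesssim e^{-L+\lambda_{\max}^{+}(H_1)t_k}$, resp.\ $\lesssim e^{-L+\lambda_{\max}^{-}(H_1)t_k}$; taking $\|\cdot\|_{L^2(I)}$ and carrying the factor $\|z_0\|$ yields the claim. This is exactly the mechanism of \cite[Lemma 4.3]{Ma2024inhomov2}, now run for a fixed sample path.

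For the second estimate the structural fact I would exploit is that the momentum operator $P_{\boldsymbol{\mu}}=\Phi D_{\boldsymbol{\mu}}\Phi^{-1}$ is diagonal in the discrete Fourier basis, so on each subinterval the finite-dimensional propagator $e^{-i(\widetilde{H}_{1,k}-\widetilde{H}_{2,k})\Delta t}$ acts on the $l$-th discrete Fourier mode exactly as $e^{-i(\mu_l H_{1,k}-H_{2,k})\Delta t}$ acts on the $l$-th Fourier coefficient of $\boldsymbol{\mathcal{V}}$ (the mode equation appearing in the proof of Lemma \ref{H1lemma}); hence the band-limited subspace $\mathrm{span}\{|l\rangle:|l|<N\}$ is invariant under both the exact and the discrete evolution, and the scheme introduces no time-discretisation error. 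Splitting $\boldsymbol{\mathcal{V}}=\boldsymbol{\mathcal{V}}^{\mathrm{low}}+\boldsymbol{\mathcal{V}}^{\mathrm{high}}$ at the cutoff $|l|=N$, the low part evolves isometrically and agrees with $\widetilde{\boldsymbol{w}}^c$ up to the $t=0$ aliasing error, which the standard trigonometric-interpolation estimate bounds by $\lesssim\Delta p\,\|\boldsymbol{\mathcal{V}}(0,\cdot)\|_{H^1(I)}$, while $\|\boldsymbol{\mathcal{V}}^{\mathrm{high}}(t_k,\cdot)\|_{L^2(I)}\lesssim\Delta p\,\|\boldsymbol{\mathcal{V}}(t_k,\cdot)\|_{H^1(I)}$ since $\Delta p=L/N$ and each Fourier mode is conserved in modulus; using Lemma \ref{H1lemma} to identify $\|\boldsymbol{\mathcal{V}}(0,\cdot)\|_{H^1(I)}=\|\boldsymbol{\mathcal{V}}(t_k,\cdot)\|_{H^1(I)}$, the pieces assemble to the stated bound.

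I expect the first estimate to be the real obstacle: one must track how the exponentially small periodization mismatch living in the tails $|p|>L$ is transported into $I$ by a system whose principal-part speeds are the eigenvalues of $H_{1,k}$ and whose zeroth-order term $iH_{2,k}$ does not commute with $H_{1,k}$, so the clean decoupled-transport picture holds only up to an $L^2$-isometric rotation; separating the $\lambda_{\max}^{+}$ and $\lambda_{\max}^{-}$ contributions as sharply as the stated bound demands is where the work lies, and this is precisely the content of \cite[Lemma 4.3]{Ma2024inhomov2} once its hypotheses are verified pathwise under Assumptions \ref{Ass1}--\ref{Ass2}.
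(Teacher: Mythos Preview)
Your approach is correct and matches the paper's: the paper does not prove this lemma at all but simply states that it ``follows from \cite[Lemma 4.3]{Ma2024inhomov2},'' which is precisely your strategy of verifying the hypotheses pathwise (fixed realisation, Hermitian $H_{1,k},H_{2,k}$, Assumptions \ref{Ass1}--\ref{Ass2}) and invoking that cited result. Your sketch of the underlying mechanism for each estimate goes well beyond what the paper itself supplies.
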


The following proposition gives the error rate for recovered solution $\widetilde{Z}^{\widetilde{w}}$ of \eqref{NormInt}. 

\begin{proposition}[Recovery error analysis]\label{RecProp}
    Suppose the recovery region $U_p=[p^*,p^*+R)$ is properly chosen, which means the solution to \eqref{Schr-1} satisfies $\boldsymbol{v}(t,p)=e^{-p}\widetilde{Z}(t)$ for $p\in U_p$. Further assume the Assumption \ref{Ass1} and \ref{Ass2} are satisfied. 
    Then, for the recovered solution $\widetilde{Z}^{\widetilde{w}}(t_k)$ and the solution $\widetilde{Z}(t_k)$ of the approximate equation \eqref{genForm}, we have 
    
    \begin{equation}\label{rerror}
        \|\widetilde{Z}^{\widetilde{w}}(t_k)-\widetilde{Z}(t_k)\|\lesssim \left(\Delta p^{\frac{3}{2}}e^{\lambda_{\max}^{+}(H_1)T}+ \Delta p^{\frac{1}{2}}e^{-\widetilde{C}(t_k)}\right) \|\widetilde{Z}(0)\|,
    \end{equation}
    where 
    \begin{equation}
        \widetilde{C}(t_k)=\max\{L-\lambda_{\max}^{-}(H_1)t_k-\lambda_{\max}^{+}(H_1)T, L-\lambda_{\max}^{+}(H_1)t_k-\lambda_{\max}^{+}(H_1)T \}.
    \end{equation}
    
\end{proposition}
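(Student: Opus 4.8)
The plan is to reduce the discrete normalized integration \eqref{NormInt} to the idealized continuous recovery \eqref{NormIntExpli}, which by hypothesis returns $\widetilde{Z}(t_k)$ exactly, and then to estimate the difference using Lemma \ref{errest} and Lemma \ref{H1lemma}.

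First I would exploit the assumption that $U_p$ is properly chosen, so that $\boldsymbol{v}(t_k,p)=e^{-p}\widetilde{Z}(t_k)$ for $p\in U_p$. This gives $\sum_{p_j\in U_p}\boldsymbol{v}(t_k,p_j)\Delta p=\widetilde{Z}(t_k)\sum_{p_j\in U_p}e^{-p_j}\Delta p$, hence
\begin{equation*}
    \widetilde{Z}^{\widetilde{w}}(t_k)-\widetilde{Z}(t_k)=\frac{\sum_{p_j\in U_p}\big(\widetilde{\boldsymbol{w}}^c-\boldsymbol{v}\big)(t_k,p_j)\,\Delta p}{D},\qquad D:=\sum_{p_j\in U_p}e^{-p_j}\,\Delta p,
\end{equation*}
so the recovery error is entirely controlled by the discrepancy between the computed band-limited field $\widetilde{\boldsymbol{w}}^c$ and the exact $\boldsymbol{v}$ on the window $U_p$, divided by the strictly positive weight $D$. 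Since $D$ is a Riemann sum for $\int_{U_p}e^{-p}\,dp=e^{-p^*}(1-e^{-R})$ of a smooth monotone integrand, $D$ is comparable to $e^{-p^*}$; and because the normalized recovery of \cite[Theorem 3.1]{Ma2024inhomov2} is admissible precisely for $p^*\ge\lambda_{\max}^{+}(H_1)T$ (and one takes $p^*$ of that order), this yields $D^{-1}\lesssim e^{\lambda_{\max}^{+}(H_1)T}$ --- the source of the amplification factor in the statement. (The quadrature error $D-\int_{U_p}e^{-p}dp$ only perturbs this by $\mathcal{O}(\Delta p^{2}e^{\lambda_{\max}^{+}(H_1)T})$, which is absorbed into the first term of \eqref{rerror}.)

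Next I would bound the numerator. Inserting the periodic reference solution $\boldsymbol{\mathcal{V}}$ of \eqref{BVPv}, write $\widetilde{\boldsymbol{w}}^c-\boldsymbol{v}=(\widetilde{\boldsymbol{w}}^c-\boldsymbol{\mathcal{V}})+(\boldsymbol{\mathcal{V}}-\boldsymbol{v})$. A Cauchy--Schwarz step over the $\lesssim R/\Delta p$ grid nodes in $U_p$, together with the facts that the discrete $\ell^2_{\Delta p}$-norm of the band-limited interpolant $\widetilde{\boldsymbol{w}}^c$ equals its $L^2(I)$-norm (Parseval) and that, for the smooth exact fields $\boldsymbol{\mathcal{V}}$ and $\boldsymbol{v}$, a sampling/trace inequality on $U_p$ controls the nodal $\ell^2$-norm by the $L^2$-norm plus $\Delta p$ times the $H^1$-norm, lets me pass to the estimates of Lemma \ref{errest}: the interpolation piece is $\lesssim\Delta p\,\|\boldsymbol{\mathcal{V}}(t_k,\cdot)\|_{H^1(I)}$ and the boundary-truncation piece is $\lesssim(e^{-L+\lambda_{\max}^{+}(H_1)t_k}+e^{-L+\lambda_{\max}^{-}(H_1)t_k})\|\widetilde{Z}(0)\|$. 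Lemma \ref{H1lemma} then replaces $\|\boldsymbol{\mathcal{V}}(t_k,\cdot)\|_{H^1(I)}$ by $\sqrt{2(1-e^{-2L})}\,\|\widetilde{Z}(0)\|\lesssim\|\widetilde{Z}(0)\|$; the extra $\Delta p^{1/2}$ factor in \eqref{rerror} is exactly the cost of the Cauchy--Schwarz conversion from the grid sum over $U_p$ back to an $L^2$-norm.

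Finally I would assemble: multiplying the two numerator contributions by $D^{-1}\lesssim e^{\lambda_{\max}^{+}(H_1)T}$ produces the term $\Delta p^{3/2}e^{\lambda_{\max}^{+}(H_1)T}\|\widetilde{Z}(0)\|$ from interpolation and a term proportional to $\Delta p^{1/2}\big(e^{-L+\lambda_{\max}^{+}(H_1)t_k+\lambda_{\max}^{+}(H_1)T}+e^{-L+\lambda_{\max}^{-}(H_1)t_k+\lambda_{\max}^{+}(H_1)T}\big)\|\widetilde{Z}(0)\|$ from truncation, which collapses to $\Delta p^{1/2}e^{-\widetilde{C}(t_k)}\|\widetilde{Z}(0)\|$ with $\widetilde{C}(t_k)$ as defined in the statement; summing the two gives \eqref{rerror}. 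I expect the main obstacle to be the conversion from the $L^2(I)$ estimates of Lemma \ref{errest} to the pointwise grid sum in \eqref{NormInt} --- one must argue carefully that restricting to the nodes inside $U_p$ loses only the anticipated powers of $\Delta p$ (and constants depending on $R$), using that $\widetilde{\boldsymbol{w}}^c$ is band-limited while $\boldsymbol{\mathcal{V}}$ and $\boldsymbol{v}$ are smooth on $U_p$. A secondary delicate point is pinning $p^*$ to be of order $\lambda_{\max}^{+}(H_1)T$: both the admissibility of the recovery region and the magnitude of the amplification factor $e^{\lambda_{\max}^{+}(H_1)T}$ hinge on this choice.
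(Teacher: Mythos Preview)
Your proposal is correct and follows essentially the same route as the paper. Both arguments first rewrite the recovery error as the normalized grid sum of $\widetilde{\boldsymbol{w}}^c-\boldsymbol{v}$ over $U_p$, estimate the denominator by a geometric sum giving the factor $e^{p^*}\approx e^{\lambda_{\max}^+(H_1)T}$, convert the numerator's discrete $\ell^2$ norm into $\|\widetilde{\boldsymbol{w}}^c-\boldsymbol{v}\|_{L^2(I)}$ (producing the $\Delta p^{1/2}$), and then invoke Lemma~\ref{errest} together with Lemma~\ref{H1lemma}.

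The only organizational difference is in the grid-to-$L^2$ conversion. You split $\widetilde{\boldsymbol{w}}^c-\boldsymbol{v}=(\widetilde{\boldsymbol{w}}^c-\boldsymbol{\mathcal{V}})+(\boldsymbol{\mathcal{V}}-\boldsymbol{v})$ at the discrete level and propose to handle the pieces separately (Parseval for the band-limited part, a sampling/trace inequality for the smooth remainder). The paper instead first enlarges the sum over $U_p$ to the full grid $\{p_j\}_{j=1}^{2N}$ and then applies a single aliasing identity: for any $f(p)=\sum_{l}\hat u_l e^{i\mu_l(p+L)}$ one has $\sum_{j=1}^{2N}|f(p_j)|^2\Delta p\lesssim\|f\|_{L^2(I)}^2$, which works uniformly for $\widetilde{w}_h-v_h$ without splitting. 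This yields the clean intermediate bound
\[
\|\widetilde{Z}^{\widetilde{w}}(t_k)-\widetilde{Z}(t_k)\|^2\lesssim \frac{e^{2p^*}\Delta p}{(1-e^{-R})^2}\,\|\widetilde{\boldsymbol{w}}^c(t_k,\cdot)-\boldsymbol{v}(t_k,\cdot)\|_{L^2(I)}^2,
\]
after which the splitting through $\boldsymbol{\mathcal{V}}$ happens only once, via Lemma~\ref{errest}. The paper's route is slightly more economical, but yours reaches the same estimate.
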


\begin{proof}
    We only need to show 
    
    \begin{equation}\label{L2estimate}
        \|\widetilde{Z}^{\widetilde{w}}(t_k)-\widetilde{Z}(t_k)\|^2 \lesssim \frac{e^{2p^*}\Delta p}{ (1-e^{-R})^2} \|\widetilde{\boldsymbol{w} }^c(t_k,\cdot)-\boldsymbol{v}(t_k,\cdot)\|^2_{L^2(I)}.
    \end{equation}
    
    According to \cite[Theorem 3.1]{Ma2024inhomov2}, it suffices to take $p^*=\lambda_{\max}^{+}(H_1)T$ and $R \gg 1$. By Lemma \ref{H1lemma} and Lemma \ref{errest}, we get the desired estimate \eqref{rerror}. 

   Now we continue to prove estimate \eqref{L2estimate}. As $\boldsymbol{v}(t,p)=e^{-p}\widetilde{Z}(t)$ for $p\in U_p$, it is clear that 
    \begin{equation*}
        \widetilde{Z}(t_k)=\frac{\int_{U_p}\boldsymbol{v}(t_k,p) dp}{\int_{U_p} e^{-p} dp}=\frac{ \sum_{p_j\in U_p}  \boldsymbol{v}(t_k,p_j)\Delta p}{\sum_{p_j\in U_p}  e^{-p_j}\Delta p}.
    \end{equation*}

    For the solution $\widetilde{Z}(t)$ of the approximate equation \eqref{genForm} and the recovered solution $\widetilde{Z}^{\widetilde{w}}(t)$ of equation \eqref{NormInt}, we have
    \begin{align*}
        \|\widetilde{Z}^{\widetilde{w}}(t_k)-\widetilde{Z}(t_k)\|^2 & = \sum_{h=1}^{\tilde{d}}|\widetilde{Z}_h^{\widetilde{w}}(t_k)-\widetilde{Z}_h(t_k)|^2 \\
        & \leq \frac{2}{\big(\sum_{p_j \in U_p}e^{-p_j}\big)^2}\sum_{h=1}^{\tilde{d}}\sum_{p_j \in U_p} \big(\widetilde{w}_h(t_k,p_j)-v_h(t_k,p_j)\big)^2.
    \end{align*}
    Choose $U_p=[p^*,p^{*}+R)=[j_1 \Delta p,j_2 \Delta p$), then
    \begin{equation*}
        \frac{1}{\sum_{p_j \in U_p}e^{-p_j}} =\frac{1}{\sum_{j=j_1}^{j_2-1}e^{-p_j}} =\frac{1-e^{-\Delta p}}{e^{-j_1 \Delta p}-e^{-j_2 \Delta p}}
    \end{equation*}
    For $\mu_l=\pi(l-N-1)/L $ and $p_j=-L+(j-1)\Delta p$, a direct computation shows
    \begin{equation*}
        \int_{-L}^{L}\left(\sum_{l=1}^{2N}\hat{u}_le^{i\mu_l (p+L)}\right)^2dp=2L\sum_{l=1}^{2N}\hat{u}_l^2=\sum_{j=1}^{2N} \left(\sum_{l=1}^{2N}\hat{u}_le^{i\mu_l( p_j+L)}\right)^2\Delta p,
    \end{equation*}
    and
    \begin{align*}
        \frac{1}{2L}\sum_{j=1}^{2N}\Bigl(\sum_{l=-\infty}^{\infty}\hat{u}_l e^{i\mu_l (p_j+L)}\Bigr)^2\Delta p & =\frac{1}{2L}\sum_{j=1}^{2N}\Bigl(\sum_{l=1}^{2N}\sum_{\tau=-\infty}^{\infty}(\hat{u}_{l+2\tau N})e^{i\mu_l (p_j+L)}\Bigr)^2\Delta p \\
        & =\sum_{l=1}^{2N}\Bigl(\sum_{\tau=-\infty}^{\infty}\hat{u}_{l+2\tau N}\Bigr)^2 \\
        & \leq 2 \sum_{l=-\infty}^{\infty} |\hat{u}_l|^2=2\Bigl\|\sum_{l=-\infty}^{\infty}\hat{u}_k e^{i\mu_l (p+L)}\Bigr\|_{L^2(I)}^2.
    \end{align*}
    Then, we have 
    \begin{align*}
        \sum_{p_j \in U_p} \big(\widetilde{w}_h(t_k,p_j)-v_h(t_k,p_j)\big)^2 \Delta p & \leq \sum_{j=1}^{2N} \big(\widetilde{w}_h(t_k,p_j)-v_h(t_k,p_j)\big)^2 \Delta p \\
        & \leq 4L \|\widetilde{w}_h(t_k,\cdot)-v_h(t_k,\cdot)\|^2_{L^2(I)}.
    \end{align*}
    Here we used the fact that 
    \begin{equation*}
        v_h(t_k,p_j)=\sum_{l=-\infty}^{\infty}\hat{v}_l^{(h)}(t_k)e^{i\mu_l(p_j+L)}, \quad \hat{v}_l^{(h)}(t_k)=\frac{1}{2L}\int_{-L}^L v_h(t_k,p)e^{-i\mu_l(p+L)}dp,
    \end{equation*}
    and 
    \begin{equation*}
        \widetilde{w}_h(t_k,p_j)=\sum_{l=1}^{2N} c_l^{(h)}(t)e^{i \mu_l (p_j+L)}=\sum_{l=-\infty}^{\infty}c_l^{(h)}(t)e^{i \mu_l (p_j+L)}, \textrm{ with } c_l^{(h)}=0 \textrm{ for } l \neq 1, \cdots, 2N.
    \end{equation*}
    We know       
    \begin{align*}
        \|\widetilde{Z}^{\widetilde{w}}(t_k)-\widetilde{Z}(t_k)\|^2 & =\sum_{h=1}^{\tilde{d}}|\widetilde{Z}_h^{\widetilde{w}}(t_k)-\widetilde{Z}_h(t_k)|^2 \\
        & \leq \frac{2}{\big(\sum_{p_j \in U_p}e^{-p_j}\big)^2}\sum_{h=1}^{\tilde{d}}\sum_{p_j \in U_p} \big(\widetilde{w}_h(t_k,p_j)-v_h(t_k,p_j)\big)^2 \\
         & \leq  \frac{8L(1-e^{-\Delta p})^2}{\Delta p(e^{-j_1 \Delta p}-e^{-j_2 \Delta p})^2}  \sum_{h=1}^{\tilde{d}}\|\widetilde{w}_h(t_k,\cdot)-v_h(t_k,\cdot)\|^2_{L^2(I)}\\ 
        & \lesssim \frac{e^{2p^*}\Delta p}{(1-e^{-R})^2} \|\widetilde{\boldsymbol{w} }^c(t_k,\cdot)-\boldsymbol{v}(t_k,\cdot)\|^2_{L^2(I)}.
    \end{align*}
    
\end{proof}

The following proposition shows the convergence rate for the one-dimension SDE:
\begin{equation}\label{1d}
    dX(t) = aX(t)dt+rdW_t, \quad X(0)=x_0.
\end{equation}
\begin{proposition}\label{Addconvproof1d}
   In one-dimension, the estimate 
   \begin{equation*}
    \|\widetilde{X}(T)-X(T)\|_2 \leq C_1 \Delta t,
   \end{equation*}
   holds for the solution $\widetilde{X}$ of the approximate equation \eqref{NumForm}, where $X$ is the solution of the SDE \eqref{1d} and $C_1={|r|\sqrt{2a(e^{2aT}-1)} }/{4}$.
\end{proposition}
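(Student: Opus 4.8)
The plan is to put both $X(T)$ and $\widetilde{X}(T)$ in closed form, subtract, and reduce $\|\widetilde{X}(T)-X(T)\|_2$ to a single deterministic $L^2$-estimate through the It\^o isometry. By variation of constants, the solution of \eqref{1d} is $X(T)=e^{aT}x_0+r\int_0^T e^{a(T-s)}\,dW_s$, while iterating the one-step map \eqref{ApproxNum} --- which solves the linear ODE \eqref{NumForm} exactly on each subinterval --- gives, using $N_T\Delta t=T$,
\begin{equation*}
  \widetilde{X}(T)=e^{aT}x_0+\frac{r(e^{a\Delta t}-1)}{a\Delta t}\sum_{k=0}^{N_T-1}e^{a(T-t_{k+1})}\Delta W_k .
\end{equation*}
Since $\mathbb{E}\,\Delta W_k=0$, both solutions share the deterministic mean $e^{aT}x_0$, so $\|\widetilde{X}(T)-X(T)\|_2^2$ equals the variance of the difference.

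I would then write $\Delta W_k=\int_{t_k}^{t_{k+1}}dW_s$ and collect the two contributions into one It\^o integral with a piecewise-deterministic integrand,
\begin{equation*}
  X(T)-\widetilde{X}(T)=r\sum_{k=0}^{N_T-1}\int_{t_k}^{t_{k+1}}\Big(e^{a(T-s)}-\tfrac{e^{a\Delta t}-1}{a\Delta t}\,e^{a(T-t_{k+1})}\Big)\,dW_s ,
\end{equation*}
so that the It\^o isometry yields $\mathbb{E}|X(T)-\widetilde{X}(T)|^2=r^2\sum_{k=0}^{N_T-1}e^{2a(T-t_{k+1})}\int_0^{\Delta t}g(u)^2\,du$, where, after the substitution $s=t_k+u$, the local defect $g(u)=e^{a(\Delta t-u)}-\tfrac{e^{a\Delta t}-1}{a\Delta t}$ is the same on every subinterval.

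The core of the proof is the local estimate of $\int_0^{\Delta t}g(u)^2\,du$. The decisive point is the consistency identity $\tfrac1{\Delta t}\int_0^{\Delta t}e^{a(\Delta t-u)}\,du=\tfrac{e^{a\Delta t}-1}{a\Delta t}$: the subtracted constant is exactly the average of $e^{a(\Delta t-u)}$ over $[0,\Delta t]$, hence $\int_0^{\Delta t}g\,du=0$, and it is this cancellation that upgrades a naive $O(\Delta t)$ pointwise defect to an $O(\Delta t^{3})$ variance contribution. A direct integration gives $\int_0^{\Delta t}g^2\,du=\tfrac{e^{2a\Delta t}-1}{2a}-\tfrac{(e^{a\Delta t}-1)^2}{a^2\Delta t}$, which (using $e^{2a\Delta t}-1=(e^{a\Delta t}-1)(e^{a\Delta t}+1)$ and $\tanh z=\tfrac{e^z-1}{e^z+1}$ at $z=a\Delta t/2$) rewrites as $\tfrac{e^{2a\Delta t}-1}{2a}\big(1-\tfrac{2}{a\Delta t}\tanh\tfrac{a\Delta t}{2}\big)$. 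Since $\tfrac{e^{2a\Delta t}-1}{2a}>0$ and $y\mapsto 1-\tfrac{2}{y}\tanh\tfrac y2$ is even in $y$, it suffices to bound the bracket for $y=a\Delta t>0$, where the elementary inequality $z-\tfrac{z^3}{3}\le\tanh z\le z$ ($z\ge0$) gives $0\le 1-\tfrac{2}{a\Delta t}\tanh\tfrac{a\Delta t}{2}\le\tfrac{(a\Delta t)^2}{12}$, whence $\int_0^{\Delta t}g^2\,du\le\tfrac{|e^{2a\Delta t}-1|\,|a|\,\Delta t^2}{24}$.

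Finally I would evaluate the geometric sum $\sum_{k=0}^{N_T-1}e^{2a(T-t_{k+1})}=\sum_{j=0}^{N_T-1}e^{2aj\Delta t}=\tfrac{e^{2aT}-1}{e^{2a\Delta t}-1}$, so the factor $e^{2a\Delta t}-1$ cancels and, using $a(e^{2aT}-1)>0$ for every real $a\neq0$,
\begin{equation*}
  \mathbb{E}|X(T)-\widetilde{X}(T)|^2\le r^2\,\frac{e^{2aT}-1}{e^{2a\Delta t}-1}\cdot\frac{|e^{2a\Delta t}-1|\,|a|\,\Delta t^2}{24}=\frac{r^2\,a(e^{2aT}-1)}{24}\,\Delta t^2\le\frac{r^2\,a(e^{2aT}-1)}{8}\,\Delta t^2=C_1^2\Delta t^2 .
\end{equation*}
Taking square roots gives $\|\widetilde{X}(T)-X(T)\|_2\le C_1\Delta t$ (the case $a=0$ being trivial, since the scheme is then exact). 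The main obstacle is precisely this local step: recognizing the mean-zero cancellation and then converting the closed form of $\int_0^{\Delta t}g^2$ into a clean bound with a constant compatible with $C_1$, uniformly in the sign of $a$; the remaining steps are routine bookkeeping.
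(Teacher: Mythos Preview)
Your proof is correct and follows the same overall strategy as the paper: write $X(T)-\widetilde{X}(T)$ as an It\^o integral with piecewise-deterministic integrand $g$, apply the It\^o isometry, bound the one-step integral $\int_0^{\Delta t} g(u)^2\,du$, and sum a geometric series. The paper proceeds recursively (bounding $b_{k+1}=\mathbb{E}|\widetilde X(t_{k+1})-X(t_{k+1})|^2$ in terms of $b_k$) and controls the local defect by the crude pointwise estimate $\max_{0\le u\le\Delta t}|g(u)|=\tfrac12|a|\Delta t+\mathcal O(\Delta t^2)$, which leaves an $\mathcal O(\Delta t^{3/2})$ tail in the final inequality. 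Your direct computation of $\int_0^{\Delta t} g^2$ and the $\tanh$ bound $1-\tfrac{2}{y}\tanh\tfrac y2\le y^2/12$ are sharper: they give a clean inequality $\|\widetilde X(T)-X(T)\|_2\le (C_1/\sqrt3)\Delta t\le C_1\Delta t$ with no remainder, valid for every sign of $a$. So the route is the same, but your execution is tighter and actually delivers the stated constant, whereas the paper's argument only yields $C_1\Delta t+\mathcal O(\Delta t^{3/2})$.
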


\begin{proof}
    Recall that the solution to the one-dimensional SDE \eqref{1d} is 
\begin{equation*}
    X(t)=e^{at}\left(x_0+r\int_0^te^{-as}dW_s\right),
\end{equation*}
then 
\begin{equation*}
    X(t_{k+1}) =e^{a \Delta t} X(t_k)+r\int_{t_k}^{t_{k+1}} e^{a(t_{k+1}-s)}dW_s.
\end{equation*}
The solution to the corresponding approximate equation \eqref{NumForm} is 
\begin{equation*}
    \widetilde{X}(t_{k+1})=e^{a \Delta t} \widetilde{X}(t_k)+(e^{a \Delta t}-1) \frac{r \Delta W_k}{a\Delta t}=e^{a \Delta t}\widetilde{X}(t_k)+r \int_{t_k}^{t_{k+1}} \frac{e^{a\Delta t}-1}{a \Delta t} dW_s.
\end{equation*}
We have 
\begin{align*}
    \mathbb{E}\big(\widetilde{X}(t_{k+1})-X(t_{k+1})\big)^2 & = e^{2a \Delta t} \mathbb{E}\big(\widetilde{X}(t_k)-X(t_k)\big)^2+r^2\mathbb{E}\left[\int_{t_k}^{t_{k+1}} e^{a(t_{k+1}-s)}-\frac{e^{a\Delta t}-1}{a \Delta t}dW_s\right]^2 \\
    & + 2r e^{a \Delta t}\mathbb{E}\left[\big(\widetilde{X}(t_{k})-X(t_{k})\big)\int_{t_k}^{t_{k+1}} e^{a(t_{k+1}-s)}-\frac{e^{a\Delta t}-1}{a \Delta t}dW_s\right]\\
    & = e^{2a \Delta t} \mathbb{E}\big(\widetilde{X}(t_k)-X(t_k)\big)^2+r^2\int_{t_k}^{t_{k+1}} \big(e^{a(t_{k+1}-s)}-\frac{e^{a\Delta t}-1}{a \Delta t}\big)^2ds.
\end{align*}
In the last equality, we used the It\^{o} Isometry and the independence between the stochastic integral and the difference $\widetilde{X}(t_{k})-X(t_{k})$. 
For $|a|\Delta t \ll 1$ and the fact that
\begin{equation*}
    \max_{0\leq s \leq \Delta t}\left|e^{a(\Delta t-s)}-\frac{e^{a\Delta t}-1}{a \Delta t}\right|=\max\left\{\left|e^{a \Delta t}-\frac{e^{a\Delta t}-1}{a \Delta t}\right|, \left|\frac{e^{a\Delta t}-1}{a \Delta t}-1\right|\right\}=\frac{1}{2}|a| \Delta t + \mathcal{O}(\Delta t^2),
\end{equation*}
we have 
\begin{align*}
    \mathbb{E}\big(\widetilde{X}(t_{k+1})-X(t_{k+1})\big)^2 & \leq e^{2a \Delta t} \mathbb{E}\big(\widetilde{X}(t_k)-X(t_k)\big)^2+r^2 \int_0^{\Delta t} \frac{1}{4}a^2\Delta t ^2 ds+\mathcal{O}(\Delta t^3)ds \\
    & = e^{2a \Delta t} \mathbb{E}\big(\widetilde{X}(t_k)-X(t_k)\big)^2+ \frac{1}{4}(ar)^2\Delta t ^3+\mathcal{O}(\Delta t^4).
\end{align*}
Let $b_k$ denote $\mathbb{E}\big(\widetilde{X}(t_k)-X(t_k)\big)^2$, and the above inequality gives 
\begin{equation*}
    b_{k+1}\leq e^{2a \Delta t} b_k+\frac{1}{4}(ar)^2\Delta t^3+\mathcal{O}(\Delta t^4).
\end{equation*}
Then we have 
\begin{equation*}
    b_n \leq  e^{ 2a n \Delta t} b_0+\sum_{k=0}^{n-1}e^{2a k\Delta t}\big(\frac{1}{4}(ar)^2\Delta t^3+\mathcal{O}(\Delta t^4)\big).
\end{equation*}
Given $b_0=0$, which means $X(0)=\widetilde{X}(0)$, we have
\begin{equation*}
    b_n \leq \frac{(e^{2an\Delta t}-1)}{4(e^{2a\Delta t}-1)}\big((ar)^2\Delta t^3+\mathcal{O}(\Delta t^4)\big).
\end{equation*}
Take $n=k$, $t_k=k \Delta t$ and $e^{2a\Delta t}-1=2a\Delta t +\mathcal{O}((a\Delta t)^2)$, 
\begin{equation*}
    b_{k}\leq \frac{r^2a(e^{2at_k}-1) \Delta t^2}{8}+\mathcal{O}(\Delta t^3).
\end{equation*}
Then
\begin{equation*}
    \max_{t_k}\|\widetilde{X}(t_k)-X(t_k)\|_2\leq \frac{|r|\sqrt{2a(e^{2aT}-1)} \Delta t}{4}+\mathcal{O}(\Delta t^{\frac{3}{2}}),
\end{equation*}
and 
\begin{equation*}
    \|\widetilde{X}(T)-X(T)\|_2 \leq \frac{|r|\sqrt{2a(e^{2aT}-1)} \Delta t}{4}+\mathcal{O}(\Delta t^{\frac{3}{2}}).
\end{equation*}
It means that the solution of the approximate equation \eqref{NumForm} converges to the solution of equation \eqref{NumSDE} with error rate $\mathcal{O}(\Delta t)$ under mean square norm.
\end{proof}

We will next show the convergence rate for the $d$-dimensional case.

Consider the $d$-dimensional SDE
\begin{equation}\label{dSDE}
    d X(t) =A X(t)\, dt+B\, dW_t= A X(t)\, dt + \sum_{l=1}^m B^{(l)} dW^{(l)}_t, \quad X_0=x_0,
\end{equation}
where $A$ is a $d \times d $ matrix, $\{B^{(l)}\}$ is a group of  $d \times 1$ vector and $\{W^{(l)}\}$ is a group of independent scalar standard Brownian motion. Here, we use the notation that $B=(B^{(1)},\cdots, B^{(m)})$ and $W_t=(W^{(1)},\cdots,W^{(m)})^\top$ and $m$ scales comparably to $d$.

\begin{proposition}\label{Addconvrate}
    For the $d$-dimensional SDE with additive Gaussian noise \eqref{dSDE}, the estimate 
    \begin{equation*}
        \|\widetilde{X}(T)-X(T)\|_2 \leq C_2 \Delta t,
    \end{equation*}
    holds for the solution $\widetilde{X}$ of the approximate equation \eqref{NumForm}, where $C_2=\sqrt{{m c_1(e^{2\|A\|T}-1)}/(2\|A\|})$ and $c_1=\max_{1\leq l \leq m}|AB^{(l)}|^2$.
\end{proposition}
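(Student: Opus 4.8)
The plan is to transcribe the scalar argument of Proposition~\ref{Addconvproof1d} into matrix form; the one genuinely new ingredient is the commutation $Ae^{As}=e^{As}A$, which lets one keep the local-error constant in terms of $|AB^{(l)}|$ rather than the cruder $\|A\|\,|B^{(l)}|$. First I would record the one-step Duhamel representations. The solution of \eqref{dSDE} satisfies
\[
X(t_{k+1}) = e^{A\Delta t}X(t_k) + \int_{t_k}^{t_{k+1}} e^{A(t_{k+1}-s)} B\,dW_s ,
\]
while the solution of the approximate ODE \eqref{NumForm} on $(t_k,t_{k+1}]$, obtained by variation of constants with the $t$-independent forcing $B\Delta W_k/\Delta t$, is
\[
\widetilde{X}(t_{k+1}) = e^{A\Delta t}\widetilde{X}(t_k) + \frac{1}{\Delta t}\Big(\int_0^{\Delta t} e^{Au}\,du\Big) B\,\Delta W_k .
\]
Since $\Delta W_k=\int_{t_k}^{t_{k+1}} dW_s$ and $\frac{1}{\Delta t}\int_0^{\Delta t}e^{Au}\,du$ is deterministic, subtracting gives, with $e_k:=\widetilde{X}(t_k)-X(t_k)$ and $e_0=0$ (since $\widetilde{X}(0)=X(0)=x_0$),
\[
e_{k+1} = e^{A\Delta t}e_k + \int_{t_k}^{t_{k+1}}\Big(\frac{1}{\Delta t}\int_0^{\Delta t} e^{Au}\,du - e^{A(t_{k+1}-s)}\Big)B\,dW_s .
\]

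Next I would pass to $b_k:=\mathbb{E}\|e_k\|^2$. Because $e_k$ is $\mathcal{F}_{t_k}$-measurable while the stochastic integral over $[t_k,t_{k+1}]$ has a deterministic integrand (hence is centered and independent of $\mathcal{F}_{t_k}$), the cross term vanishes, and the It\^o isometry, applied componentwise in the $m$ independent Brownian motions $W^{(1)},\dots,W^{(m)}$, gives
\[
b_{k+1} \le e^{2\|A\|\Delta t}b_k + \sum_{l=1}^{m}\int_{t_k}^{t_{k+1}} \Big\|\Big(\tfrac{1}{\Delta t}\int_0^{\Delta t} e^{Au}\,du - e^{A(t_{k+1}-s)}\Big)B^{(l)}\Big\|^2\,ds ,
\]
where we also used $\|e^{A\Delta t}e_k\|\le e^{\|A\|\Delta t}\|e_k\|$. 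So it remains to estimate the deterministic kernel $\frac{1}{\Delta t}\int_0^{\Delta t} e^{Au}\,du - e^{A\tau}$ applied to $B^{(l)}$, uniformly for $\tau=t_{k+1}-s\in[0,\Delta t]$.

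The crux is this kernel estimate. Writing $e^{Au}-e^{A\tau}=\int_\tau^u A e^{Av}\,dv=\int_\tau^u e^{Av}A\,dv$ and substituting, one gets
\[
\Big(\tfrac{1}{\Delta t}\int_0^{\Delta t} e^{Au}\,du - e^{A\tau}\Big)B^{(l)} = \frac{1}{\Delta t}\int_0^{\Delta t}\int_\tau^u e^{Av}\,(AB^{(l)})\,dv\,du ,
\]
so that $\big\|\big(\tfrac{1}{\Delta t}\int_0^{\Delta t} e^{Au}\,du - e^{A\tau}\big)B^{(l)}\big\|\le \Delta t\,e^{\|A\|\Delta t}\,|AB^{(l)}|\le \Delta t\,e^{\|A\|\Delta t}\sqrt{c_1}$ uniformly in $\tau\in[0,\Delta t]$; the commutation $Ae^{Av}=e^{Av}A$ is precisely what lets one pull $A$ onto $B^{(l)}$ and replace $\|A\|\,|B^{(l)}|$ by $|AB^{(l)}|$. (A slightly sharper expansion around $\tau=\Delta t/2$ recovers the extra factor $\tfrac12$ visible in the one-dimensional constant $C_1$, but it is not needed for the stated bound.) Feeding this into the recursion yields the local bound $b_{k+1}\le e^{2\|A\|\Delta t}b_k + m c_1 e^{2\|A\|\Delta t}\Delta t^{3}$.

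Finally, I would iterate this linear recursion from $b_0=0$, sum $\sum_{k=0}^{N_T-1}e^{2\|A\|k\Delta t}=(e^{2\|A\|T}-1)/(e^{2\|A\|\Delta t}-1)$ and use $e^{2\|A\|\Delta t}-1=2\|A\|\Delta t+\mathcal{O}(\Delta t^2)$ to obtain $b_{N_T}\le \frac{m c_1(e^{2\|A\|T}-1)}{2\|A\|}\Delta t^{2}+\mathcal{O}(\Delta t^{3})$; taking square roots gives $\|\widetilde{X}(T)-X(T)\|_2\le C_2\Delta t+\mathcal{O}(\Delta t^{3/2})$ with $C_2=\sqrt{m c_1(e^{2\|A\|T}-1)/(2\|A\|)}$ as claimed. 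I expect the kernel estimate of the previous paragraph to be the only non-routine step — in particular getting the constant through $|AB^{(l)}|$ rather than $\|A\|\,|B^{(l)}|$, uniformly in $\tau$; the Duhamel representations, the cancellation of the cross term, the It\^o isometry, and the discrete Gr\"onwall-type iteration are all direct matrix analogues of the scalar proof and should present no new difficulty.
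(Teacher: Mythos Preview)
Your proposal is correct and follows essentially the same route as the paper: both write the one-step Duhamel formulas for $X$ and $\widetilde{X}$, subtract, use independence of $e_k$ from the increment plus It\^o isometry to kill the cross term and get the recursion $b_{k+1}\le e^{2\|A\|\Delta t}b_k+\sum_l\int_0^{\Delta t}\|\cdot\|^2\,d\tau$, then iterate. The only cosmetic difference is the kernel estimate---the paper Taylor-expands $e^{A(\Delta t-s)}$ to pull out $AB^{(l)}$, whereas you use the integral identity $e^{Au}-e^{A\tau}=\int_\tau^u e^{Av}A\,dv$ together with $Ae^{Av}=e^{Av}A$; both devices land on the same constant $|AB^{(l)}|$ and the same final bound.
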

\begin{proof}
    According to \cite[Ch 4.8]{kloeden1992}, the solution to \eqref{dSDE} is 
\begin{equation*}
    X(t)=e^{At}x_0+\int_0^t e^{A(t-s)}B\,dW_s 
\end{equation*}
Then 
\begin{align*}
    X(t_{k+1}) & =e^{A(t_k+\Delta t)}x_0+\int_0^{t_k}e^{A(t_k+\Delta t - s)}BdW_s+\int_{t_k}^{t_{k+1}}e^{A(t_{k+1}-s )}B dW_s \\
    & = e^{A\Delta t}X(t_k)+ \int_{t_k}^{t_{k+1}}e^{A(t_{k+1}-s )}B dW_s= e^{A\Delta t}X(t_k)+ \sum_{l=1}^{m}\int_{t_k}^{t_{k+1}}e^{A(t_{k+1}-s )}B^{(l)} dW_s^{(l)}.
\end{align*}
For the corresponding approximate equation \eqref{NumForm}, by the Duhamel principle, we obtain
\begin{equation*}
    \widetilde{X}(t_{k+1})=e^{A \Delta t} \widetilde{X}(t_k)+\int_{t_k}^{t_{k+1}} e^{A(t_{k+1}-s)}\frac{B \Delta W_k}{\Delta t} ds =e^{A \Delta t} \widetilde{X}(t_k)+\sum_{l=1}^m\int_{t_k}^{t_{k+1}} e^{A(t_{k+1}-s)}\frac{B^{(l)} \Delta W_k^{(l)}}{\Delta t}ds.
\end{equation*}
Therefore,
\begin{align*}
    & \mathbb{E}\big\|\widetilde{X}(t_{k+1})-X(t_{k+1})\big\|^2 \\
    & = \mathbb{E}\big\|e^{A\Delta t}(\widetilde{X}(t_k)-X(t_k))\big\|^2+\sum_{l=1}^m \mathbb{E}\left\|\int_{t_k}^{t_{k+1}}\int_0^{\Delta t}\frac{e^{A(\Delta t-s)}B^{(l)}}{\Delta t}-\frac{e^{A(t_{k+1}-\tau)}B^{(l)}}{\Delta t}ds dW_\tau \right\|^2 \\
    & + 2\mathbb{E}\Big[\Big(e^{A\Delta t}\big(\widetilde{X}(t_k)-X(t_k)\big)\Big)^\dagger \sum_{l=1}^m \int_{t_k}^{t_{k+1}}\int_0^{\Delta t}\frac{e^{A(\Delta t-s)}B^{(l)}}{\Delta t}-\frac{e^{A(t_{k+1}-\tau)}B^{(l)}}{\Delta t}ds dW_\tau \Big]\\
    & \leq e^{2\|A\|\Delta t}\mathbb{E}\big\|\widetilde{X}(t_k)-X(t_k)\big\|^2+ \sum_{l=1}^m \int_0^{\Delta t}\left\|\int_0^{\Delta t}\frac{e^{A(\Delta t-s)}B^{(l)}}{\Delta t}-\frac{e^{A(\Delta t-\tau)}B^{(l)}}{\Delta t}ds \right\|^2d\tau.
\end{align*}
In the last equality, we used the It\^{o} Isometry and the independence between the stochastic integral and the difference $\widetilde{X}(t_{k})-X(t_{k})$. 
 By the definition of the matrix exponential, for $\tau \in [0,\Delta t]$ and $\|A\|\Delta t \ll 1 $, we have
\begin{align*}
    & \left\|\int_0^{\Delta t}\frac{e^{A(\Delta t-s)}B^{(l)}}{\Delta t}-\frac{e^{A(\Delta t-\tau)}B^{(l)}}{\Delta t}ds \right\|^2\\
    & = \frac{1}{\Delta t^2}\left\|\int_0^{\Delta t}\Big(\textrm{I}_d+A(\Delta t-s)+\frac{1}{2}A^2(\Delta t - s)^2+\mathcal{O}(\Delta t^3)\Big)B^{(l)} \right.\\
    & \left.-\Big(\textrm{I}_d+A(\Delta t-\tau)+\frac{1}{2}A^2(\Delta t - \tau)^2+\mathcal{O}(\Delta t^3)\Big)B^{(l)}  ds\right\|^2\\
    & =\frac{1}{\Delta t^2}\left\|\int_0^{\Delta t}\Big(A(\tau-s)+\frac{1}{2}A^2(\tau-s )(2 \Delta t - \tau -s)+\mathcal{O}(\Delta t^3)\Big)B^{(l)} ds\right\|^2 \\
    & = \frac{1}{\Delta t^2}\left\|AB^{(l)}\int_0^{\Delta t}(\tau-s)ds+A^2B^{(l)}\int_0^{\Delta t}(\tau-s )(2 \Delta t - \tau -s)ds+\mathcal{O}(\Delta t^4)\right\|^2 \\
    & \leq \frac{1}{2}\|AB^{(l)}\|^2\Delta t^2+\frac{2}{9}\|A^2B^{(l)}\|^2\Delta t^4+\mathcal{O}(\Delta t^6).
\end{align*}
Then, take $c_1=\max_{1\leq l \leq m}\frac{1}{2}\|AB^{(l)}\|^2$ and $c_2=\max_{1\leq l \leq m}\frac{2}{9}\|A^2B^{(l)}\|^2$, we have 
\begin{align*}
    \mathbb{E}\big\|\widetilde{X}(t_{k+1})-X(t_{k+1})\big\|^2 & \leq  e^{2\|A\|\Delta t}\mathbb{E}\big\|\widetilde{X}(t_k)-X(t_k)\big\|^2+m\big(c_1\Delta t^3+c_2\Delta t^5+\mathcal{O}(\Delta t^7)\big).
\end{align*}
Similarly as the one-dimensional case, take $b_k=\mathbb{E}\big\|\widetilde{X}(t_k)-X(t_k)\big\|^2$, 
\begin{align*}
    b_{k+1} & \leq e^{ 2\|A\|\Delta t}b_k+m c_1\Delta t^3+\mathcal{O}(\Delta t^5) \\
        & = e^{2 ({k+1})\|A\|\Delta t}b_0+m\sum_{j=0}^{k}e^{2j\|A\|\Delta t }\big(c_1\Delta t^3+\mathcal{O}(\Delta t^5)\big) \\
        & = \frac{mc_1(e^{2(k+1)\|A\|\Delta t}-1)}{e^{2\|A\|\Delta t}-1}\big(\Delta t^3+\mathcal{O}(\Delta t^5)\big).
\end{align*}
Take $t_k=k \Delta t$ and $e^{2\|A\|\Delta t}-1=2\|A\|\Delta t +\mathcal{O}((\|A\|\Delta t)^2)$, we have
\begin{equation*}
    b_{k}\leq  \frac{m c_1(e^{2\|A\|t_k}-1)}{2\|A\|} \Delta t^2+\mathcal{O}(\Delta t^4).
\end{equation*}
It means 
\begin{equation*}
    \max_{t_k}\|\widetilde{X}(t_k)-X(t_k)\|_2 \leq \sqrt{\frac{m c_1(e^{2\|A\|T}-1)}{2\|A\|}} \Delta t+\mathcal{O}(\Delta t^2),
\end{equation*}
and 
\begin{equation*}
     \|\widetilde{X}(T)-X(T)\|_2 \leq \sqrt{\frac{m c_1(e^{2\|A\|T}-1)}{2\|A\|}} \Delta t+\mathcal{O}(\Delta t^2).
\end{equation*}
In $d$-dimensions, the solution of the approximate equation \eqref{NumForm} converges to the solution of equation \eqref{NumSDE} with error rate $\mathcal{O}(\sqrt{m}\Delta t)$ under mean square norm.
\end{proof}

In the following, we will move on to show the convergence rate for SDEs with multiplicative Gaussian noise.

Recall the SDE with multiplicative Gaussian noise:
\begin{equation}\label{MultiProof}
    d X(t) = A X(t) dt +  \sum_{l=1}^m B^{(l)} X(t) d W^{(l)}_t, \quad X_0=x_0,
\end{equation}
and the approximate equation, 
\begin{align}\label{MultiApproProof}
    \left\{
        \begin{aligned}
            & \frac{d \widetilde{X}(t)}{dt}  = \widetilde{A}_k \widetilde{X}(t) , \quad \widetilde{X}(0)=x_0 \textrm{ and } t_k < t \leq t_{k+1}, \\
            & \widetilde{A}_k  =A-\frac{1}{2}\sum_{l=1}^m (B^{(l)})^2+\sum_{l=1}^m B^{(l)} \frac{\Delta W^{(l)}_k}{\Delta t}.
        \end{aligned}
    \right.
\end{align}
Here, $W_t=(W_t^{(1)},\cdots,W_t^{(m)})$ is a $m$-dimensional standard Brownian motion and $\Delta W_k^{(l)}=W_{t_{k+1}}^{(l)}-W_{t_{k}}^{(l)}$. $\{A,B^{(1)},\cdots,B^{(m)}\}$ are $d \times d$ matrices, $\Delta t=1/N_T$ and $t_k=kT/N_T$ for $k=0,1,\cdots,N_T$.
\begin{proposition}\label{Multiconvrate}
    For the $d$-dimensional SDE with multiplicative Gaussian noise\eqref{MultiProof}, the estimate
    \begin{equation*}
        \|\widetilde{X}(T)-X(T)\|_2 \lesssim  \Delta t,
    \end{equation*}
    holds for the solution $\widetilde{X}$ of the approximate equation\eqref{MultiApproProof}. 
\end{proposition}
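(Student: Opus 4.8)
The plan is to follow the template of the proof of Proposition~\ref{Addconvrate}, but to replace the Duhamel representation by the (random) flow maps of the two systems, since the noise is now multiplicative. Let $\Psi_k$ be the fundamental matrix of $d\Psi = A\Psi\,dt + \sum_{l=1}^{m}B^{(l)}\Psi\,dW^{(l)}_t$ on $[t_k,t_{k+1}]$ with $\Psi(t_k)=I$, so $X(t_{k+1})=\Psi_k X(t_k)$, and set $\widetilde\Psi_k := e^{\widetilde A_k\Delta t}$, so $\widetilde X(t_{k+1})=\widetilde\Psi_k\widetilde X(t_k)$. Both $\Psi_k$ and $\widetilde\Psi_k$ depend only on the Brownian increments over $[t_k,t_{k+1}]$, hence are independent of $\mathcal F_{t_k}$. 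With $e_k := X(t_k)-\widetilde X(t_k)$ and $\delta_k := \Psi_k-\widetilde\Psi_k$ this gives the one-step identity $e_{k+1} = \widetilde\Psi_k e_k + \delta_k X(t_k)$, which I would iterate after establishing one-step estimates on $\delta_k$ and on the mean-square growth of $\widetilde\Psi_k$.

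First I would compare the It\^o--Taylor expansion of $\Psi_k$ with the Taylor expansion $\widetilde\Psi_k = I + \widetilde A_k\Delta t + \tfrac12\widetilde A_k^2\Delta t^2 + \cdots$ of the matrix exponential. Writing $\Delta W^{(l)}_k = \sqrt{\Delta t}\,\xi^{(l)}_k$, both expansions agree through $I + A\Delta t + \sum_l B^{(l)}\Delta W^{(l)}_k$; at the next order the exponential produces $\tfrac12\sum_{l,j}B^{(l)}B^{(j)}\Delta W^{(l)}_k\Delta W^{(j)}_k - \tfrac12\sum_l (B^{(l)})^2\Delta t$, i.e. the Milstein increment with $\tfrac12\Delta W^{(l)}_k\Delta W^{(j)}_k$ standing in for the iterated It\^o integral $\int_{t_k}^{t_{k+1}}\!\int_{t_k}^{s}dW^{(j)}_r\,dW^{(l)}_s$, the term $-\tfrac12\sum(B^{(l)})^2\Delta t$ supplying exactly the It\^o correction for the diagonal $l=j$ contributions (this is the content of the remark preceding the statement). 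From here one reads off the two local estimates the argument needs: $\|\mathbb E[\delta_k]\| = \mathcal O(\Delta t^2)$, because the terms odd in $\Delta W$ and the mean-zero iterated integrals drop out, and $\bigl(\mathbb E\|\delta_k\|^2\bigr)^{1/2} = \mathcal O(\Delta t^{3/2})$. I would also record that $\mathbb E[\widetilde\Psi_k^{\dagger}\widetilde\Psi_k] = I + (A+A^{\dagger})\Delta t + \sum_l (B^{(l)})^{\dagger}B^{(l)}\Delta t + \mathcal O(\Delta t^2)$ — the $-\tfrac12\sum(B^{(l)})^2$ corrections cancelling exactly against the $\tfrac12(\Delta W)^2$ terms, so that $\|\mathbb E[\widetilde\Psi_k^{\dagger}\widetilde\Psi_k]\|_{\mathrm{op}} \le 1 + C\Delta t$ (this is just the discrete analogue of the mean-square growth of $X$ itself) — and that $\mathbb E[\widetilde\Psi_k^{\dagger}\Psi_k]$ admits the same expansion up to $\mathcal O(\Delta t^2)$, whence $\|\mathbb E[\widetilde\Psi_k^{\dagger}\delta_k]\|_{\mathrm{op}} = \mathcal O(\Delta t^2)$.

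With these in hand, set $b_k := \mathbb E\|e_k\|^2$. Squaring the one-step identity and taking expectations, $b_{k+1} = \mathbb E\|\widetilde\Psi_k e_k\|^2 + 2\,\mathbb E\langle\widetilde\Psi_k e_k,\delta_k X(t_k)\rangle + \mathbb E\|\delta_k X(t_k)\|^2$; conditioning on $\mathcal F_{t_k}$ and using the independence above, the first term is $\le (1+C\Delta t)b_k$, the cross term equals $\mathbb E\bigl[e_k^{\dagger}\,\mathbb E[\widetilde\Psi_k^{\dagger}\delta_k]\,X(t_k)\bigr] \lesssim \Delta t^2\sqrt{b_k}\,\sqrt{\mathbb E\|X(t_k)\|^2} \lesssim \Delta t\, b_k + \Delta t^3$ by Young's inequality, and the last term is $\le \mathbb E\|\delta_k\|^2_{\mathrm{op}}\,\mathbb E\|X(t_k)\|^2 = \mathcal O(\Delta t^3)$. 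Here I use the a priori bound $\sup_{0\le k\le N_T}\mathbb E\|X(t_k)\|^2 \le C_T$, which is itself a Gronwall estimate for \eqref{MultiProof}. This yields $b_{k+1} \le (1+C\Delta t)b_k + C'\Delta t^3$ with $b_0 = 0$; the discrete Gronwall lemma then gives $b_{N_T} \lesssim \Delta t^3\cdot N_T\,e^{CT} = \mathcal O(\Delta t^2)$, so $\|\widetilde X(T)-X(T)\|_2 = \sqrt{b_{N_T}} \lesssim \Delta t$.

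The step I expect to be the real obstacle is the one-step bound $\bigl(\mathbb E\|\delta_k\|^2\bigr)^{1/2} = \mathcal O(\Delta t^{3/2})$. For general matrices the mismatch between $\tfrac12 B^{(l)}B^{(j)}\Delta W^{(l)}_k\Delta W^{(j)}_k$ and $B^{(l)}B^{(j)}\int\!\int dW^{(j)}\,dW^{(l)}$ leaves an antisymmetric L\'evy-area contribution proportional to $[B^{(l)},B^{(j)}]$ that is only $\mathcal O(\Delta t)$ in $L^2$ (with conditional mean zero), which would drag the global rate down to $\mathcal O(\sqrt{\Delta t})$. Controlling $\delta_k$ at order $\Delta t^{3/2}$ therefore requires these commutator terms to vanish — the commutativity hypothesis on $\{B^{(l)}\}$ isolated in the remark above, under which the scheme is exactly the Milstein scheme (and automatic when $m=1$) — and I would carry out the proof under that hypothesis. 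The remaining ingredients — the It\^o--Taylor bookkeeping, the cancellations giving the $1+C\Delta t$ factor, the a priori moment bound, and the discrete Gronwall step — are routine and parallel the additive-noise case.
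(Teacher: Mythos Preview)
Your argument is correct under the commutativity hypothesis you isolate, but both the route and the scope differ from the paper's. The paper does not compare $\widetilde X$ directly to $X$; it inserts the Milstein approximation $\widehat X$ as an intermediary, proves $\|\widetilde X(T)-\widehat X(T)\|_2=\mathcal O(\Delta t)$ by expanding $e^{\widetilde A_k\Delta t}$ against the Milstein one-step map, and then invokes the textbook order-$1$ strong convergence of Milstein via the triangle inequality. Your flow-map comparison is more self-contained; theirs offloads half the work to a known result. More substantively, the paper does \emph{not} assume the $B^{(l)}$ commute: the L\'evy-area mismatch you identify survives in their $\widetilde X$-vs-$\widehat X$ comparison as the explicit term $\tfrac12\sum_{l\neq j}B^{(l)}B^{(j)}\widetilde X(t_k)\bigl(I_k(j,l)-\Delta W_k^{(l)}\Delta W_k^{(j)}\bigr)$, whose square contributes a $C_4\,\Delta t^2$ forcing (not $\Delta t^3$) to the error recursion --- exactly the obstruction you flag. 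Rather than assuming it away, the paper writes the full recursion $\epsilon_{k+1}^2\le\epsilon_k^2+C_1\epsilon_k^2\Delta t+C_2\epsilon_k^2\Delta t^2+C_3\epsilon_k\Delta t^2+C_4\Delta t^2+C_5\Delta t^3$ and closes it not by Gronwall but by an ad hoc induction on an auxiliary majorising sequence $\eta_k$, arguing that once $\eta_k^2$ exceeds a threshold $c\,\Delta t^2$ the $C_4\Delta t^2$ term can be absorbed multiplicatively. Your discrete-Gronwall step, as you correctly observe, cannot digest a $C_4\Delta t^2$ forcing and still yield $b_{N_T}=\mathcal O(\Delta t^2)$; so in your direct framework commutativity is genuinely the price of the rate, whereas the paper's detour through Milstein together with its bespoke induction is the device by which it claims the general non-commuting case.
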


\begin{proof}
    We show this estimate by comparing the solution $\widetilde{X}$ to the Milstein approximation $\widehat{X}$:
    \begin{equation}\label{MLscheme}
        \widehat{X}(t_{k+1})=\widehat{X}(t_k)+A \widehat{X}(t_k) \Delta t+\sum_{l=1}^m B^{(l)} \widehat{X}(t_k) \Delta W_k^{(l)}+ \sum_{l,j =1}^m B^{(l)}B^{(j)}\widehat{X}(t_k)I_k(j,l), \quad \widehat{X}_{t_0}=x_0.    
    \end{equation}
    Here $I_k(j,l)$ represents the Ito's integral
    \begin{equation*}
        I_k(j,l) = \int_{t_k}^{t_{k+1}}(W_{s}^{(j)}-W_{t_k}^{(j)})dW_s^{(l)}.
    \end{equation*}
    Using Ito's formula, it is clear that
    \begin{equation*}
        I_k(l,j)+I_k(j,l)=\Delta W_k^{(l)}\Delta W_k^{(j)} \quad I_k(l,l)= \frac{1}{2}\bigl((\Delta W_k^{(l)})^2-\Delta t\bigr).
    \end{equation*}
    In the $(k+1)$-th iteration of \eqref{MultiApproProof}, that is obtaining $\widetilde{X}(t_{k+1})$ from $\widetilde{X}(t_k)$, we know 
    \begin{equation*}
        \widetilde{X}(t_{k+1})=\exp(\widetilde{A}_k \Delta t)\widetilde{X}(t_k).
    \end{equation*}
        By the definition of the matrix exponential, we have
    \begin{equation*}
        \exp(\widetilde{A}_k \Delta t)=I+\widetilde{A}_k \Delta t+\frac{1}{2}\sum_{l,j=1}^m B^{(l)}B^{(j)}\Delta W^{(l)}_k\Delta W^{(j)}_k+\mathcal{O}(\Delta t \Delta W).
    \end{equation*}
    Comparing to the Milstein approximation $\widehat{X}$ in \eqref{MLscheme}, we obtain
    \begin{align}\label{MLCompare}
        \widehat{X}(t_{k+1})- & \widetilde{X}(t_{k+1}) =\widehat{X}(t_k)-\widetilde{X}(t_k)+A\big(\widehat{X}(t_k)-\widetilde{X}(t_k)\big)\Delta t +\sum_{l=1}^m B^{(l)}\big(\widehat{X}(t_k)-\widetilde{X}(t_k)\big)\Delta W^{(l)}_k\\
        & +\frac{1}{2}\sum_{l=1}^m(B^{(l)})^2(\widehat{X}(t_k)-\widetilde{X}(t_k))\big((\Delta W_k^{(l)})^2-\Delta t\big)+\frac{1}{2}\sum_{\substack{l,j=1 \\l\neq j}}^m B^{(l)}B^{(j)}(\widehat{X}(t_k)-\widetilde{X}(t_k))I_k(j,l)\nonumber \\ 
        & +\frac{1}{2}\sum_{\substack{l,j=1 \\l\neq j}}^m B^{(l)}B^{(j)}\widetilde{X}(t_k)(I_k(j,l)-\Delta W_k^{(l)}\Delta W_k^{(j)}) +\mathcal{O}(\Delta t \Delta W). \nonumber 
    \end{align}
It is clear that 
\begin{equation*}
    \left\|A\big(\widehat{X}(t_k)-\widetilde{X}(t_k)\big)\right\|\leq \|A\|\left\|\widehat{X}(t_k)-\widetilde{X}(t_k)\right\|, \quad \left\|B^{(l)}\big(\widehat{X}(t_k)-\widetilde{X}(t_k)\big)\right\|\leq \|B^{(l)}\|\left\|\widehat{X}(t_k)-\widetilde{X}(t_k)\right\|.
\end{equation*}
We will next show that $\mathbb{E}\|\widetilde{X}(t_{k})\|^2$ and $\mathbb{E}\|\widetilde{X}(t_{k})\|$ is bounded for all $k$. The boundedness of $\mathbb{E}\|\widetilde{X}(t_{k})\|$ follows from $\mathbb{E}\|\widetilde{X}(t_{k})\|^2$ by H\"older's inequalities. A direct computation gives
\begin{align*}
    \mathbb{E}\|\widetilde{X}(t_k)\|^2 & =\mathbb{E}\|\exp(\widetilde{A}_{k-1}\Delta t)\exp(\widetilde{A}_{k-2}\Delta t)\cdots \exp(\widetilde{A}_0\Delta t)\widetilde{X}(0)\|^2 \\
    & \leq \mathbb{E}\left(\exp(2\|\widetilde{A}_{k-1}\|\Delta t) \exp(2\|\widetilde{A}_{k-2}\|\Delta t)\cdots \exp(2\|\widetilde{A}_{0}\|\Delta t)\|\widetilde{X}(0)\|^2\right) \\
    & = \mathbb{E}\Bigl(\|\widetilde{X}(0)\|^2\exp(2\sum_{\tau=0}^{k-1}\|\widetilde{A}_{\tau}\|\Delta t)\Bigr)
\end{align*}
By the definition of $\widetilde{A}_k$ and the fact that $\Delta W_k^{(l)}$ are independent with each other, we have for every $k$,
\begin{equation}
    \mathbb{E}\|\widetilde{X}(t_k)\|^2  \leq \|x_0\|^2\exp\Bigr[2(\|A\|+\frac{1}{2}\sum_{l=1}^m\|B^{(l)}\|^2)T\Bigl]\prod_{\tau=0}^{N_T-1}\prod_{l=1}^m\mathbb{E}\Bigr[\exp(2\|B^{(l)}\||\Delta W_k^{(l)}|)\Bigl]<\infty.
\end{equation}
Squaring both sides and taking expectation of \eqref{MLCompare} gives
\begin{align*}
    & \mathbb{E}\big\|\widehat{X}(t_{k+1}) -\widetilde{X}(t_{k+1})\big\|^2 = \mathbb{E}\big[(\widehat{X}(t_{k+1}) -\widetilde{X}(t_{k+1}))^\dagger(\widehat{X}(t_{k+1}) -\widetilde{X}(t_{k+1}))\big]\\
    & =\mathbb{E}\big\|\widehat{X}(t_k)-\widetilde{X}(t_k)\big\|^2+\mathbb{E}\Big\|A\big(\widehat{X}(t_k)-\widetilde{X}(t_k)\big)\Big\|^2\Delta t^2 +\sum_{l=1}^m \mathbb{E}\Big\|B^{(l)}\big(\widehat{X}(t_k)-\widetilde{X}(t_k)\big)\Big\|^2\Delta t \\
    & +\frac{1}{4}\mathbb{E}\Big\|\sum_{l=1}^m(B^{(l)})^2\big(\widehat{X}(t_k)-\widetilde{X}(t_k)\big)\big((\Delta W_k^{(l)})^2-\Delta t\big)\Big\|^2 +\frac{1}{4}\mathbb{E}\Big\| \sum_{\substack{l,j=1 \\l\neq j}}^m B^{(l)}B^{(j)}\big(\widehat{X}(t_k)-\widetilde{X}(t_k)\big)I_k(j,l)\Big\|^2\\
    & +\frac{1}{4}\mathbb{E}\Big\| \sum_{\substack{l,j=1 \\l\neq j}}^m B^{(l)}B^{(j)}\widetilde{X}(t_k)(I_k(j,l)-\Delta W_k^{(l)}\Delta W_k^{(j)})\Big\|^2+\mathbb{E}\Big(\big(\widehat{X}(t_{k})-\widetilde{X}(t_{k}\big)^\dagger A\big(\widehat{X}(t_k)-\widetilde{X}(t_k)\big)\Delta t\Big) \\
    & -\frac{1}{4}\mathbb{E}\Big(\sum_{\substack{l,j=1 \\l\neq j}}^m\big(R_k^{(l,j)}+(R_k^{(l,j)})^\dagger\big)(I_k(j,l))^2\Big)+\mathcal{O}(\Delta t^3). 
\end{align*}
Here, we denote
\begin{equation*}
    R_k^{(l,j)} \triangleq \big(\widehat{X}(t_{k})-\widetilde{X}(t_{k})\big)^\dagger\big(B^{(j)}B^{(l)}\big)^\dagger B^{(l)}B^{(j)} \widetilde{X}(t_{k}).
\end{equation*}
and use the fact that 
\begin{equation*}
    \mathbb{E}(I_k(j_1,l_1)I_k(j_2,l_2))=\frac{1}{2}\delta_{j_1 j_2}\delta_{l_1 l_2}(\Delta t)^2, \quad \mathbb{E}(I_k(j_1,l_1)\Delta W_k^{(l_2)})=0.
\end{equation*}
By the fact that $\widehat{X}(t_{k}) -\widetilde{X}(t_{k})$ is independent of $\Delta W_k^{(l)}$ and $I_k(j,l)$, we obtain 
\begin{align*}
    \mathbb{E}\Big(\sum_{l=1}^m(B^{(l)})^2\big(\widehat{X}(t_k)-\widetilde{X}(t_k)\big)\big((\Delta W_k^{(l)})^2-\Delta t\big)\Big)^2 & = \sum_{l=1}^m\mathbb{E}\Big((B^{(l)})^2\big(\widehat{X}(t_k)-\widetilde{X}(t_k)\big)\Big)^2\mathbb{E}\Big((\Delta W_k^{(l)})^2-\Delta t\Big)^2 \\
    & \leq \tilde{C}_1\mathbb{E}\big\|\widehat{X}(t_k)-\widetilde{X}(t_k)\big\|^2 \Delta t^2,
\end{align*}
and 
\begin{align*}
     \mathbb{E}\Big( \sum_{\substack{l,j=1 \\l\neq j}}^m B^{(l)}B^{(j)}\big(\widehat{X}(t_k)-\widetilde{X}(t_k)\big)I_k(j,l)\Big)^2 & = \sum_{\substack{l,j=1 \\l\neq j}}^m \mathbb{E}\big\|B^{(l)}B^{(j)}\big(\widehat{X}(t_k)-\widetilde{X}(t_k)\big)\big\|^2\mathbb{E}(I_k(j,l))^2 \\
     & \leq \tilde{C}_2 \mathbb{E}\big\|\widehat{X}(t_k)-\widetilde{X}(t_k)\big\|^2 \Delta t^2.
\end{align*}

\begin{align*}
    \left|\frac{1}{4}\mathbb{E}\Big(\sum_{\substack{l,j=1 \\l\neq j}}^m\big(R_k^{(l,j)}+(R_k^{(l,j)})^\dagger\big)(I_k(j,l))^2\Big)\right| & \leq \frac{1}{4} \sum_{\substack{l,j=1 \\l\neq j}}^m\mathbb{E}\big|R_k^{(l,j)}+(R_k^{(l,j)})^\dagger\big|\mathbb{E}(I_k(j,l))^2 \\
    & \leq \tilde{C}_3 \sqrt{\mathbb{E}\big\|\widehat{X}(t_k)-\widetilde{X}(t_k)\big\|^2}\Delta t^2.
\end{align*}
Let $\epsilon_k=\sqrt{\mathbb{E}\big\|\widehat{X}(t_k) -\widetilde{X}(t_k)\big\|^2} $, then we know 
\begin{align*}
    \epsilon_{k+1}^2 \leq \epsilon_k^2+C_1 \epsilon_k^2 \Delta t +C_2 \epsilon_k^2 \Delta t^2+C_3 \epsilon_k \Delta t^2 + C_4 \Delta t^2 + C_5 \Delta t^3, \quad \epsilon_{0}=0.
\end{align*}
 Here $\tilde{C}_1, \tilde{C}_2,\tilde{C}_3, C_1,C_2,C_3,C_4$ and $C_5$ are positive constants independent of  $\Delta t$ and $k$. 
 
 We will prove by induction that $\epsilon_k^2 = \mathcal{O}(\Delta t^2)$. Consider an auxiliary equation
\begin{equation}
    \eta_{k+1}^2 = \eta_k^2+C_1 \eta_k^2 \Delta t +C_2 \eta_k^2 \Delta t^2+C_3 \eta_k \Delta t^2 + C_4 \Delta t^2 + C_5 \Delta t^3, \quad \quad \eta_{0}=0.
\end{equation}
It is clear that $\epsilon_k \leq \eta_k$ and $\eta^2_k=\mathcal{O}(\Delta t^2)$ implies $\epsilon^2_k=\mathcal{O}(\Delta t^2)$. For every integer $1 \leq m \leq N_T$, assume for $k=0,1,\cdots, m-1$, $\eta^2_k=\mathcal{O}(\Delta t^2)$, we will prove $\eta^2_m=\mathcal{O}(\Delta t^2)$. As $\eta^2_k$ is an increasing sequence, there exits a constant $c>0$ and $k_0 < m-1$ that $\eta^2_{k_0}\leq c \Delta t^2$ and $\eta^2_{k_0+1}\geq c \Delta t^2$. For $k_0 < k \leq m-1$, 
\begin{equation*}
    \eta^2_{k+1}\leq \eta^2_k\left(\frac{C_4}{c}+1 + (C_1+\frac{C_3}{\sqrt{c}}+\frac{C_5}{c})\Delta t + C_2 \Delta t^2\right).
\end{equation*}
Then, by the fact that  $\eta^2_{k_0+1}=\mathcal{O}(\Delta t^2)$, we know
\begin{equation*}
    \eta^2_m \leq \eta^2_{k_0+1}\left(\frac{C_4}{c}+1 + (C_1+\frac{C_3}{\sqrt{c}}+\frac{C_5}{c})\Delta t + C_2 \Delta t^2\right)^m \leq C \Delta t^2.
\end{equation*}
Here, constant $C$ are irrelevant with $m$ and $\Delta t$. 
We have proven that
\begin{equation*}
    \mathbb{E}\big(\widehat{X}(T) -\widetilde{X}(T)\big)^2 = \mathcal{O}(\Delta t^2).
\end{equation*}

With the order $1$-strong convergence of the Milstein method under the mean square norm (see, for example, \cite[Theorem 10.3.5]{kloeden1992}), we conclude the approximate equation \eqref{MultiApproProof} strongly converges to the solution of the SDE \eqref{MultiProof} with at least order $1$.
\end{proof}

\section{Summary}\label{summary}

We proposed quantum algorithms for linear stochastic differential equations (SDEs) with both Gaussian noise and $\alpha$-stable L\'evy noise. The algorithms are based on solving approximate equations to the corresponding SDEs by the Schr\"odingerisation method. Compared to the Milstein scheme, the gates complexity of our algorithms exhibits a polylogarithmic dependence on both the dimensions and the number of samples, which demonstrates the quantum advantage over their classical counterparts in scenarios involving high dimensions and large sample sizes. We also obtained the convergence rate of the algorithms for SDEs with Gaussian noise in the mean square norm. 

Our algorithms were applied to several examples, including Ornstein-Uhlenbeck processes, geometric Brownian motions and L\'evy flights. It can be extended to general linear SDEs and reducible SDEs. Similarly as most current stage quantum algorithms, our approach can only solve linear SDEs. In practice, we also need to carefully choose the recovery region to avoid introducing additional error. Further development may be addressed to nonlinear problems. Since the gate complexity of our approach exhibits only a polylogarithmic dependence on dimension, it may also serve as an efficient approximate method for computing solutions to stochastic partial differential equations. It is also well-positioned for applications in areas such as Monte Carlo simulations and backward SDEs.

\bigskip

{\bf Acknowledgements}
SJ and NL  were  supported by NSFC grant No. 12341104, the Shanghai Jiao Tong University 2030 Initiative and the Fundamental Research
Funds for the Central Universities. SJ was also partially supported by the NSFC grants No. 12031013.  NL acknowledges funding from the Science and Technology Program of Shanghai, China (21JC1402900). NL is also supported by NSFC grants No.12471411, the Shanghai Jiao Tong University 2030 Initiative, and the Fundamental Research Funds for the Central Universities.
\bigskip

{\bf Data Availability } 
Data availability is not applicable to this article as no new data were generated or analyzed in this research.

\section*{Declarations}

{\bf Conflict of interest} No potential conflict of interest was declared by the authors.

\bibliographystyle{plain}
\bibliography{references}

\end{document}